\newcommand{\ubar}[1]{\underaccent{\bar}{#1}}
\newtheorem{theorem}{Theorem}[section]
\newtheorem{proposition}[theorem]{Proposition}
\newtheorem{remark}[theorem]{Remark}
\numberwithin{equation}{section}
\begin{document}

\title{Pricing and Hedging Performance on Pegged FX Markets Based on a Regime Switching Model}

\date{\today}

\author{Samuel Drapeau}
\thanks{National Science Foundation of China, Grants Numbers: 11971310 and 11671257; Grant ``Assessment of Risk and Uncertainty in Finance'' number AF0710020 from Shanghai Jiao Tong University; are gratefully acknowledged.
\\
We also thanks Tan Wang and Tao Wang for fruitful discussions, insights for this problem and help with the data sets.
}
\address{School of Mathematical Sciences \& Shanghai Advanced Institute for Finance (CAFR)\newline Shanghai Jiao Tong University, Shanghai, China}
\email{sdrapeau@saif.sjtu.edu.cn}
\urladdr{http://www.samuel-drapeau.info}

\author{Yunbo Zhang}
\address{School of Mathematical Sciences\newline Shanghai Jiao Tong University, Shanghai, China}
\email{zybsjtu@sjtu.edu.cn}

\begin{abstract}
    This paper investigates the hedging performance of pegged foreign exchange market in a regime switching (RS) model introduced in \citet{drapeau2019}.
    We compare two prices, an exact solution and first order approximation and provide the bounds for the error.
    We provide exact RS delta, approximated RS delta as well as mean variance hedging strategies for this specific model and compare their performance.
    To improve the efficiency of the pricing and calibration procedure, the Fourier approach of this regime-switching model is developed in our work.
    It turns out that: 1 -- the calibration of the volatility surface with this regime switching model outperforms on real data the classical SABR model; 2 -- the Fourier approach is significantly faster than the direct approach; 3 -- in terms of hedging, the approximated RS delta hedge is a viable alternative to the exact RS delta hedge while significantly faster.\\
    \newline
    {Keywords:} Pegged FX Markets; HKDUSD; Regime Switching; Mean-Variance Hedging; Fourier Approach.
\end{abstract}

\maketitle

\section{Introduction}\label{sec:intro}
In pegged markets, unlike free floating ones, the rate of the currency pair is fixed to a given value.
In such a situation, there is no rational need for any option market.
In reality, most modern FX pegged markets are slightly more flexible in the sense that the rate is allowed to fluctuate within a narrow band around a target rate.\footnote{The target rate maybe crawling at a lower frequency too.}
A parade example is the situation of HKDUSD with a target rate of 7.80 HKD to one USD but free to move between 7.75 and 7.85 HKD to one USD, see Figure \ref{fig:hkdeur}.
\begin{figure}[H]
    \centering
    \begin{minipage}{0.495\textwidth}
        \includegraphics[width=\textwidth]{./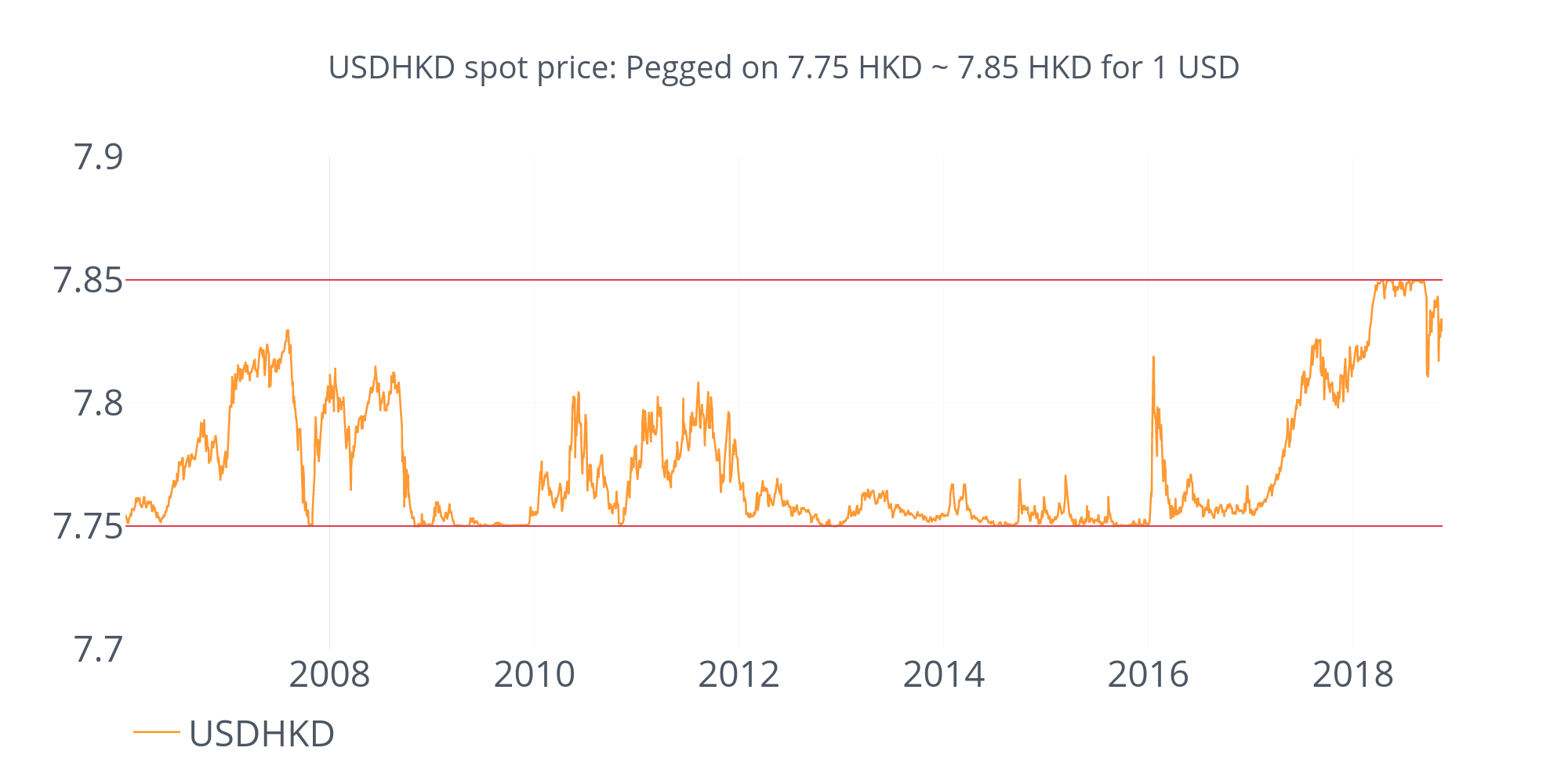}
    \end{minipage}
    \begin{minipage}{0.495\textwidth}
        \includegraphics[width=\textwidth]{./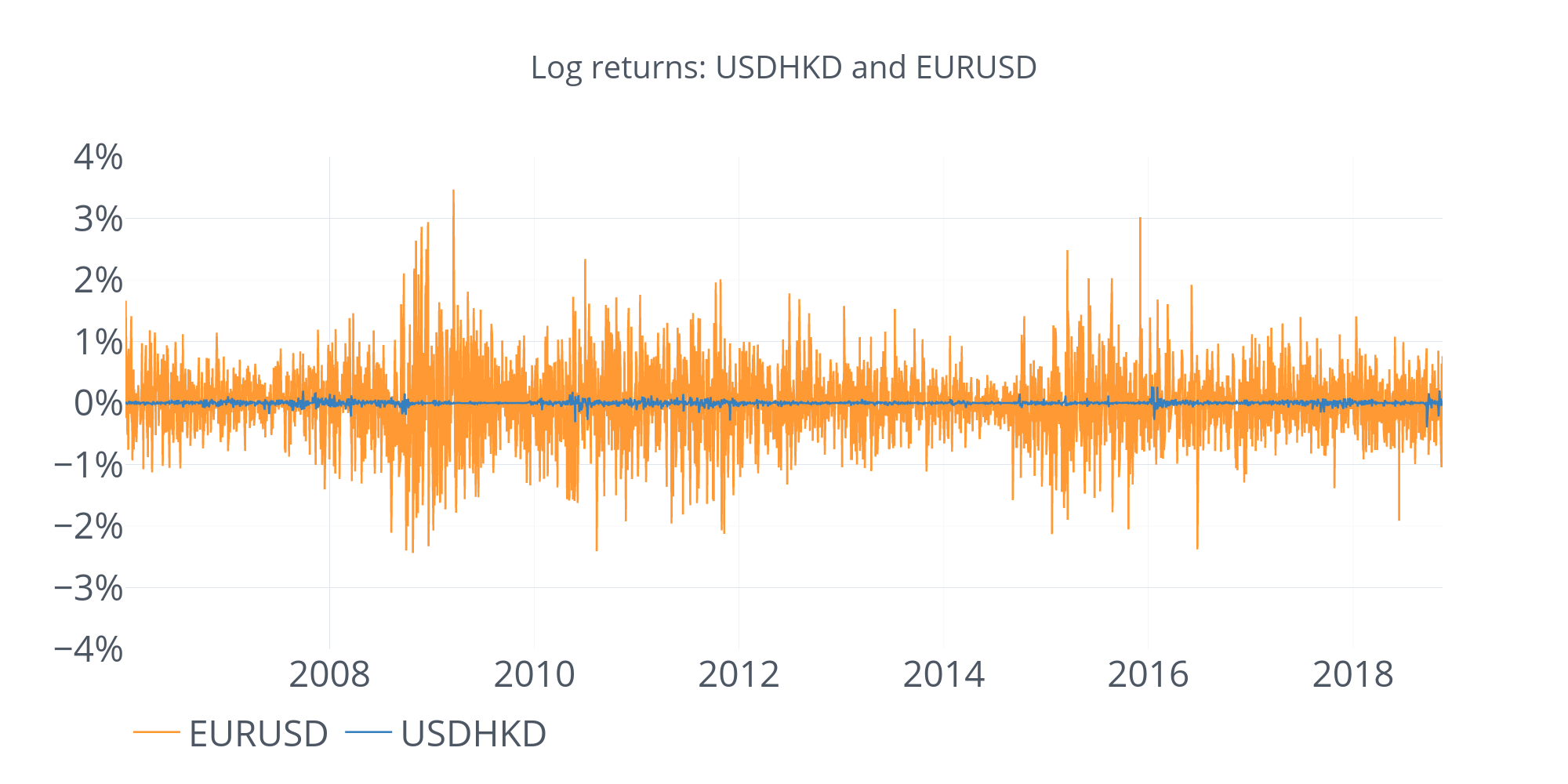}
    \end{minipage}
    \caption{Hong Kong dollar fluctuates between 7.75 and 7.85 HKD to one USD.
    Comparison between the returns on HKDUSD and USDEUR.}
    \label{fig:hkdeur}
\end{figure}
Even in recent time and around the world, pegged FX markets are or have been present: The European Monetary System, Chinese RMB, Thai Bath, Czech Koruna, Hong Kong Dollar, etc.
Even though, most of these markets allow for some fluctuations, the volatility is extremely low as Figure \ref{fig:hkdeur} shows by superposing the volatility of rate returns for HKDUSD in comparison to the free floating EURUSD.
It is therefore counter-intuitive why such markets do have active options trading with puzzling prices.
Indeed, according to the quotations of these markets, many strikes are significantly out of the range with non zero values, see Figure \ref{fig:quots_out_range}.
\begin{figure}[H]
	\centering
	\includegraphics[width=0.7\textwidth]{./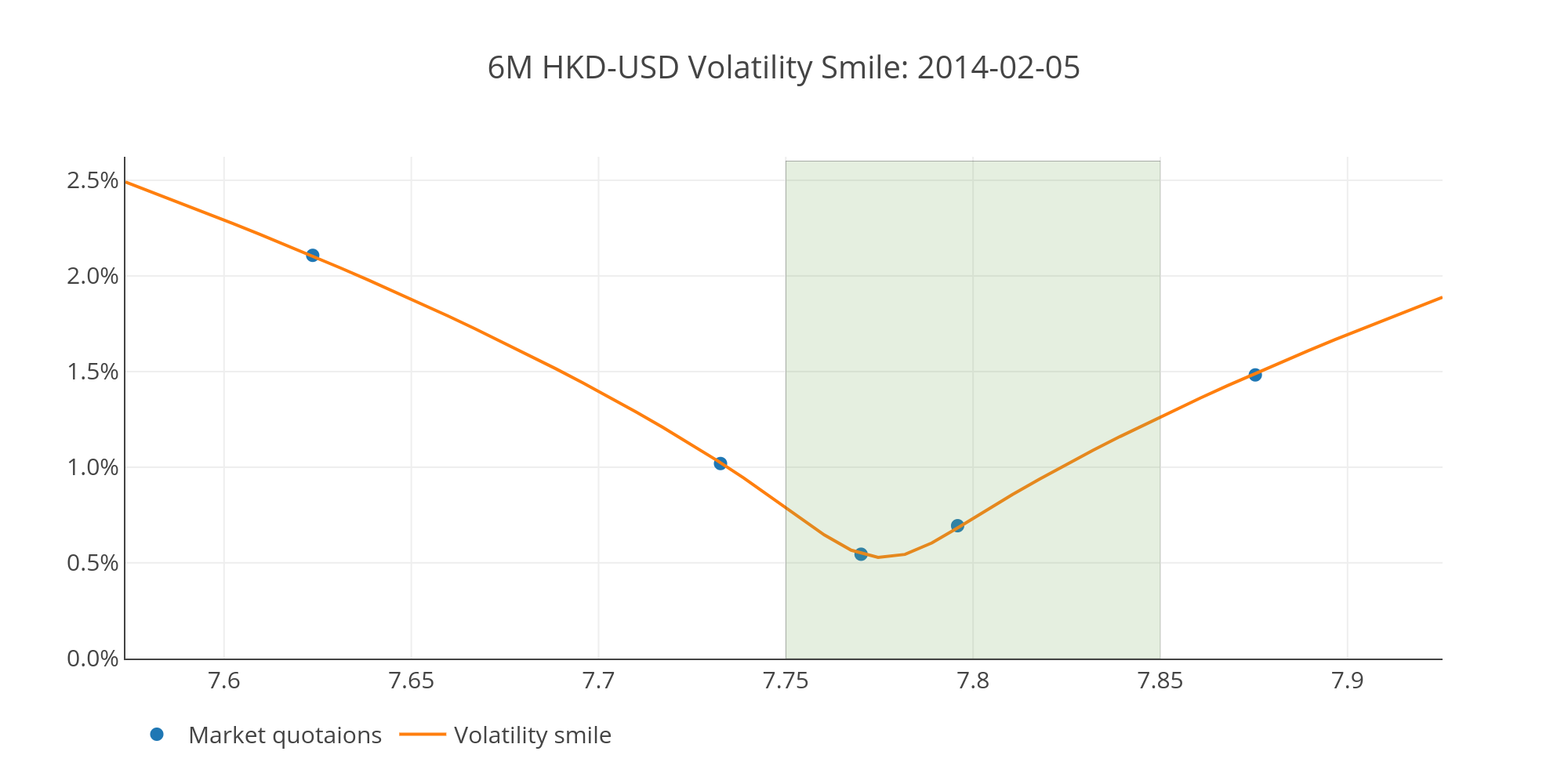}
	\caption{As of 18 May 2005, the Hong Kong Monetary Authority indicated a lower and upper guaranteed limits of 7.75 HKD and 7.85 HKD to one USD, respectively.
	The strikes for $10\%$, $25\%$ delta put and $10\%$ delta call quotations from the market, are out of the band.}
	\label{fig:quots_out_range}
\end{figure}

In a recent paper, \citet{drapeau2019} address this puzzle.
They point as a common intuition for such anomalies that the pegged could eventually be dropped at an uncertain time in the future.
In this context, \citet{drapeau2019} design a regime switching (RS) model to describe this situation, provide a pricing formula under the lens of which they revisit the framework based on a calibration to real data.
We refer to \cite{drapeau2019} for an extended explanation of the pegged currency situation with an empirical analysis of these anomalies, as well as the resulting study of the calibrated model.
Yet, one striking result of this work is that this regime switching model for the pegged FX market provides an excellent calibration accuracy to market quotations.

Building up on this fact, we deepen the study for pricing and hedging of this particular model for pegged markets with a focus on accuracy and performance.
First, from a theoretical perspective, we provide a first order approximation together with error bounds to the true pricing and delta formula.
Second, for computational reasons, we also adopt a Fourier approach by computing the explicit formulation of the moment generating function for this specific model.
Third, from the incompleteness of this model, full hedging is not possible.
We therefore derive an explicit formulation in this model for self-financing mean-variance (MV) hedging strategies.
Based on this theoretical part, we test the hedging performance on the simulated data set at "no-jump" and "jump" scenarios with Black-Scholes (BS) delta, RS delta, approximated RS delta, mean-variance and approximated
mean-variance hedges.
It turns out that mean-variance strategies produce large errors at "no-jump" scenarios, while the improvement is not significant at "jump" scenarios.
In the empirical study part, based on a HKDUSD option data set from 2014-01-01 to 2018-07-12, we compare the performance of each pricer for the calibration of the volatility surface using as a benchmark the classical SABR model.
The exact pricing formula is very accurate and outperforms SABR model almost all the time, while the computational time of the Fourier approach is faster than the direct pricing method by a factor of over 5.
However, the calibration accuracy of approximated RS model is poor.
Given these calibration parameters, we construct the parameter surface for the implied volatility.
Based on this surface, we test BS delta, RS delta, approximated RS delta hedging strategies on the real data set.
We find out that the hedging error of the approximated RS delta hedge does not differ much from the two others, while the computational time of the approximated RS delta is significantly faster than the exact one.

Research literature on pegged market in view of option pricing is particularly scarce, we refer to \citep{drapeau2019} for an overview and discussion about it.
From the quantitative viewpoint however, the following works address the problem of option pricing for bounded diffusions in view of pegged markets, starting with \citet{ingersoll1997} and then \citep{rady1997, carr2016, carr2017}.
These models while self contained for pricing within a band, do not apply for real data calibration on pegged markets as they do not allow for strikes outside of the band.
Starting with \citet{naik1993}, continuous time regime switching processes in the context of option pricing have been widely studied from empirical and theoretical viewpoint, see \citep{hamilton1989, hamilton1990, liu2010,ramponi2011, di2009, boyarchenko2009, shaoyong2017} among others and the references therein.
In contrast to the usual frameworks, the regime switch in this specific model is not independent of the underlying process but triggered by the jump of which.
To our knowledge, the closest form to this model is the one in \cite{shaoyong2017} which is more general.
However, the focus of \cite{shaoyong2017} is on local risk-minimizing hedging portfolio.
As for Fourier and FFT methods in finance, they too have been widely studied starting with \citet{heston1993} and \citep{carr1999}.
We refer to \citep{lewis2001} and \cite{schmelzle2010} for an overview of the literature as well as numerical quadratures comparison and practical implementation details.
To our knowledge, the explicit formula of the moment generating function of this regime switching model is new.
Finally, mean-variance hedging is also standard, starting with \citet{duffie1991} and further \citep{schweizer1992, gourieroux1998, laurent1999, pham1998, arai2005} and the reference therein.
In this paper, we provide an explicit mean-variance hedging formula for this specific regime switching model.

The paper is organized as follows:
After a brief introduction of the model in Section \ref{sec:model}, Section \ref{sec:RSprice_hedge_fourier} is dedicated to the theoretical results where we provide the RS pricing formula, the Fourier approach for RS model, the approximated RS pricing formula with error bounds, as well as the mean-variance hedging strategy.
Section \ref{sec:simulation} present the hedging performance for different strategies on simulated data set. 
Section \ref{sec:calibration} concerns the calibration results of the volatility surface on a real data set. 
Section \ref{sec:real_hedge} provides the hedging performance on real data set from 2014-01-01 to 2018-07-12.
Finally, Section \ref{sec:conclution} concludes.

\section{Model}\label{sec:model}
Let $(\Omega, \mathcal{F}, \mathbb{F},P)$ be a filtrated probability space carrying a one dimensional Poisson random measure $M(dy,ds)$ and a Brownian motion $W$ adapted to a filtration $\mathbb{F}=\{\mathcal{F}_t\colon 0\leq t < \infty\}$ satisfying the usual conditions.
We assume that $M(dy,ds)$ on $\mathbb{R}^+ \times \mathbb{R} \setminus\{0\}$ be a Poisson random measure and denote the corresponding Poisson process as $M(t):= M([0, t], \mathbb{R}\setminus\{0\})$ with the intensity $\lambda$.
The compensator measure of $M(dy,ds)$ is given by $m(dy,ds) = \lambda N^\prime_{u,\delta^2}(y)dyds$ where $N_{u,\delta^2}(\cdot)$ is the CDF of a normal distribution $\mathcal{N}(u,\delta^2)$.
In a pegged foreign exchange market, our regime switching model assumes that the spot price process $S$ follows the dynamic
\begin{equation}\label{equ:my_model}
    \frac{dS(t)}{S(t-)}=(r_d - r_f - \lambda\kappa(\alpha(t-)))dt+\sigma(\alpha(t-))dW(t)+\int_\mathbb{R}\left(e^{\gamma(y,\alpha(t-))}-1\right)M(dy,dt),
\end{equation}
where the constants $r_d$ and $r_f$ represent the domestic and foreign interest rates, respectively.
The two-state continuous time Markov chain $\alpha(t)$ is defined as
\begin{equation*}
    \alpha(t) = 1_{[\tau,\infty)}(t)= M^{\tau}(t), \quad\text{ where }\quad \tau := \inf\{t: M(t) \geq 1\}.
\end{equation*}
In other terms, we have a single regime switch triggered by the first jump of $M$.
The regime switching volatility and jump size $\sigma$ and $\gamma$ are respectively given by
\begin{equation*}
    \sigma(\alpha)=
    \begin{cases}
        \ubar{\sigma}, &\text{ if } \alpha=0\\
        \bar{\sigma}, &\text{ if } \alpha=1
    \end{cases}
    \quad\text{ and }\quad
    \gamma(y,\alpha) =
    \begin{cases}
        y, &\text{ if } \alpha=0\\
        0, &\text{ if } \alpha=1
    \end{cases}
\end{equation*}
where $0<\ubar{\sigma}\leq \bar{\sigma}$.
The resulting compensator is given by $\kappa(0) = e^{u+\delta^2/2}-1$ and $\kappa(1) = 0$.
For ease of notations, throughout we use the notation $\kappa:= e^{u+\delta^2/2}-1$, average percentual change of spot price right after the switch.

We assume that $W(\cdot)$, $M(\cdot)$ and $Y$ are independent of each other.
According to Dol\'eans-Dade exponential, it follows that
\begin{equation*}
    S(t) = S_0\exp\left((r_d - r_f)t + X(t)\right)
\end{equation*}
where $S_0$ denotes the spot price at time $0$ and 
\begin{multline*}
    X(t) = -\int_0^t\left(\frac{\sigma^2(\alpha(s-))}{2}+\lambda\kappa(\alpha(s-))\right)ds
    +\int_0^t\sigma(\alpha(s-))dW(s)\\
    +\int_0^t\int_\mathbb{R}\gamma(y,\alpha(s-))M(dy,ds).
\end{multline*}
In particular, $\tilde S(t):= S(t)/e^{(r_d - r_f)t}$ is a martingale under $P$ which is therefore a risk neutral measure.

\begin{remark}
    Note that, under this model, due its diffusive nature, the price might break the band before the drop of the peg.
    As mentioned in the introduction, it is possible to model diffusion constrained within bands that can be used as model before this regime switch.
    This would however result in more complex formulations in terms of pricing and hedging as well as problematic calibrations and implementations.
    In contrast, as explained in \citep{drapeau2019}, the present model is intuitive in its framework and easy to implement and calibrate while at the costs of non constrained price evolution before the drop.
    However, assuming a negligible volatility $\ubar{\sigma}$ -- as in the empirical facts -- the resulting probability to break the band significantly before the regime switch is therefore almost zero.
    This is the reason why we adopt such a model as a proxy for the modelling of pegged markets since our focus is on the effect of the market beliefs for a future drop.
\end{remark}

\section{Pricing and hedging}\label{sec:RSprice_hedge_fourier}
Throughout, we denote by $BS$ the Black and Scholes price of call options in the Black-Scholes setting where the spot price follows a geometric Brownian motion.
For a call with maturity $T$ and strike $K$, this price is given by
\begin{equation}\label{equ:gk}
    BS(S_0,\sigma \sqrt{T}) = e^{-r_fT}S_0N(d_+) - e^{-r_dT}KN(d_-),
\end{equation}
see \citet{garman1983}, where $N$ denotes the cumulative distribution function of a standard normal distribution and $d_\pm :=  \ln(S_0e^{(r_d-r_f) T}/K)/(\sigma\sqrt{T}) \pm \sigma\sqrt{T}/2$.
Let further $\Delta_{BS}$ and $\mathcal{V}_{BS}$ denote the corresponding pips spot delta and vega , that is
\begin{align}
    \Delta_{BS}(S_0,\sigma\sqrt{T}) = \frac{\partial BS}{\partial S_0} = e^{-r_f T}N(d_+) \label{equ:gkdelta}\\
    \mathcal{V}_{BS}(S_0,\sigma\sqrt{T}) = \frac{\partial BS}{\partial \sigma} = S_0e^{-r_f T}N^\prime(d_+)\sqrt{T} \label{equ:gkvega}
\end{align}

Finally, we denote by $\sigma_{BS}(K)$ the implied volatility of an observed price $\pi(K)$ with strike $K$ under this model -- for a given maturity -- that is $\sigma_{BS}(K) := \left( \sigma\mapsto BS(\sigma) \right)^{-1}(\pi(K))$.

\subsection{Martingale approach}\label{subsec:directRS}
Taking $P$ as the risk neutral measure, we have the following pricing formula under the model \eqref{equ:my_model}, see also \citet{drapeau2019}.

\begin{proposition}
    In the regime switching jump diffusion model \eqref{equ:my_model}, the price of a European call option with parameter $\theta=(\ubar{\sigma}, \bar{\sigma}, \lambda, u, \delta)$ is given by
    \begin{multline}\label{equ:rs}
        V\left( \theta\right) = BS\left(S_0e^{-\lambda \kappa T},K,T,r_d,r_f,\ubar{\sigma}\sqrt{T}\right)e^{-\lambda T} \\
        + \int_0^T BS\left(S_0e^{-\lambda \kappa t}(1+\kappa),K,T,r_d,r_f,\sqrt{\ubar{\sigma}^2t + \bar{\sigma}^2(T-t)}\right)\lambda e^{-\lambda t}dt 
    \end{multline}
    Further, the delta-hedging strategy is given by
    \begin{multline}\label{equ:rsdelta}
        \Delta\left(\theta\right) = \Delta_{BS}\left(S_0e^{-\lambda \kappa T},K,T,r_d,r_f,\ubar{\sigma}\sqrt{T}\right)e^{-\lambda T(1+\kappa)} \\
        + \int_0^T \Delta_{BS}\left(S_0e^{-\lambda \kappa t}(1+\kappa),K,T,r_d,r_f,\sqrt{\ubar{\sigma}^2t + \bar{\sigma}^2(T-t)}\right)\lambda e^{-\lambda t(1+\kappa)}(1+\kappa)dt.
    \end{multline}
\end{proposition}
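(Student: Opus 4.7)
The plan is to compute $V = e^{-r_d T}\,E^{P}[(S(T)-K)^{+}]$ by conditioning on the unique jump time $\tau$ of the Poisson process driving the regime switch. Because $\tau$ is $\mathrm{Exp}(\lambda)$-distributed and independent of $W$ and of the mark $Y$ of the first jump, one obtains the clean decomposition
\[V = e^{-r_d T}\,E\bigl[(S(T)-K)^+\mathbf{1}_{\{\tau>T\}}\bigr] + e^{-r_d T}\!\int_{0}^{T} E\bigl[(S(T)-K)^+\big|\tau=t\bigr]\,\lambda e^{-\lambda t}\,dt,\]
and I would match each summand to the corresponding term of \eqref{equ:rs}.

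On $\{\tau>T\}$ no jump occurs and $\alpha\equiv 0$ on $[0,T]$, so $S$ reduces to a Black--Scholes SDE with drift $r_d-r_f-\lambda\kappa$ and volatility $\ubar{\sigma}$. Recognising this as a Garman--Kohlhagen underlying with effective spot $S_0e^{-\lambda\kappa T}$, and using $P(\tau>T)=e^{-\lambda T}$, produces the first summand of \eqref{equ:rs} directly from \eqref{equ:gk}. On $\{\tau=t\}$, the trajectory splits as: a Black--Scholes piece on $[0,t)$ with vol $\ubar{\sigma}$ and drift $r_d-r_f-\lambda\kappa$; a multiplicative jump $S_t=S_{t-}e^{Y}$; and a Black--Scholes piece on $[t,T]$ with vol $\bar{\sigma}$ and drift $r_d-r_f$ (since $\kappa(1)=0$). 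Hence $\ln S(T)$ is a Gaussian sum whose Brownian part has variance $v^{2}(t):=\ubar{\sigma}^{2}t+\bar{\sigma}^{2}(T-t)$, plus an independent $\mathcal N(u,\delta^{2})$ contribution from $Y$. A Merton-type identity---the sum of independent normals is normal, together with the Laplace identity $E[e^{Y}]=e^{u+\delta^{2}/2}=1+\kappa$---then rewrites $E[(S(T)-K)^{+}\mid\tau=t]$ as the Garman--Kohlhagen price with effective spot $S_{0}e^{-\lambda\kappa t}(1+\kappa)$ and the effective volatility displayed in \eqref{equ:rs}, after which integration against $\lambda e^{-\lambda t}\,dt$ yields the second summand.

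For \eqref{equ:rsdelta}, the plan is to differentiate \eqref{equ:rs} in $S_{0}$ under the integral sign; this is legitimate because $BS$ is $C^{1}$ in its spot argument and the exponential density $\lambda e^{-\lambda t}$ supplies the required integrable majorant. Each $BS$ is replaced by the pips spot delta $\Delta_{BS}$ of \eqref{equ:gkdelta} multiplied by the chain-rule factor from the inner shift---namely $e^{-\lambda\kappa T}$ for the first term and $e^{-\lambda\kappa t}(1+\kappa)$ for the integrand---which combine with $e^{-\lambda T}$ and $\lambda e^{-\lambda t}$ to give the exponents $-\lambda T(1+\kappa)$ and $-\lambda t(1+\kappa)$ appearing in \eqref{equ:rsdelta}.

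I expect the only non-routine step to be the Merton reduction on $\{\tau=t\}$: one must carefully balance the compensator drift $-\lambda\kappa t$ from the jump integrator, the Brownian convexity corrections $-\ubar{\sigma}^{2}t/2-\bar{\sigma}^{2}(T-t)/2$, and the Laplace correction $u+\delta^{2}/2$ arising from integrating over the Gaussian jump mark, so as to recover the Black--Scholes exponent centred at $(1+\kappa)$ times the pre-jump drifted spot. The remaining steps---independence of $\tau$ and $W$, the exponential law of $\tau$, differentiation under the integral---are standard bookkeeping.
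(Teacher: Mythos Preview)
The paper does not actually prove this proposition; it states the result and refers the reader to \citet{drapeau2019}. Your conditioning-on-$\tau$ argument is the natural route and is correct: splitting into $\{\tau>T\}$ and $\{\tau=t\in[0,T]\}$, identifying the conditional law of $\ln S(T)$ as Gaussian in each case, and reducing to a Garman--Kohlhagen price with the shifted effective spot all go through exactly as you describe. The delta formula \eqref{equ:rsdelta} then follows by differentiating under the integral with the chain-rule factors you list; the exponents recombine as claimed.

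One point to watch when you execute the Merton reduction on $\{\tau=t\}$: the total Gaussian variance of $\ln S(T)$ is $\ubar{\sigma}^{2}t+\bar{\sigma}^{2}(T-t)+\delta^{2}$ (the paper itself records this in its Appendix on the characteristic function), so the effective-volatility slot in \eqref{equ:rs} should strictly be $\sqrt{\ubar{\sigma}^{2}t+\bar{\sigma}^{2}(T-t)+\delta^{2}}$. The displayed formula omits the $\delta^{2}$, which is harmless for the paper's numerics since $\delta=0$ is used throughout, but your careful bookkeeping will surface the extra term---either carry it or state $\delta=0$ explicitly.
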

By $\sigma_{BS}(K; \theta)$ we denote the parametrized regime switching implied volatility for prices $V(\theta)$ with parameter $\theta$, that is
\begin{equation}\label{equ:sigma_parameter}
    \sigma_{RS}(K;\theta) := \left( \sigma \mapsto BS(\sigma) \right)^{-1}(V(K; \theta)).
\end{equation}
Figure \ref{fig:sensitivity_paras} shows how the parameters effect the shape of the volatility smile.
For the illustration, we assume the standard deviation parameter $\delta = 0$, $S_0 = 7.77$, $r_d = r_f $, $T=0.5$ and strike prices $K$ in $[7.5, 8]$.
\begin{figure}[H]
    \centering
    \begin{minipage}{0.495\textwidth}
        \includegraphics[width=\textwidth]{./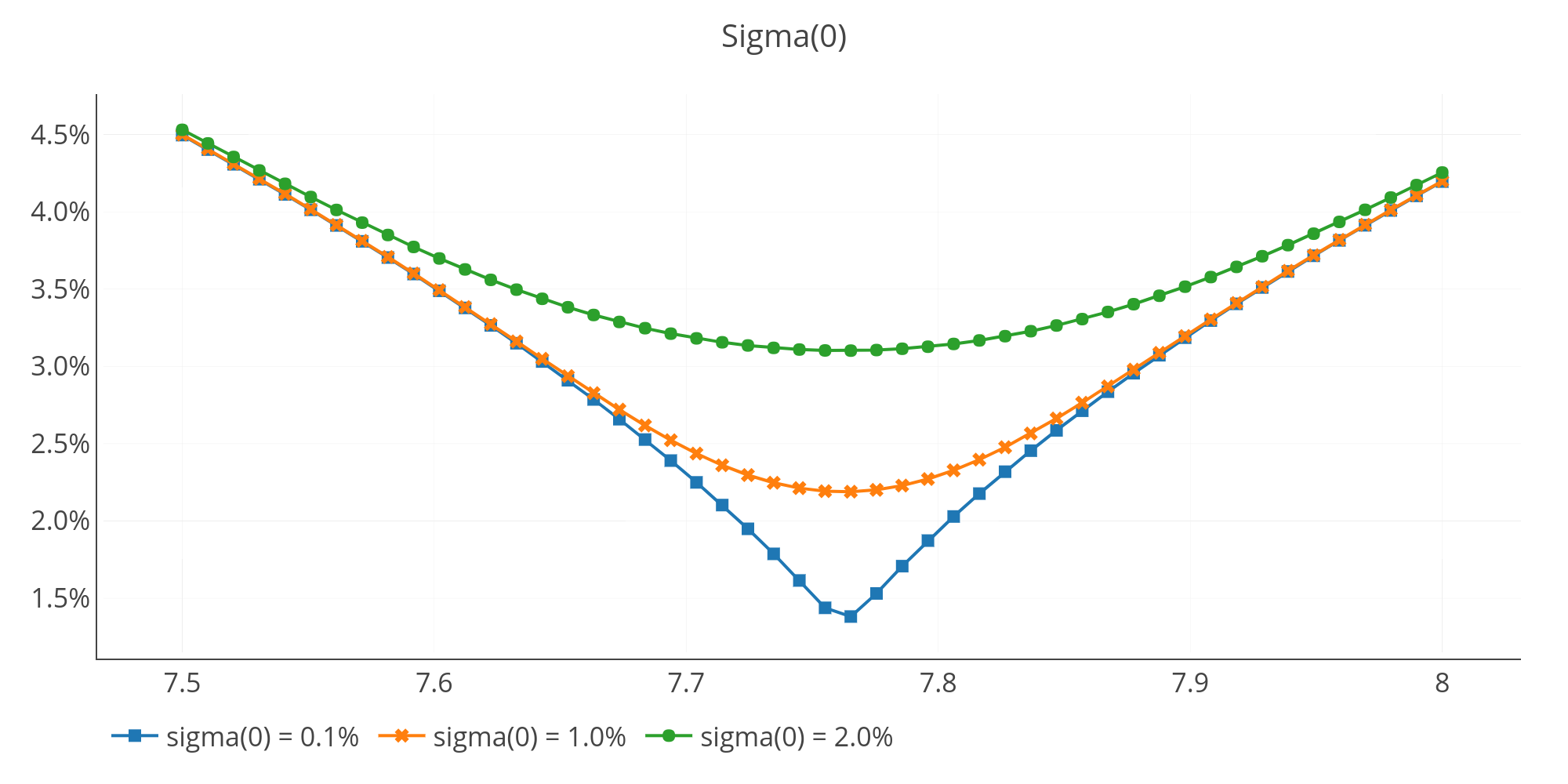}
    \end{minipage}
    \begin{minipage}{0.495\textwidth}
        \includegraphics[width=\textwidth]{./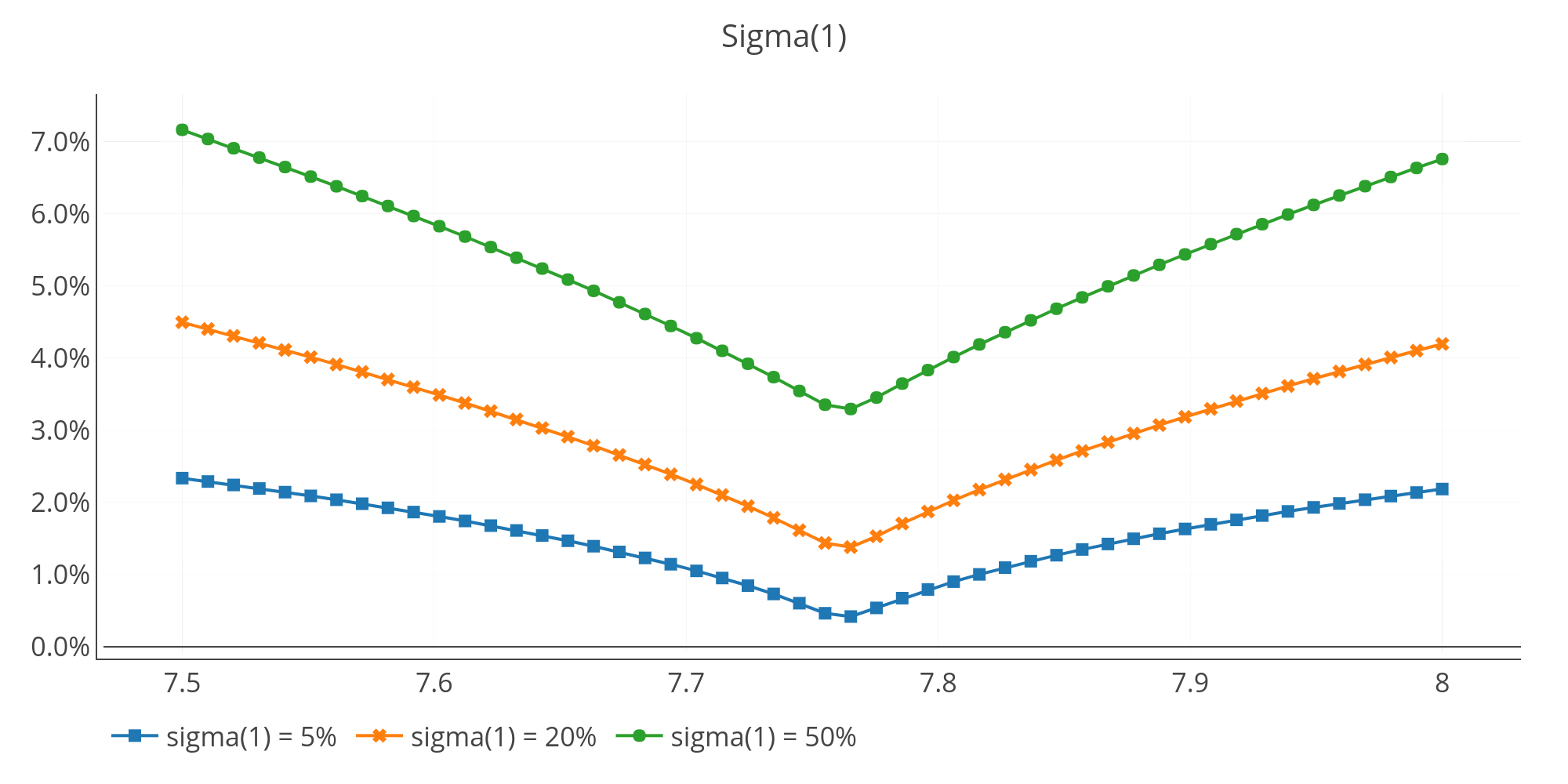}
    \end{minipage}

    \begin{minipage}{0.495\textwidth}
        \includegraphics[width=\textwidth]{./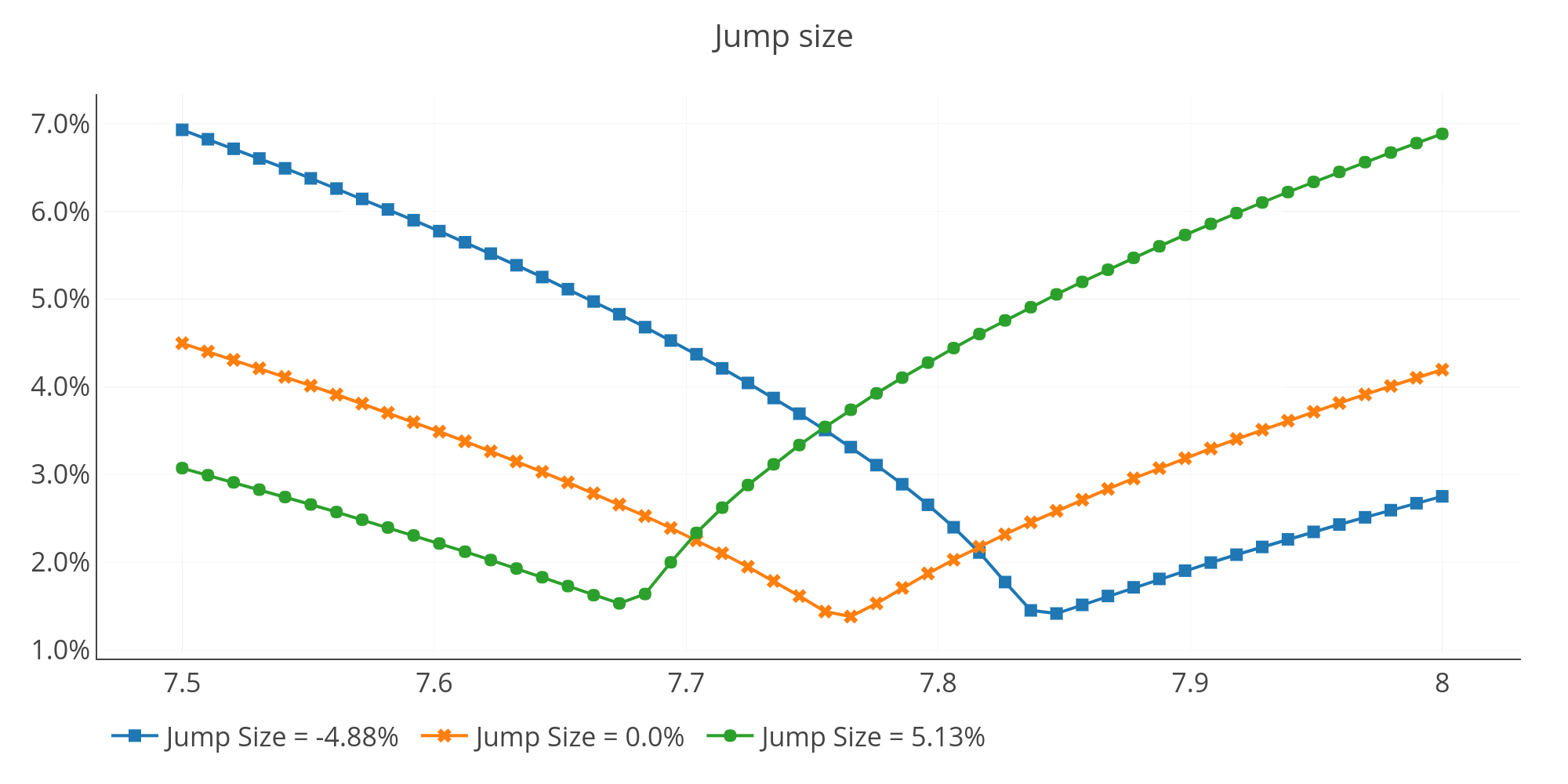}
    \end{minipage}
    \begin{minipage}{0.495\textwidth}
        \includegraphics[width=\textwidth]{./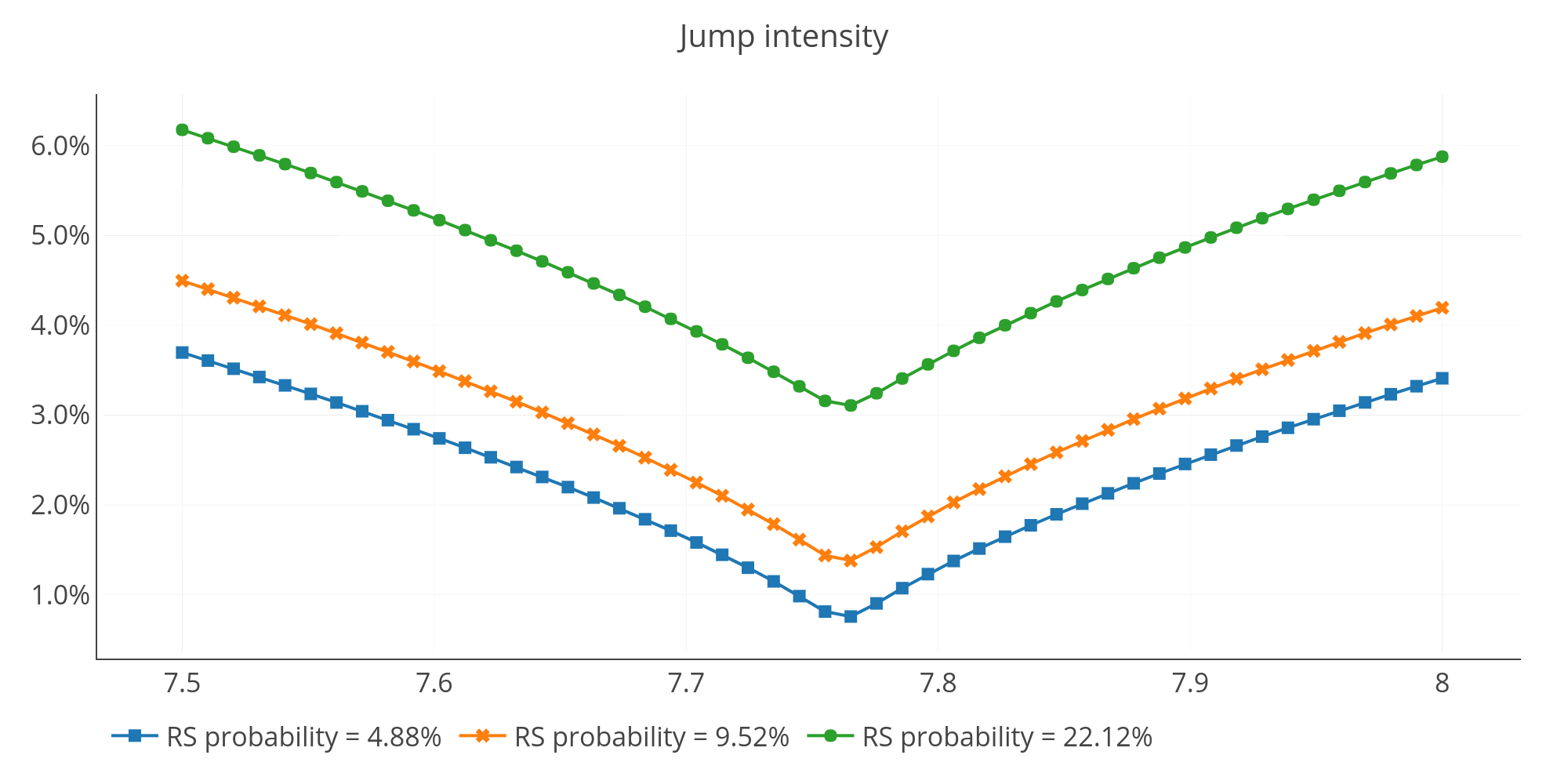}
    \end{minipage}
    \caption{Implied volatility: sensitivity to parameters}
    \label{fig:sensitivity_paras}
\end{figure}
\begin{remark}
    Note that this regime switching implied volatility does rather generate some ``bird'' shaped smiles usually observed in pegged markets, see \citep{drapeau2019}.
    Both jump intensity $\lambda$ and upper volatility $\bar{\sigma}$ do translate upwards the volatility smile.
    However, the upper volatility level also tends to ``bend'' the smile as it increases.
    Finally, the jump direction $u$ influences the skew.
\end{remark}

\subsection{Fourier approach}\label{subsec:Fourier}
Following a Fourier approach to option pricing, see \citet{lewis2001}, we denote by 
\begin{equation*}
    \varphi_T(z;\theta) := E\left[e^{i z X(T)}\right]  
\end{equation*}
the characteristic function of $X(T)$.
Assuming $0<v<1$, the call option price is given by
\begin{equation*}
    V\left(\theta\right) = S_0e^{-r_fT} - \frac{Ke^{-r_dT}}{2\pi}\int_{iv-\infty}^{iv+\infty}e^{-izk}\varphi_T(-z;\theta)\frac{1}{z^2-iz}dz
\end{equation*}
where $k=\ln(S_0/K) + (r_d-r_f)T$.
For example, with $v=\frac{1}{2}$, the price reads as
\begin{equation}\label{equ:Fourier_call}
    V\left(\theta\right) = S_0e^{-r_fT} - \frac{\sqrt{S_0K}e^{-(r_d+r_f)T/2}}{\pi}\int_0^\infty \text{Re}\left(e^{iuk}\varphi_T \left(u-\frac i2;\theta\right)\frac{1}{u^2+\frac14}\right)du.
\end{equation}
where $\text{Re}(x)$ denotes the real part of $x$, see \citet{hilpisch2015} for details.
The following proposition provides an explicit expression for the characteristic function $\varphi_T(z;\theta)$, which is up to our knowledge new, since the jump of the model is the trigger for the regime switch and not, as usually assumed, independent of it.
\begin{proposition}
    In regime switching model (\ref{equ:my_model}), the characteristic function $\varphi_T(z;\theta)$ of $X(T)$ is given by
    \begin{equation*}
        \varphi_T(z;\theta)
        = e^{-\lambda T}\phi_{0,T}(z)
        +\lambda e^{izu-z^2\frac{\delta^2}{2}}\frac{e^{-\lambda T}\phi_{0,T}(z) - \phi_{1,T}(z)}{\left(iz+z^2\right)\frac{\bar{\sigma}^2 - \ubar{\sigma}^2}{2} - \lambda(izk+1)}.
    \end{equation*} 
    where
    \begin{equation*}
        \phi_{0,T}(z) = e^{-iz \frac{\ubar{\sigma}^2T}{2} -z^2\frac{\ubar{\sigma}^2T}{2} - iz\lambda\kappa T}
        \quad \text{and}\quad
        \phi_{1,T}(z) = e^{-iz \frac{\bar{\sigma}^2T}{2}-z^2\frac{\bar{\sigma}^2 T}{2}}
    \end{equation*}
\end{proposition}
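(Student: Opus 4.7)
The plan is to compute $\varphi_T(z;\theta) = E[e^{izX(T)}]$ by decomposing on the event structure driven by the first jump time $\tau$ of the Poisson process $M$. Since $\alpha(s-)=0$ on $\{s \leq \tau\}$ and $\alpha(s-)=1$ on $\{s>\tau\}$, and moreover $\gamma(y,1)=0$, the integral against $M$ in the expression for $X(T)$ collapses to the single jump value $Y$ at time $\tau$ (of law $\mathcal{N}(u,\delta^2)$) whenever $\tau\leq T$, and vanishes otherwise. Combined with the fact that $\tau$ is exponentially distributed with rate $\lambda$ and independent of both $W$ and $Y$, this gives me a clean split of $\varphi_T(z;\theta)$ into a ``no--switch'' contribution weighted by $P(\tau>T)=e^{-\lambda T}$ and a ``switch-at-time-$t$'' contribution integrated against the density $\lambda e^{-\lambda t}$ on $[0,T]$.

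On the no-switch event I would write $X(T)=-(\ubar\sigma^2/2+\lambda\kappa)T+\ubar\sigma W(T)$ and, using $E[e^{iz\ubar\sigma W(T)}]=e^{-z^2\ubar\sigma^2 T/2}$, recognize directly the factor $\phi_{0,T}(z)$. On the event $\{\tau=t\}$ I would write
\begin{equation*}
X(T) = -\tfrac{\ubar\sigma^2 t}{2}-\lambda\kappa t - \tfrac{\bar\sigma^2(T-t)}{2} + \ubar\sigma W(t) + \bar\sigma(W(T)-W(t)) + Y,
\end{equation*}
use independence of the increments of $W$ and of $Y$, and exploit the Gaussian characteristic functions to pull out the factor $e^{izu - z^2\delta^2/2}$ for the jump and an exponential-quadratic factor for the Brownian pieces. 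This yields, after integration, an expression of the form
\begin{equation*}
\lambda e^{izu - z^2\delta^2/2} \int_0^T e^{A(z) t} \phi_{1,T}(z)\, dt
\end{equation*}
with some coefficient $A(z)$ that I collect from the $t$-dependent terms.

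The main (and really only) obstacle is the bookkeeping needed to identify $A(z)$ cleanly and to recognize the primitive $(e^{A(z)T}-1)/A(z)$ in terms of $\phi_{0,T}$ and $\phi_{1,T}$. A direct grouping of the $t$-coefficients gives $A(z) = (iz+z^2)(\bar\sigma^2-\ubar\sigma^2)/2 - \lambda(iz\kappa+1)$, which is exactly the denominator in the claimed formula. Moreover, a straightforward manipulation of the exponents shows that $e^{A(z)T}\phi_{1,T}(z) = e^{-\lambda T}\phi_{0,T}(z)$, so that $\phi_{1,T}(z)(e^{A(z)T}-1)/A(z) = (e^{-\lambda T}\phi_{0,T}(z) - \phi_{1,T}(z))/A(z)$. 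Adding this switch contribution to the no-switch contribution $e^{-\lambda T}\phi_{0,T}(z)$ yields the stated formula. Because the model has $\gamma\equiv 0$ after the switch, no further regime information needs to be tracked, which is precisely what makes this explicit expression possible despite the jump triggering the switch.
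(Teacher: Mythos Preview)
Your proposal is correct and follows essentially the same approach as the paper's proof in Appendix~\ref{app:Fourier}: both condition on the first jump time $\tau$, split into the events $\{\tau>T\}$ and $\{\tau\leq T\}$, exploit the independence of $W$, $Y$ and $\tau$ to compute Gaussian characteristic functions, and then evaluate $\int_0^T e^{A(z)t}\,dt$ explicitly. Your identification of $A(z)$ and the key relation $e^{A(z)T}\phi_{1,T}(z)=e^{-\lambda T}\phi_{0,T}(z)$ is exactly what the paper does (with slightly different bookkeeping), so there is no substantive difference between the two arguments.
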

\begin{proof}
    See Appendix \ref{app:Fourier}.
\end{proof}
\begin{remark}
    The FFT algorithm can be applied to \eqref{equ:Fourier_call}, the details of which are deferred to the Appendix \ref{app:FFT}.
    FFT is not applied in the following empirical study as FFT algorithm turns out not to be time saving for the required accuracy.
\end{remark}

\subsection{A first-order approximation}\label{subsec:approxRS}
The exact solution given by relation \eqref{equ:rs} for the price of a call option in the regime switching model
can be seen as a convex combination between the GK price with low volatility and the average between $0$ and $T$ with high volatilities and jumped price.
This integration part is however computational costly, either in terms of direct computation or using Fourier methods.
Yet, we obtain a straightforward approximation in terms of convex combinations between the two regimes for price and delta as follows
\begin{equation}\label{equ:approx_RS}
    V_{approx}(\theta)   = pBS\left(S_0e^{-\lambda \kappa T}, \ubar{\sigma}\sqrt{T}\right) + (1-p) BS\left( S_0(1+\kappa), \bar{\sigma}\sqrt{T} \right) 
\end{equation}
\begin{equation}\label{equ:approx_RSdelta}
\begin{split}
    \Delta_{approx}(\theta) & = p e^{-\lambda \kappa T}\Delta_{BS}\left(S_0e^{-\lambda \kappa T},\ubar{\sigma}\sqrt{T}\right)	\\
                            & \quad \quad \quad \quad+ (1-p)(1+\kappa) \Delta_{BS}\left( S_0(1+\kappa), \bar{\sigma}\sqrt{T} \right),
\end{split}
\end{equation}
where $p=e^{-\lambda T}$.
\begin{proposition}\label{prop:approx}
    The spot adjusted error between the approximated and exact formulation is given by
    \begin{equation*}
        \frac{\left| V(\theta)-V_{approx}(\theta) \right|}{S_0}  \leq (1-p)\sqrt{\frac{T}{2\pi}}(\bar{\sigma} - \ubar{\sigma}) + |\kappa|(1-p) - p\left|e^{-\lambda\kappa T} - 1\right|.
    \end{equation*}
\end{proposition}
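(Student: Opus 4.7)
The plan is to exploit that both $V(\theta)$ and $V_{approx}(\theta)$ share the identical first summand $p\cdot BS(S_0 e^{-\lambda \kappa T}, \ubar{\sigma}\sqrt{T})$, which cancels upon subtraction. Using $\int_0^T \lambda e^{-\lambda t}\,dt = 1-p$, I would rewrite
\begin{equation*}
V(\theta) - V_{approx}(\theta) = \int_0^T \lambda e^{-\lambda t}\bigl[BS(\tilde S_0(t), v_1(t)) - BS(S_0(1+\kappa), \bar{\sigma}\sqrt{T})\bigr]\, dt,
\end{equation*}
with $\tilde S_0(t) := S_0(1+\kappa)e^{-\lambda \kappa t}$ and $v_1(t) := \sqrt{\ubar{\sigma}^2 t + \bar{\sigma}^2(T-t)}$. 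The exponential weight $\lambda e^{-\lambda t}$ of total mass $(1-p)$ then supplies the natural normalization of the claimed bound.

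I would then split the integrand by the triangle inequality into a pure spot change and a pure vol change, using the classical Black-Scholes Lipschitz estimates $|BS(S_1,v)-BS(S_2,v)| \le |S_1-S_2|$ (from $|\Delta_{BS}| \le 1$) and $|BS(S,v_1)-BS(S,v_2)| \le (S/\sqrt{2\pi})|v_1-v_2|$ (from $\partial BS/\partial v = S e^{-r_f T}N'(d_+) \le S/\sqrt{2\pi}$). The vol gap is controlled via the elementary inequality $\bar{\sigma}\sqrt{T}-v_1(t) \le (\bar{\sigma}-\ubar{\sigma})\sqrt{T}$, which follows from $\sqrt{a+c}-\sqrt{b+c}\le\sqrt{a}-\sqrt{b}$ applied to $a=\bar{\sigma}^2 t$, $b=\ubar{\sigma}^2 t$, $c=\bar{\sigma}^2(T-t)$. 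Integrating the vol piece against $\lambda e^{-\lambda t}$ then yields the first summand $(1-p)\sqrt{T/(2\pi)}(\bar{\sigma}-\ubar{\sigma})$ of the stated bound.

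The heart of the argument is the spot integral, which I expect to collapse to an exact identity rather than merely a bound. Writing $|\tilde S_0(t)-S_0(1+\kappa)| = S_0(1+\kappa)|e^{-\lambda\kappa t}-1|$ and splitting cases on the sign of $\kappa$, I would use $\int_0^T \lambda(1+\kappa)e^{-\lambda(1+\kappa)t}\,dt = 1-e^{-\lambda(1+\kappa)T}$ together with the key identity $p - e^{-\lambda(1+\kappa)T} = e^{-\lambda T}(1-e^{-\lambda\kappa T})$. In both sign regimes these combine to give
\begin{equation*}
\int_0^T \lambda e^{-\lambda t}\, S_0(1+\kappa)\,|e^{-\lambda\kappa t}-1|\, dt \;=\; S_0\bigl(|\kappa|(1-p)-p|e^{-\lambda\kappa T}-1|\bigr),
\end{equation*}
which is precisely the last two summands of the bound; notably the minus sign there arises from this exact evaluation, not from triangle-inequality slack.

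The delicate bookkeeping point I anticipate is the multiplicative spot weight $\tilde S_0(t)/S_0 = (1+\kappa)e^{-\lambda\kappa t}$ that naturally enters through the vega estimate: the spot representative must be chosen carefully so that, after integration against $\lambda e^{-\lambda t}$, no spurious factor of $(1+\kappa)$ inflates the clean $(1-p)$ in the volatility contribution. Once the vol and spot integrals are combined and one divides through by $S_0$, the claimed inequality follows.
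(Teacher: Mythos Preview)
Your approach is essentially the same as the paper's: both recognize that $V-V_{approx}=\int_0^T\bigl(f(t)-f(0)\bigr)\lambda e^{-\lambda t}\,dt$ with $f(t)=BS\bigl(S_0(1+\kappa)e^{-\lambda\kappa t},\sqrt{\ubar{\sigma}^2 t+\bar{\sigma}^2(T-t)}\bigr)$, and both control the difference through the bounds $\Delta_{BS}\le 1$ and $\partial BS/\partial v\le S/\sqrt{2\pi}$, arriving at the identical spot-integral identity that produces the last two summands exactly. The only organizational difference is that the paper writes $f(0)-f(t)=-\int_0^t f'(s)\,ds$ and bounds $|f'(s)|$ pointwise, whereas you apply the Lipschitz estimates directly to $|f(t)-f(0)|$; this is cosmetically cleaner but not substantively different.

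Your ``delicate bookkeeping point'' is well spotted and is in fact glossed over in the paper's own argument: in the vega estimate \eqref{equ:comb_sigma} the paper silently replaces the spot $S_0(1+\kappa)e^{-\lambda\kappa s}$ by $S_0$, which is exactly the factor you were worried would inflate the clean $(1-p)$ coefficient. So your concern is legitimate, and neither your sketch nor the paper fully resolves it; the stated inequality as written appears to require this tacit simplification.
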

\begin{proof}
    See Appendix \ref{app:first_order_approx}.
\end{proof}

While the exact solution is analytical for the pricing and hedging in the regime switching model (\ref{equ:my_model}), the approximated solution provides a faster method for pricing and hedging.
The drawbacks of this approximation method are the accuracy issues of the resulting option price, accuracy that depends on the value of $\theta$ as well as the maturity.
We provide the relative error of the approximated RS price ($\text{Relative Error}_{V}$) and the approximated RS Delta ($\text{Relative Error}_{\Delta}$) by  
\begin{equation*}
    \text{Relative Error}_{V} = \frac{V_{approx} - V}{S_0}*100\%
    \quad\text{and}\quad
    \text{Relative Error}_{\Delta} = \frac{\Delta_{approx} - \Delta}{S_0}*100\%,
\end{equation*}
respectively.
Figure \ref{fig:relerror_approx_price05} shows the relative error between the exact and approximated RS price (left) and delta (right) with a spot price $S_0$ in $[50,150]$, strike price $K=100$, domestic interest rate $r_d=2 \%$, foreign interest rate $r_f=3\%$, a maturity $T = 1$, lower volatility level $\ubar{\sigma}=2\%$, upper volatility level $\bar{\sigma}=10\%$, jump intensity $\lambda=0.1$ and jump size parameters $u=0.05$ and $\delta=0$. 
\begin{figure}[H]
    \centering
    \begin{minipage}{0.495\textwidth}
        \includegraphics[width=\textwidth]{./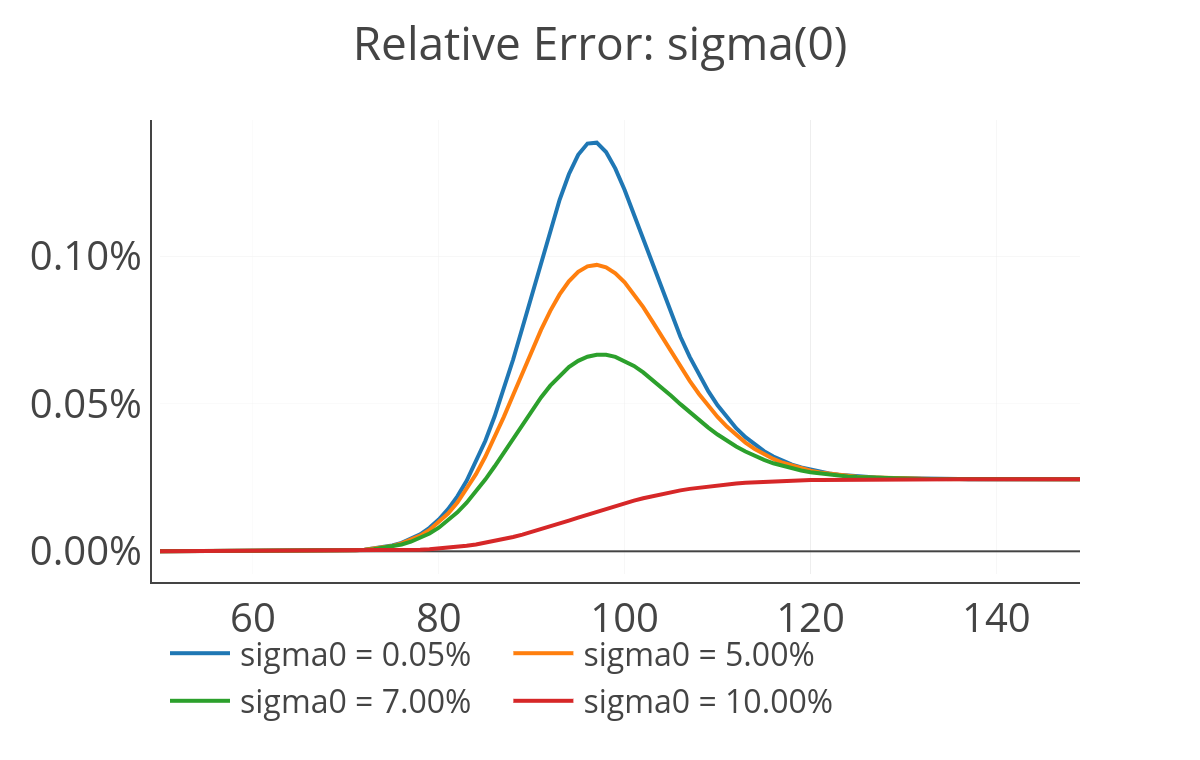}
    \end{minipage}
    \begin{minipage}{0.495\textwidth}
        \includegraphics[width=\textwidth]{./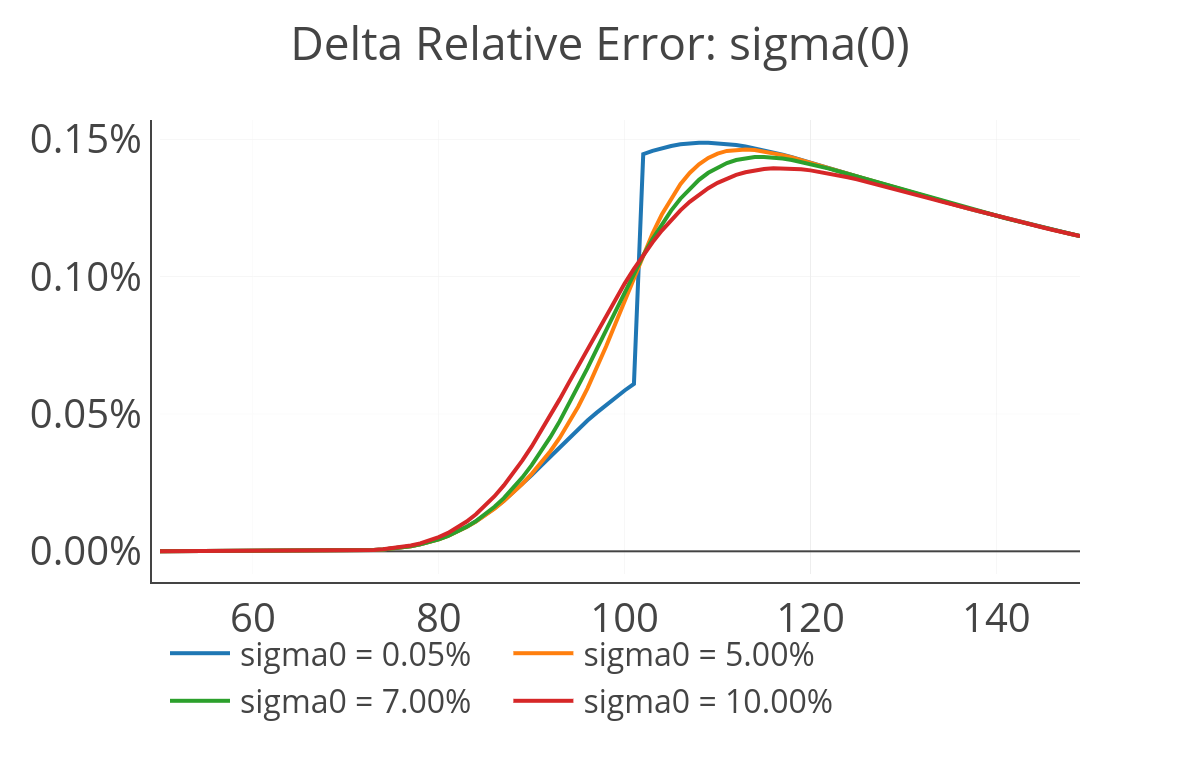}
    \end{minipage}
    \label{fig:relerror_approx_price01}
\end{figure}
\begin{figure}[H]
    \centering
    \begin{minipage}{0.495\textwidth}
        \includegraphics[width=\textwidth]{./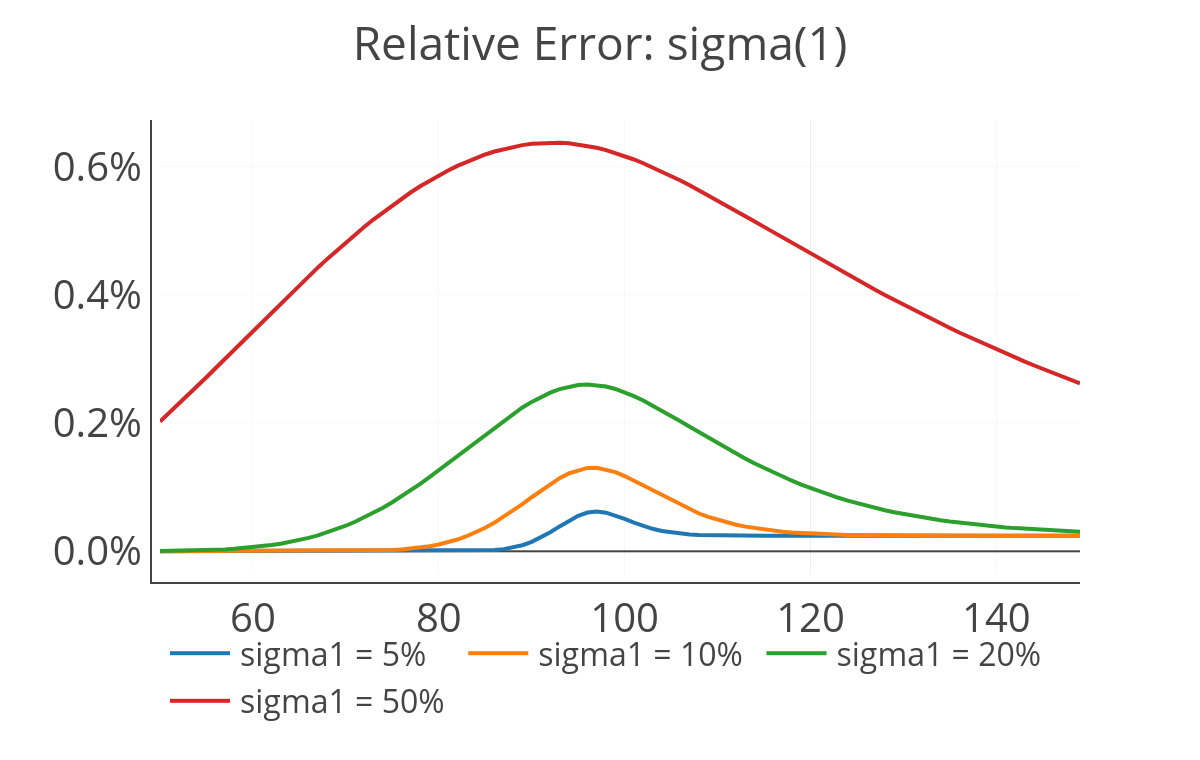}
    \end{minipage}
    \begin{minipage}{0.495\textwidth}
        \includegraphics[width=\textwidth]{./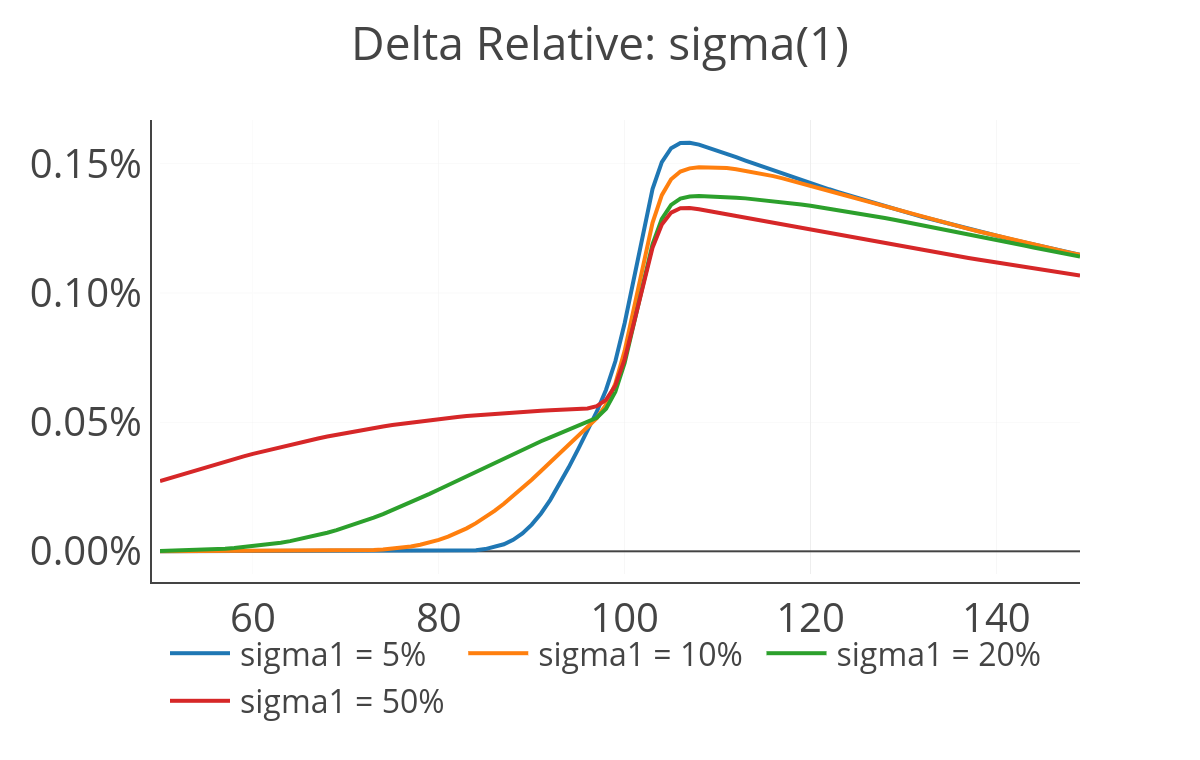}
    \end{minipage}
    \label{fig:relerror_approx_price02}
\end{figure}
\begin{figure}[H]
    \centering
    \begin{minipage}{0.495\textwidth}
        \includegraphics[width=\textwidth]{./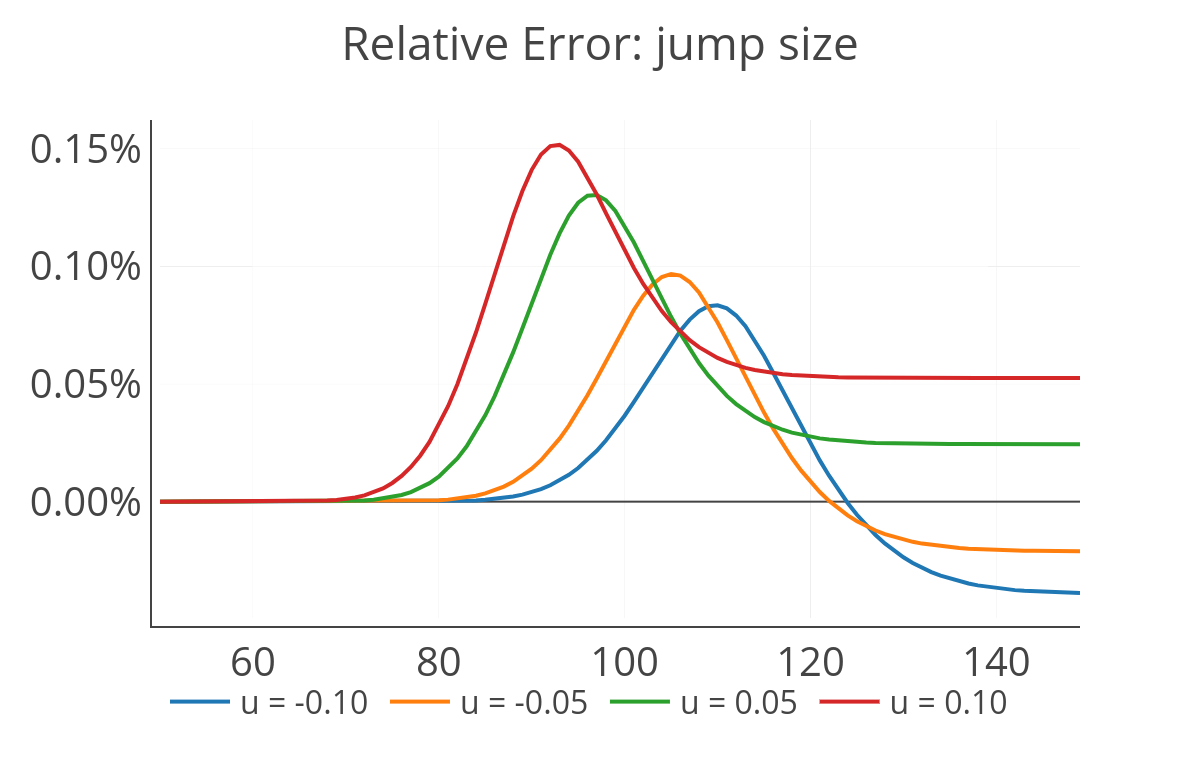}
    \end{minipage}
    \begin{minipage}{0.495\textwidth}
        \includegraphics[width=\textwidth]{./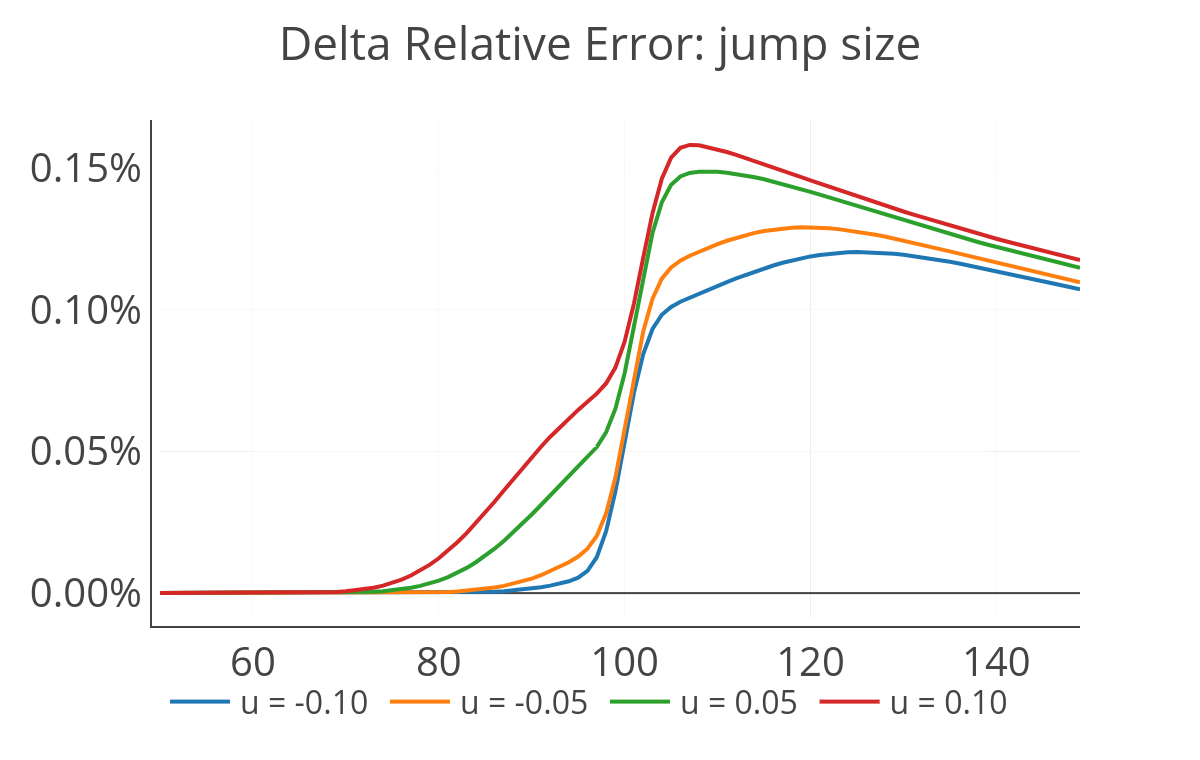}
    \end{minipage}
    \label{fig:relerror_approx_price03}
\end{figure}
\begin{figure}[H]
    \centering
    \begin{minipage}{0.495\textwidth}
        \includegraphics[width=\textwidth]{./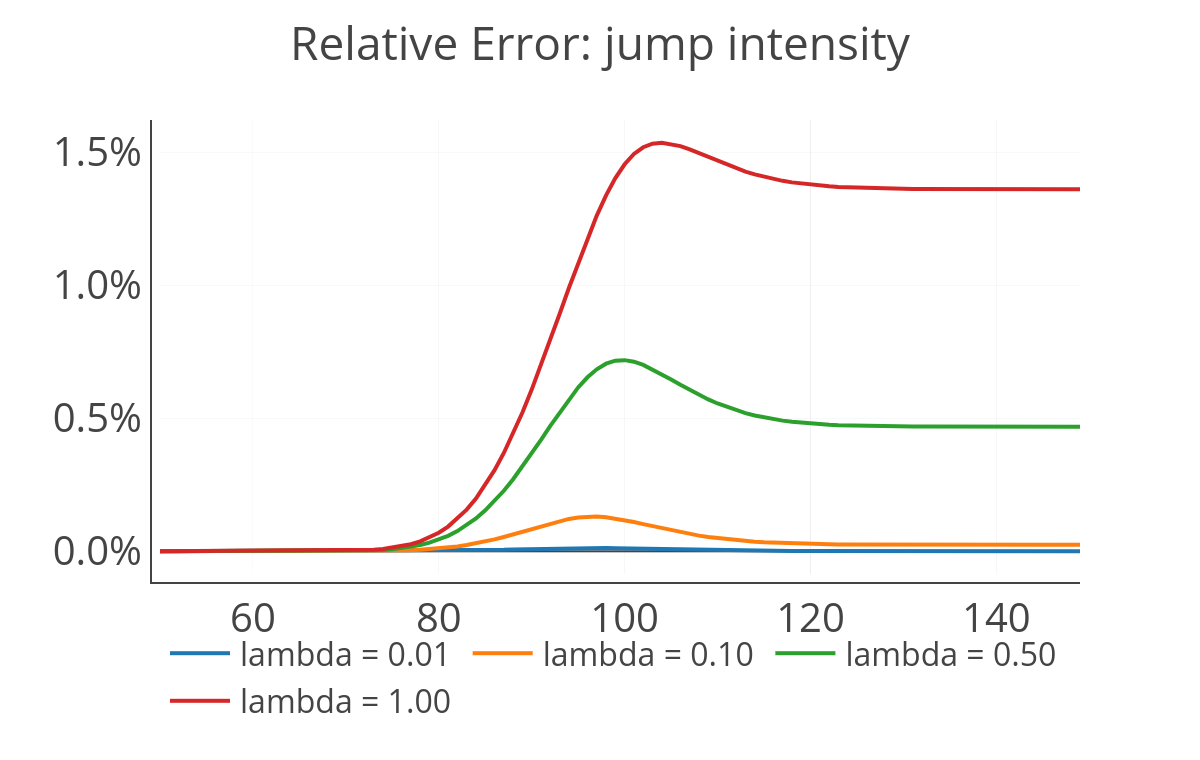}
    \end{minipage}
    \begin{minipage}{0.495\textwidth}
        \includegraphics[width=\textwidth]{./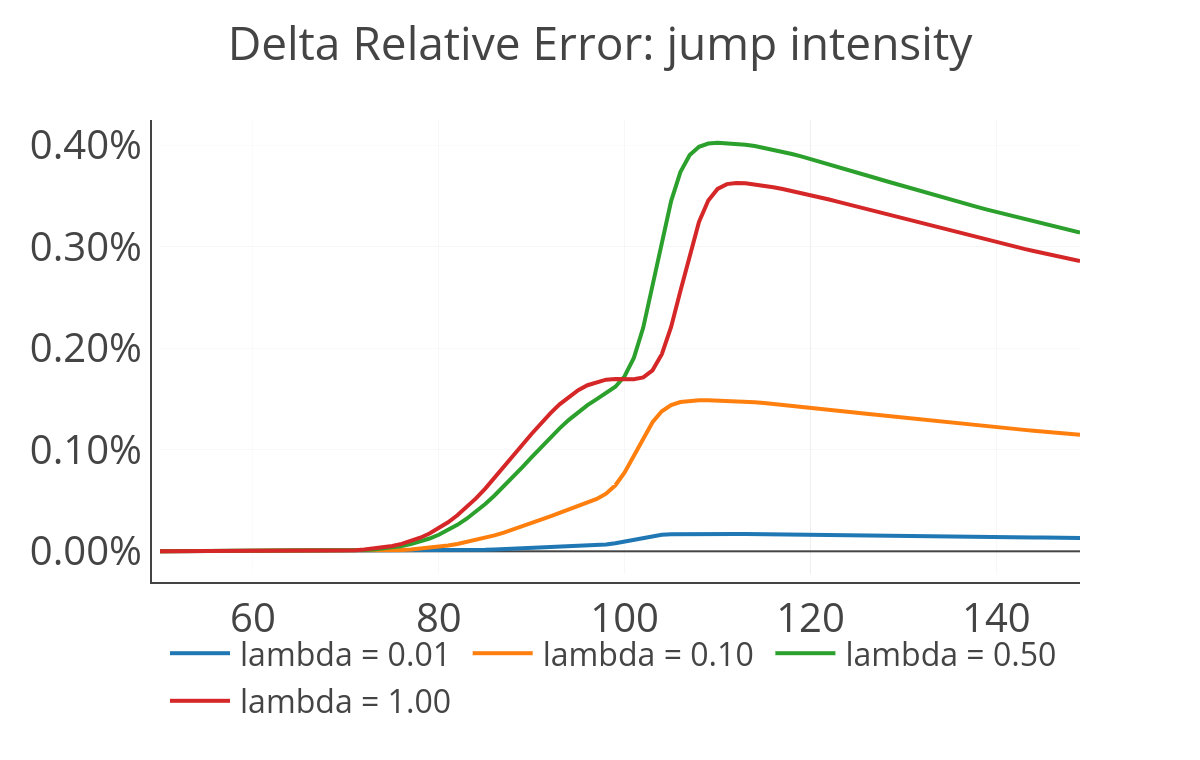}
    \end{minipage}
    \label{fig:relerror_approx_price04}
\end{figure}

\begin{figure}[H]
    \centering
    \begin{minipage}{0.495\textwidth}
        \includegraphics[width=\textwidth]{./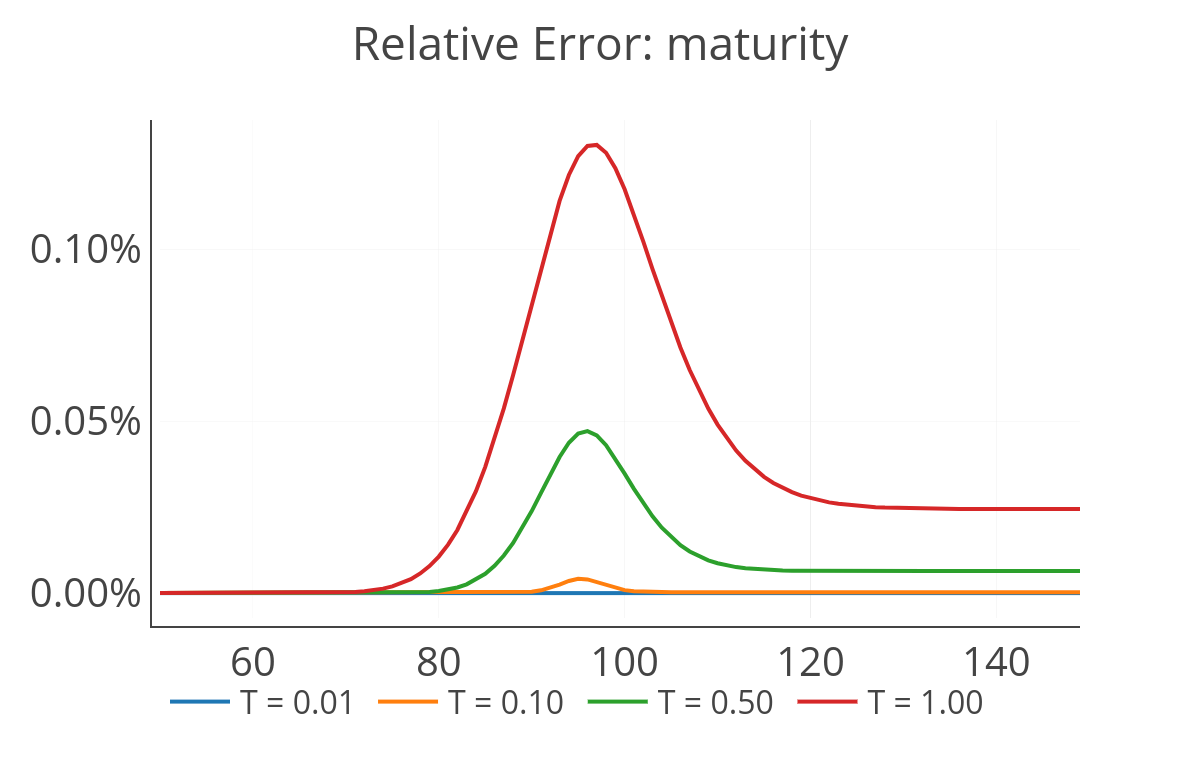}
    \end{minipage}
    \begin{minipage}{0.495\textwidth}
        \includegraphics[width=\textwidth]{./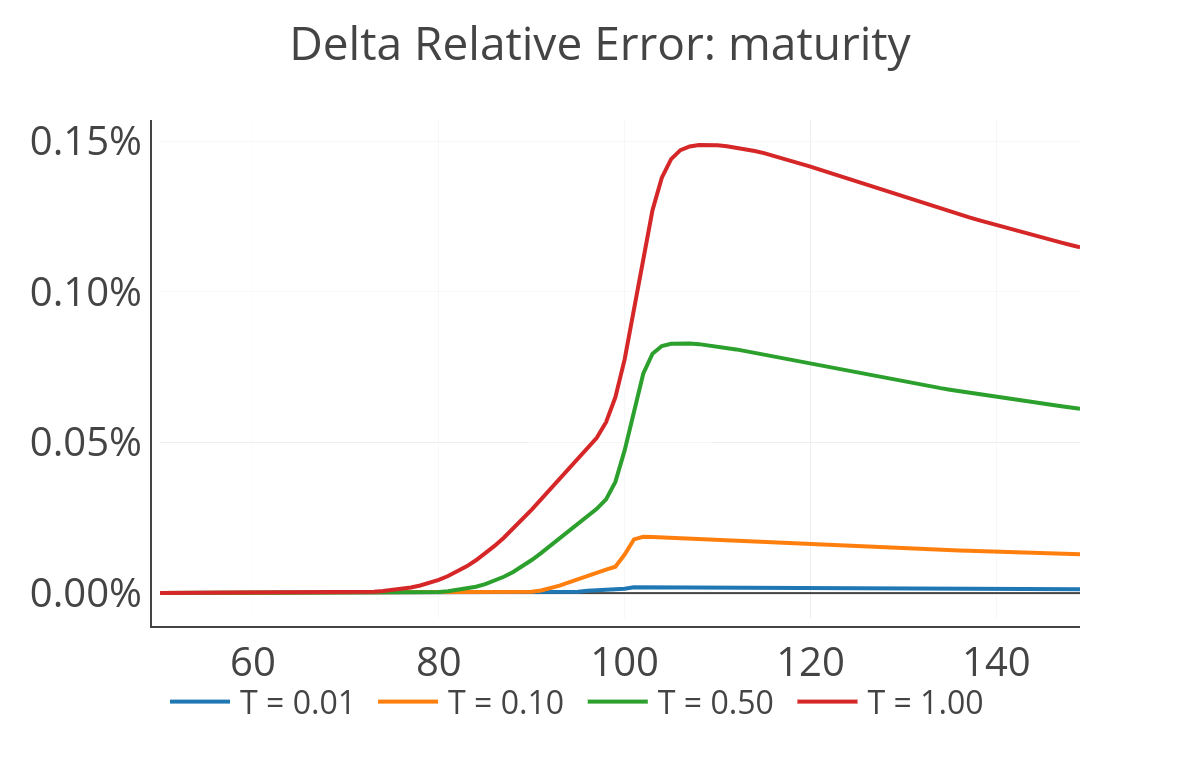}
    \end{minipage}
    \caption{Relative errors as a function of $\underline{\sigma}$, $\bar{\sigma}$, jump size, jump intensity and maturity, respectively.}
    \label{fig:relerror_approx_price05}
\end{figure}

\begin{remark}
    Note that both relative errors are getting large around	at-the-money area.
    The errors are small when the call is out of the money.
    Furthermore, a negative jump size $u$ gives a negative relative error of prices.
\end{remark}

\subsection{Quadratic hedging}\label{subsec:MVhedge}
Due to the assumption of regime switching and jump in the model, the market is not complete, and hence there is no perfect hedging.
A popular self-financing hedging approach is the \textit{mean-variance hedging} where a strategy $\pi$ with initial capital $\pi^0$ is searched so as to minimize the mean squared hedging error
\begin{equation}\label{equ:func_c}
    \inf_{\pi^0,\pi}E\left[\left(\pi^0 - \int_0^T\pi d\tilde S - \tilde H\right)^2\right]
\end{equation}
over all admissible trading strategies.
Here, 
\begin{equation*}
    \tilde H:= H/e^{(r_d - r_f)T}\quad \text{and} \quad \tilde S(t) := S(t)/e^{(r_d - r_f)t}
\end{equation*}
are the discounted payoff and spot price process, respectively.
Notice that, for European call option, the payoff is $H(S(T)) = (S(T)-K)^+$.
Moreover, we define
\begin{equation*}
    \mathcal{C}(t,s,x) := e^{-(r_d-r_f)(T-t)}E[H(S(T))|S_t=s,\alpha(t)=x],\quad x\in\{0,1\}
\end{equation*}
and the discounted value of $\mathcal{C}(t,s,x)$ by
\begin{equation*}
    \tilde {\mathcal{C}}(t,s,x) = e^{-(r_d-r_f)t}{\mathcal{C}}(t,s,x).
\end{equation*}
The following theorem provides the mean-variance hedging strategy in the regime switching model.	
\begin{theorem}\label{thm:mv_hedging}
    The mean-variance hedging portfolio for European currency options is given by
    \begin{equation}\label{equ:mvdelta}
        \pi(t) =
        \begin{cases}
            \displaystyle \frac{\displaystyle \ubar{\sigma}^2\frac{\partial \tilde{\mathcal{C}}}{\partial S}(t,S(t-),0) + \frac{\lambda(e^{u}-1)}{S(t-)}\left[\tilde{\mathcal{C}}(t,S(t-)e^{u+\frac{\delta^2}{2}},1) - \tilde{\mathcal{C}}(t,S(t-),0)\right]}{\displaystyle \ubar{\sigma}^2+\lambda(e^{u+\frac{\delta^2}{2}}-1)^2} \\
            \quad \quad \quad \quad\quad \quad \quad \quad \quad \quad \quad \quad\quad \quad \quad \quad \quad \quad \quad \quad \quad\quad \quad \quad \quad \text{ if } \alpha(t-)=0\\
            \displaystyle \frac{\partial \tilde{\mathcal{C}}}{\partial S}(t,S(t-),1) \\
            \quad \quad \quad \quad\quad \quad \quad \quad\quad \quad \quad \quad\quad \quad \quad \quad \quad \quad \quad \quad \quad\quad \quad \quad \quad \text{ if } \alpha(t-)=1
        \end{cases}
    \end{equation}
    and the initial capital $\pi^0(0) = V(S_0;\theta)$.	
\end{theorem}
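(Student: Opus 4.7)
The plan is to exploit the fact that $\tilde S$ is already a $P$-martingale, so that the mean-variance hedging problem reduces to the Galtchouk--Kunita--Watanabe projection of the discounted payoff onto the stable subspace of stochastic integrals against $\tilde S$. Since the stochastic integral $\int_0^T \pi\,d\tilde S$ is a zero-mean martingale, the outer minimisation over $\pi^0$ in \eqref{equ:func_c} is attained at $\pi^0 = E[\tilde H] = \tilde{\mathcal{C}}(0,S_0,0)$, which after undiscounting is precisely the option price $V(S_0;\theta)$ of \eqref{equ:rs}. Conditionally on this initial capital, the optimal integrand is characterised pointwise by
\begin{equation*}
    \pi^*_t = \frac{d\langle \tilde M,\tilde S\rangle_t}{d\langle \tilde S\rangle_t},\qquad \tilde M_t := E[\tilde H\mid\mathcal{F}_t] = \tilde{\mathcal{C}}(t,S(t),\alpha(t)).
\end{equation*}

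Next, I will write the semimartingale decompositions of $\tilde M$ and $\tilde S$ into their Brownian and compensated-Poisson components. For $\tilde S$, the SDE \eqref{equ:my_model} gives directly the Brownian integrand $\sigma(\alpha(t-))S(t-)$ and the compensated-jump integrand $S(t-)(e^y-1)\mathbf{1}_{\{\alpha(t-)=0\}}$, active only pre-switch since after the unique jump the regime $\alpha\equiv 1$ is absorbing and $\tilde S$ reduces to a driftless geometric Brownian motion. For $\tilde M$, applying It\^o's formula for jump-diffusions to $\tilde{\mathcal{C}}(t,S(t),\alpha(t))$ and using the martingale property to eliminate all finite-variation terms, the Brownian integrand is $\sigma(\alpha(t-))S(t-)\partial_S\tilde{\mathcal{C}}(t,S(t-),\alpha(t-))$ and the compensated-jump integrand is $\bigl[\tilde{\mathcal{C}}(t,S(t-)e^y,1)-\tilde{\mathcal{C}}(t,S(t-),0)\bigr]\mathbf{1}_{\{\alpha(t-)=0\}}$.

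Armed with both decompositions, I form the predictable quadratic (co)variations regime by regime. When $\alpha(t-)=1$ only the Brownian parts survive and the ratio immediately collapses to the plain delta $\partial_S\tilde{\mathcal{C}}(t,S(t-),1)$, which is the second branch of \eqref{equ:mvdelta}. When $\alpha(t-)=0$ both components contribute: the numerator becomes $\ubar{\sigma}^2 S(t-)^2\partial_S\tilde{\mathcal{C}}(t,S(t-),0) + \lambda S(t-)\int[\tilde{\mathcal{C}}(t,S(t-)e^y,1)-\tilde{\mathcal{C}}(t,S(t-),0)](e^y-1)N'_{u,\delta^2}(y)\,dy$ and the denominator $\ubar{\sigma}^2 S(t-)^2 + \lambda S(t-)^2\int(e^y-1)^2 N'_{u,\delta^2}(y)\,dy$. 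Dividing through by $S(t-)^2$ produces the structural form of the first branch of \eqref{equ:mvdelta}.

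The main obstacle will be matching the two remaining Gaussian jump integrals against the compact closed-form expressions $(e^u-1)[\tilde{\mathcal{C}}(t,S(t-)e^{u+\delta^2/2},1)-\tilde{\mathcal{C}}(t,S(t-),0)]$ and $(e^{u+\delta^2/2}-1)^2$ stated in \eqref{equ:mvdelta}. The elementary identities $\int(e^y-1)N'_{u,\delta^2}(y)dy = \kappa$ and $e^y N'_{u,\delta^2}(y) = e^{u+\delta^2/2}N'_{u+\delta^2,\delta^2}(y)$ (an Esscher-type shift) handle the simple moments, but the off-diagonal term will require either a first-order expansion in the jump dispersion $\delta$ (effectively replacing $y$ by its mean $u$ in the non-linear factor $\tilde{\mathcal{C}}(t,S(t-)e^y,1)$ while keeping the compound-jump factor equal to $\kappa$) or explicit use of the Black--Scholes PIDE satisfied by $\tilde{\mathcal{C}}(\cdot,\cdot,1)$ after the switch. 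Combining this identification with the initial capital already pinned down yields the stated formulas.
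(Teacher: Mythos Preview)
Your approach via the Galtchouk--Kunita--Watanabe projection is equivalent to the paper's: since $\tilde S$ is already a $P$-martingale, the paper likewise writes the hedging error as a sum of Brownian and compensated-Poisson integrals, applies It\^o isometry, and minimises the resulting quadratic in $\pi_t$ pointwise---which is exactly the ratio $d\langle \tilde M,\tilde S\rangle_t/d\langle \tilde S\rangle_t$ you invoke. The martingale decompositions of $\tilde S$ and of $\tilde M=\tilde{\mathcal{C}}(t,S(t),\alpha(t))$ you write down, and the regime-by-regime case split, coincide with what the paper derives.

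The obstacle you flag in your final paragraph is real, and the paper does not resolve it either: its proof stops at the general integral expression
\[
\pi(t)= \frac{\sigma^2(\alpha(t-))\,\partial_S\tilde{\mathcal{C}} + \tilde S(t-)^{-1}\displaystyle\int_\mathbb{R} (e^{\gamma(y,\alpha(t-))}-1)\bigl(\tilde{\mathcal{C}}_t-\tilde{\mathcal{C}}_{t-}\bigr)\,v(dy)}{\sigma^2(\alpha(t-))+\displaystyle\int_\mathbb{R}  (e^{\gamma(y,\alpha(t-))}-1)^2\,v(dy)}
\]
without reducing the jump integrals to the specific constants appearing in \eqref{equ:mvdelta}. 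Your suspicion that an approximation in $\delta$ is implicit is well founded: for $\delta>0$ one has $\int(e^y-1)^2 N'_{u,\delta^2}(y)\,dy = e^{2u+2\delta^2}-2e^{u+\delta^2/2}+1$, which differs from the stated $(e^{u+\delta^2/2}-1)^2$, and the cross term involving $\tilde{\mathcal{C}}(t,S(t-)e^y,1)$ cannot be collapsed to a single evaluation at $e^{u+\delta^2/2}$ without a similar replacement. The closed form in \eqref{equ:mvdelta} is therefore exact only for $\delta=0$---precisely the case used in all of the paper's numerical work---so there is no missing trick for you to find; you have arrived at the same endpoint as the authors.
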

\begin{proof}
    See Appendix \ref{app:MV}.
\end{proof}
\begin{remark}
    Since mean-variance hedging is performed under the risk neutral measure $P$, the initial capital $\pi_0$ for the mean-variance strategy coincides with the RS price \eqref{equ:rs}.
\end{remark}

\section{Hedges on simulated data}\label{sec:simulation}
To investigate the model and hedging strategies, we run a simulation experiment on the regime switching model.
We apply different pricing formulas and different hedging strategies with a specified set of parameters.
To compare with the empirical study of the last section, we use a realistic set of parameters corresponding to HKDUSD.
Therefore, throughout this section, we assume market data
\begin{equation*}
    S_0 = 7.8, \quad K = 7.8, \quad T = 0.5, \quad r_d = 1\%,\quad \text{and} \quad r_f = 1.5\%   
\end{equation*}
and RS model parameters
\begin{equation*}
    \theta = (\ubar{\sigma}, \bar{\sigma}, \lambda, u, \delta) = (0.5\%, 10\%, 0.2, -0.01, 0)
\end{equation*}
We also consider the performance of the hedging strategies conditioned on both scenarios, that is, in the presence or absence of jump.
The hedging error which is defined as
\begin{equation}\label{equ:hedge_err}
   \text{Error}_{hedge} = \left|\frac{\text{Portfolio}(T) - C(T)}{K}\right| \times 100\%
\end{equation}
which measures the performance of hedging at the expiry time $T$.
As seen in the case of quadratic hedging thereafter, we also consider a mean tracking error over the whole hedging period, given by
\begin{equation}\label{equ:hedge_mte}
    \text{MTError}_{hedge} = \frac{1}{n}\sum_{i=1}^n \left|\frac{\text{Portfolio}(t_i) - C(t_i)}{K}\right| \times 100\%
\end{equation}
where $C(t_i)$ represents the RS price at time $t_i = i \ dt$, for $i=1,2,\cdots,130$, and
\begin{equation*}
    \text{Portfolio}(t_i) = \eta^1(t_{i-1}) S_{t_i} + \eta^0(t_{i-1}) e^{(r_d-r_f)/130}  
\end{equation*}
where $\eta^0(\cdot)$ represents the units of the domestic bond and $\eta^1(\cdot)$ denotes the amount invested in the underlying asset.

As for the different hedging strategies, we consider the following five types:
\begin{itemize}
    \item \textbf{BS Delta}: Apply the BS Delta given in \eqref{equ:gkdelta}.
    	This naive strategy takes the lower level volatility parameter $\ubar\sigma$ to be the BS volatility parameter before the regime switching and the upper level volatility parameter $\bar\sigma$ after the regime switching.
    \item \textbf{RS Delta}: Apply the RS Delta given in \eqref{equ:rsdelta}.
    \item \textbf{Approximated RS Delta}: Apply the approximated RS Delta given in \eqref{equ:approx_RSdelta}.
    \item \textbf{Mean-variance with RS Delta}: Apply the Mean-variance hedging portfolio given in \eqref{equ:mvdelta} with RS delta.\footnote{That is, for $\partial \tilde{\mathcal{C}}/\partial S$ in \eqref{equ:approx_RSdelta}.}
    \item \textbf{Mean-var\-iance with approximated RS Delta}: Apply the Mean-va\-riance hedging portfolio given in \eqref{equ:mvdelta} with approximated RS delta.
\end{itemize}
For the last four hedging strategies, we use the RS model parameters $\theta$.

\subsection{Simulation result: No jump}
The results of hedging errors on 10,000 simulation paths with no regime switching scenario are listed in Table \ref{tab:simu_nojump_err}.
\begin{table}[H]
    \begin{center}
        \resizebox{\columnwidth}{!}{
            \begin{tabular}{@{}llrrrrrrr@{}}
                \toprule
                \multicolumn{1}{c}{} &\multicolumn{1}{c}{} & \multicolumn{1}{c}{Mean} & \multicolumn{1}{c}{Std} & \multicolumn{1}{c}{Min}& \multicolumn{1}{c}{q-25\%}& \multicolumn{1}{c}{Median}& \multicolumn{1}{c}{q-75\%}& \multicolumn{1}{c}{Max}\\
                \midrule
                             &BS&0.149\%    &0.009    &0.080\%    &0.144\%    &0.149\%    &0.154\%    &0.209\%\\
                             &RS&0.154\%    &0.022    &0.098\%    &0.135\%    &0.151\%    &0.173\%    &0.228\%\\
$\text{Error}_{hedge}$       &Approx RS&0.154\%    &0.023    &0.098\%    &0.135\%    &0.151\%    &0.174\%    &0.229\%\\
\cmidrule{3-9}
                             &MV&0.216\%    &0.152    &0.000\%    &0.093\%    &0.190\%    &0.311\%    &1.090\%\\
                             &Approx MV&0.212\%    &0.148    &0.000\%    &0.093\%    &0.188\%    &0.305\%    &1.059\%\\
\midrule
                             &BS&0.089\%    &0.009    &0.058\%    &0.083\%    &0.091\%    &0.097\%    &0.107\%\\
                             &RS&0.093\%    &0.009    &0.066\%    &0.086\%    &0.092\%    &0.100\%    &0.118\%\\
$\text{MTError}_{hedge}$     &Approx RS&0.093\%    &0.009    &0.063\%    &0.085\%    &0.092\%    &0.100\%    &0.120\%\\
\cmidrule{3-9}
                             &MV&0.167\%    &0.098    &0.026\%    &0.090\%    &0.143\%    &0.222\%    &0.812\%\\
                             &Approx MV&0.162\%    &0.095    &0.026\%    &0.088\%    &0.139\%    &0.216\%    &0.787\% \\              
                \bottomrule
            \end{tabular}
        }
        \caption{Hedging errors and RMSEs on 10,000 simulated data without jump: BS delta, RS delta, approximated RS delta, Mean-variance and approximated Mean-variance hedging.}
        \label{tab:simu_nojump_err}
    \end{center}
\end{table}
Figure \ref{fig:sim_no_heging_result} and Figure \ref{fig:sim_no_heging_result_rmse} illustrate the distribution of hedging errors for these five hedging strategies.

\begin{figure}[H]
    \centering
    \begin{minipage}{0.495\textwidth}
    \includegraphics[width=\textwidth]{./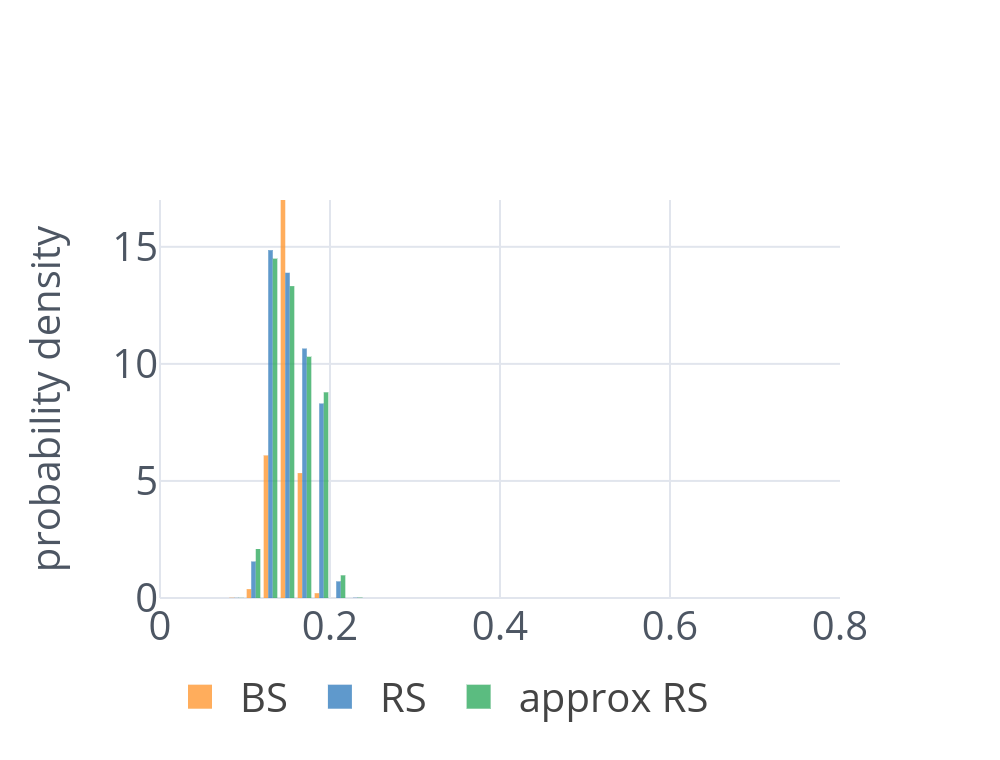}
    \end{minipage}
    \begin{minipage}{0.495\textwidth}
    \includegraphics[width=\textwidth]{./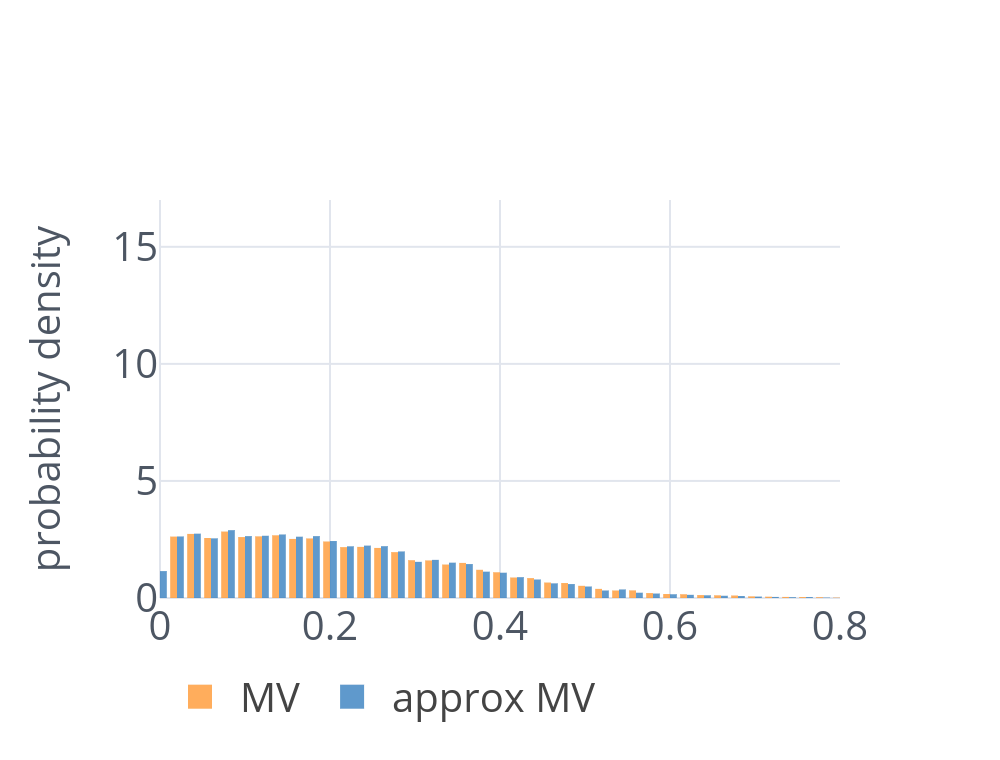}
    \end{minipage}
    \caption{Hedging errors on 10,000 simulated data without jump: BS delta, RS delta, approximated RS delta, Mean-variance and approximated Mean-variance hedging.}
    \label{fig:sim_no_heging_result}
\end{figure}

\begin{figure}[H]
    \centering
    \begin{minipage}{0.495\textwidth}
    \includegraphics[width=\textwidth]{./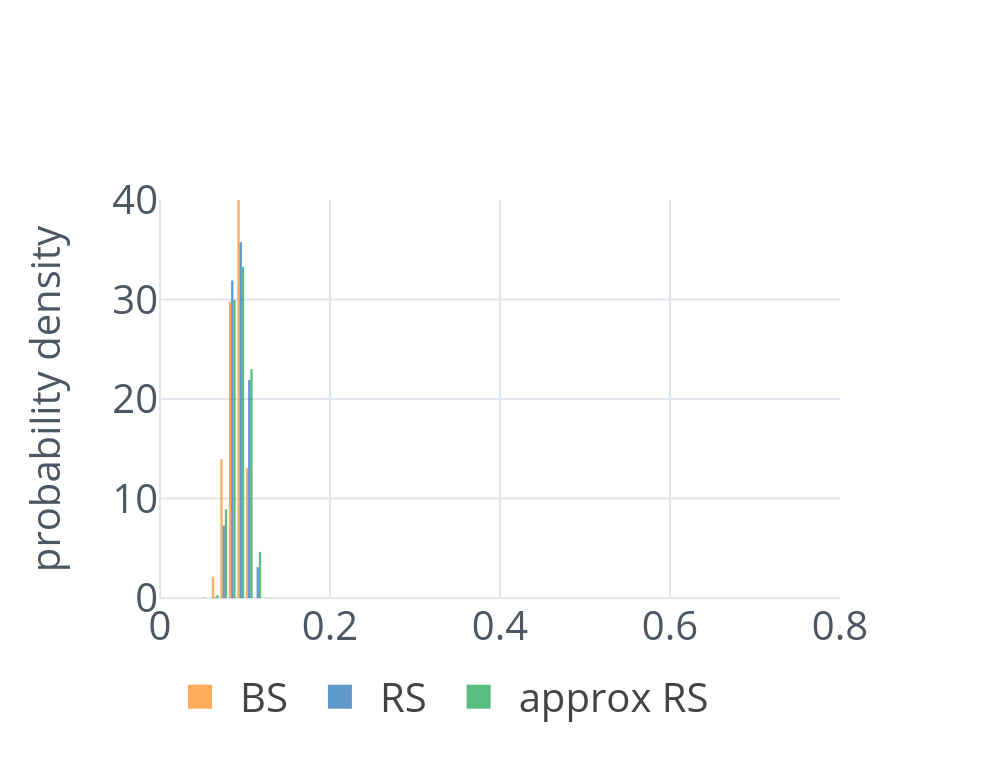}
    \end{minipage}
    \begin{minipage}{0.495\textwidth}
    \includegraphics[width=\textwidth]{./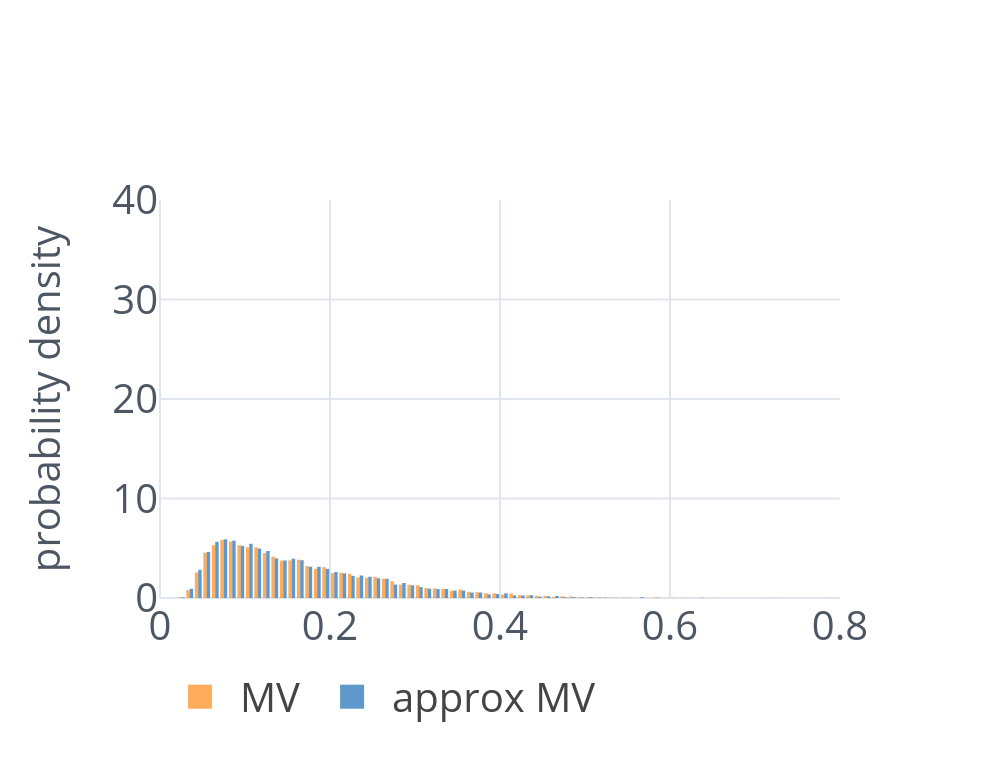}
    \end{minipage}
    \caption{Hedging RMSEs on 10,000 simulated data without jump: BS delta, RS delta, approximated RS delta, Mean-variance and approximated Mean-variance hedging.}
    \label{fig:sim_no_heging_result_rmse}
\end{figure}

\subsection{Simulation results: With jump}
The results of hedging errors on 10,000 simulation paths with no regime switching scenario are listed in Table \ref{tab:simu_jump_err}.
\begin{table}[H]
    \begin{center}
        \resizebox{\columnwidth}{!}{
            \begin{tabular}{@{}llrrrrrrr@{}}
                \toprule
                \multicolumn{1}{c}{} &\multicolumn{1}{c}{} & \multicolumn{1}{c}{Mean} & \multicolumn{1}{c}{Std} & \multicolumn{1}{c}{Min}& \multicolumn{1}{c}{q-25\%}& \multicolumn{1}{c}{Median}& \multicolumn{1}{c}{q-75\%}& \multicolumn{1}{c}{Max}\\
                \midrule
                                &BS          &1.460\%    &0.669    &0.000\%    &0.984\%    &1.565\%    &1.992\%    &3.186\%\\
                                &RS          &1.508\%    &0.687    &0.000\%    &1.016\%    &1.618\%    &2.059\%    &3.297\%\\
$\text{Error}_{hedge}$          &Approx RS    &1.511\%    &0.688    &0.000\%    &1.018\%    &1.620\%    &2.062\%    &3.301\%\\
\cmidrule{3-9}
                                &MV          &0.911\%    &0.628    &0.000\%    &0.389\%    &0.829\%    &1.328\%    &3.954\%\\
                                &Approx MV    &0.921\%    &0.628    &0.000\%    &0.398\%    &0.845\%    &1.339\%    &3.914\%\\
\midrule
                                &BS          &0.937\%    &0.670    &0.068\%    &0.315\%    &0.832\%    &1.480\%    &2.764\%\\
                                &RS          &0.972\%    &0.697    &0.063\%    &0.325\%    &0.858\%    &1.546\%    &2.748\%\\
$\text{MTError}_{hedge}$        &Approx RS   &0.974\%    &0.698    &0.060\%    &0.325\%    &0.861\%    &1.548\%    &2.753\%\\
\cmidrule{3-9}
                                &MV          &0.612\%    &0.507    &0.021\%    &0.221\%    &0.444\%    &0.882\%    &3.705\%\\
                                &Approx MV   &0.619\%    &0.510    &0.021\%    &0.221\%    &0.451\%    &0.896\%    &3.668\%\\

                \bottomrule
            \end{tabular}
        }
        \caption{Hedging errors and RMSEs on 10,000 simulated data with jump: BS delta, RS delta, approximated RS delta, Mean-variance and approximated Mean-variance hedging.}
        \label{tab:simu_jump_err}
    \end{center}
\end{table}
Figure \ref{fig:sim_heging_result} and Figure \ref{fig:sim_heging_result_rmse} illustrate the distribution of hedging errors for these five hedging strategies.

\begin{figure}[H]
    \centering
    \begin{minipage}{0.495\textwidth}
    \includegraphics[width=\textwidth]{./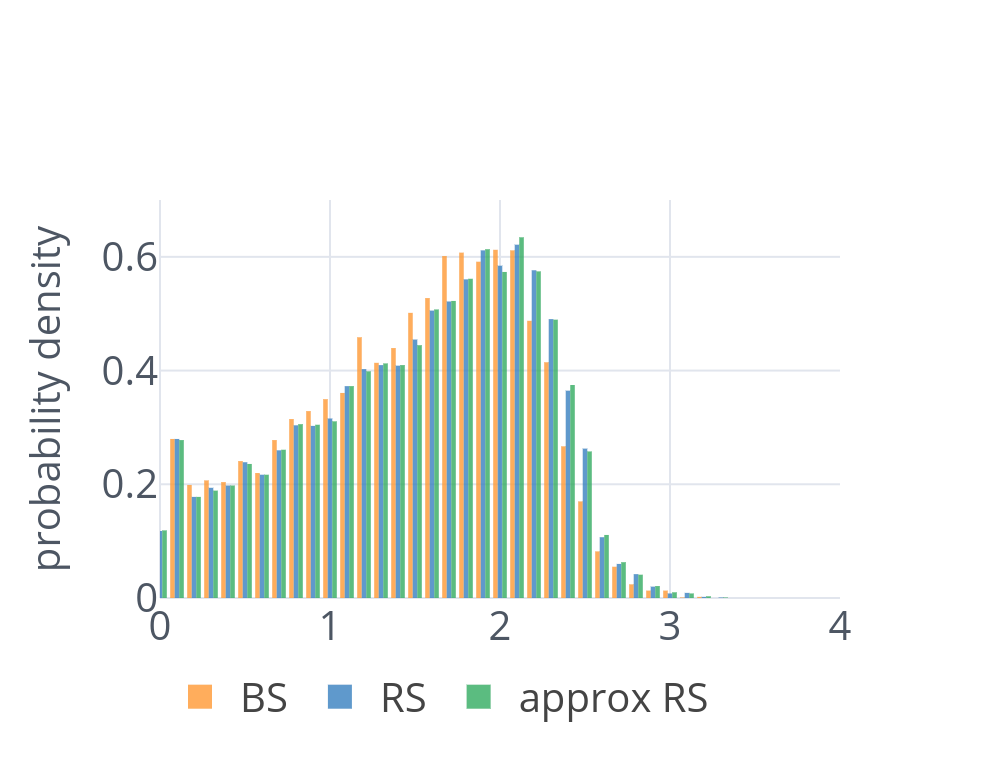}
    \end{minipage}
    \begin{minipage}{0.495\textwidth}
    \includegraphics[width=\textwidth]{./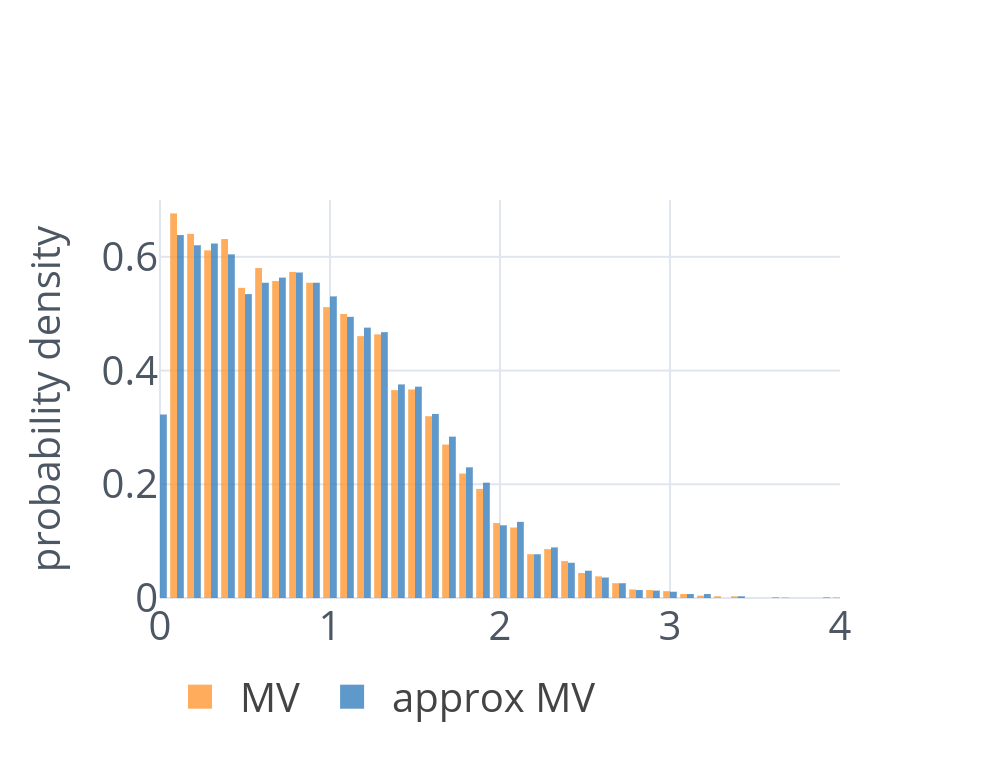}
    \end{minipage}
    \caption{Hedging errors on 10,000 simulated data with jump: BS delta, RS delta, approximated RS delta, Mean-variance and approximated Mean-variance hedging.}
    \label{fig:sim_heging_result}
\end{figure}

\begin{figure}[H]
    \centering
    \begin{minipage}{0.495\textwidth}
    \includegraphics[width=\textwidth]{./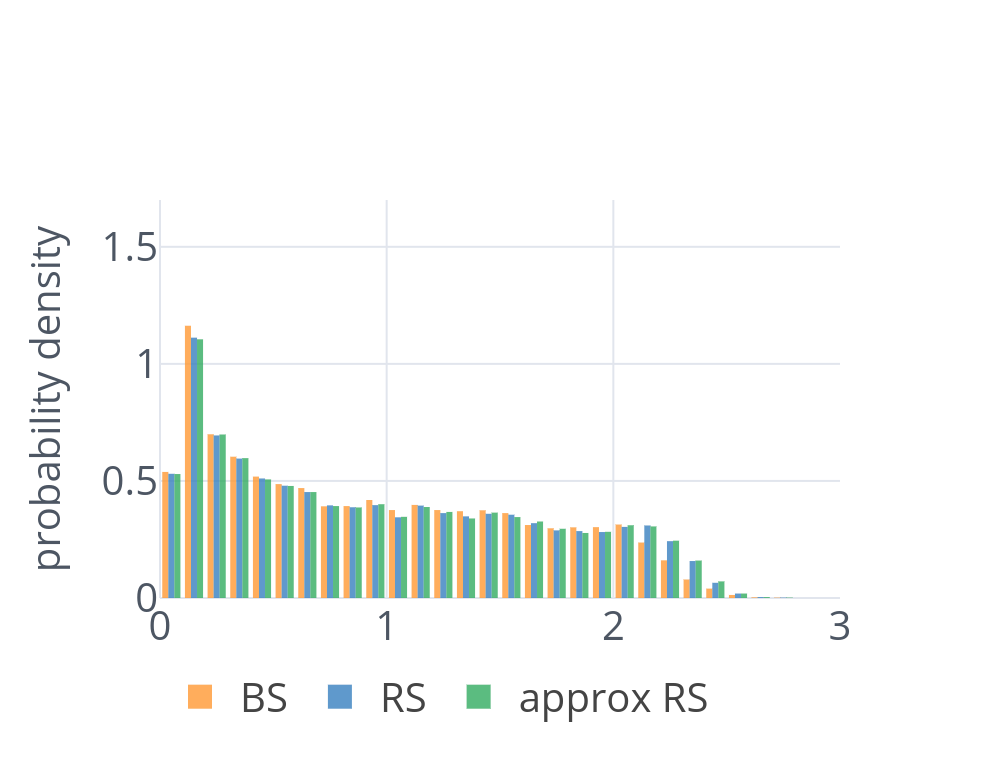}
    \end{minipage}
    \begin{minipage}{0.495\textwidth}
    \includegraphics[width=\textwidth]{./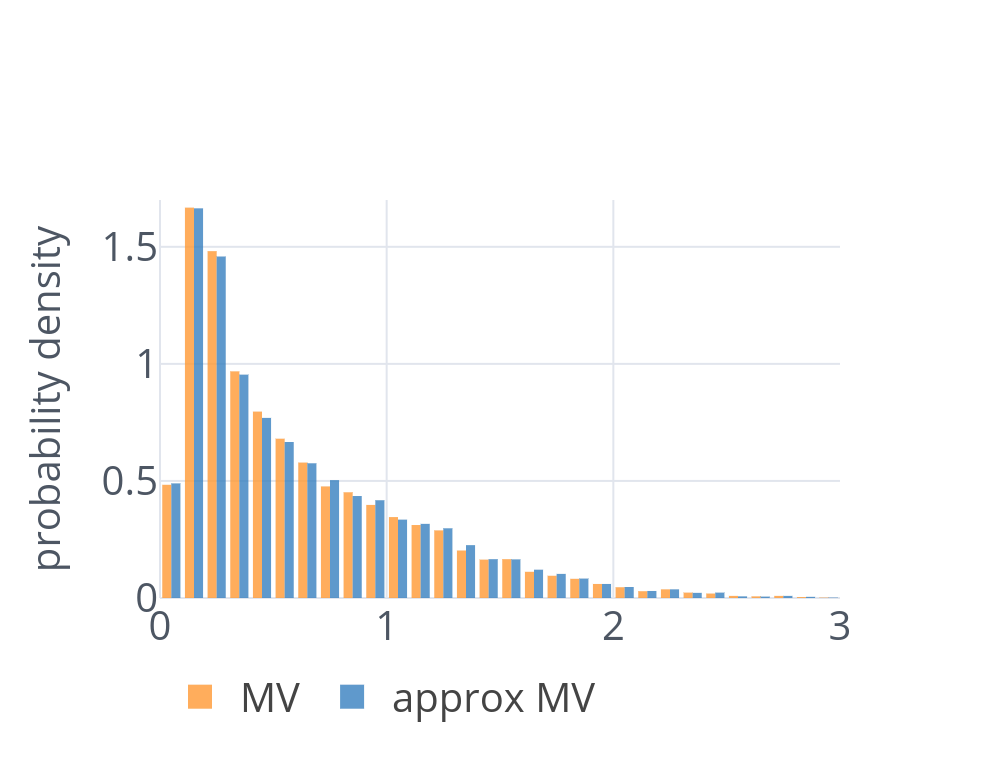}
    \end{minipage}
    \caption{Hedging RMSEs on 10,000 simulated data with jump: BS delta, RS delta, approximated RS delta, Mean-variance and approximated Mean-variance hedging.}
    \label{fig:sim_heging_result_rmse}
\end{figure}

\subsection{Computational time}
All the computations were performed using Python on an Intel(R) Core(TM) i7-6500U CPU @ 2.50GHz PC with 8GB of RAM without particular optimization scheme.
Computational time of these five different delta-hedging is listed in the Table \ref{tab:delta_time}.
Note that the computational time is tested on one single path without jump, which involves 130 times computation for deltas on each path.
\begin{table}[H]
    \begin{center}
        \resizebox{\columnwidth}{!}{
            \begin{tabular}{@{}llrrrr@{}}
                \toprule
                \multicolumn{1}{c}{} &\multicolumn{1}{c}{BS} & \multicolumn{1}{c}{RS} & \multicolumn{1}{c}{Approx RS} & \multicolumn{1}{c}{Mean-variance}& \multicolumn{1}{c}{Approx Mean-variance}\\
                \midrule
                Computational time on Deltas    &0.0095 sec  &5.1300 sec  &0.0551 sec &12.600 sec &0.176 sec  \\
                \bottomrule
            \end{tabular}
        }
        \caption{computational time.}
        \label{tab:delta_time}
    \end{center}
\end{table}
The run times show that the approximated RS delta hedge is significantly faster than the exact one by a factor of about 93.
Moreover, the mean-variance hedge with approximated RS is also significantly faster than the mean-variance with RS by a factor of about 71.

\section{Calibration performance}\label{sec:calibration}
In this section, we present the empirical performance in terms of accuracy and computational time of the calibration to market quotations based on the regime switching model.

\subsection{Dataset}
We consider the HKDUSD foreign exchange option daily data quotations for the period 2014-01-01 to 2019-01-10, downloaded from Bloomberg.
We collect daily quotations for each maturity: 1D, 1W, 1M, 3M, 6M, and 1Y.
Each maturity has five quotations: $\sigma_{10RR}$, $\sigma_{25RR}$, $\sigma_{10BF}$, $\sigma_{25BF}$, and $\sigma_{ATM}$.
Also, for each of the foreign and domestic, we collect the interest rate for each of these maturities.
We define the maturity $T_{1Y} = 1$ and subsequently, $T_{6M} = 1/2$, $T_{3M} = 1/4$, $T_{1M} = 1/12$, $T_{1W} = 1/52$ and $T_{1D} = 1/260$.
Table \ref{tab:hkdusd_data}, shows a snapshot of the available data for maturity 1M.
\begin{table}[H]
    \begin{center}
        \resizebox{\columnwidth}{!}{
            \begin{tabular}{@{}ccccccccccc@{}}
                \toprule
                \multicolumn{1}{c}{Date} & \multicolumn{1}{c}{S} & \multicolumn{1}{c}{Rd} & \multicolumn{1}{c}{Rf}& \multicolumn{1}{c}{ATM}& \multicolumn{1}{c}{25RR}& \multicolumn{1}{c}{25BF}& \multicolumn{1}{c}{10RR}& \multicolumn{1}{c}{10BF}\\

                \midrule
                2014-01-01 & 7.75407 & 0.005488 & 0.003482 & 0.005000 & -0.004250 & 0.003125 & -0.007750 & 0.012542\\
                2014-01-02 & 7.75398 & 0.005475 & 0.003475 & 0.005000 & -0.004250 & 0.003117 & -0.007750 & 0.012542\\
                2014-01-03 & 7.75405 & 0.005464 & 0.003465 & 0.005000 & -0.004250 & 0.003108 & -0.007583 & 0.012542\\
                \vdots     & \vdots  & \vdots   & \vdots   & \vdots   & \vdots    & \vdots   & \vdots    & \vdots\\
                2019-01-09 & 7.83774 & 0.023392 & 0.028570 & 0.008750 & -0.004967 & 0.001550 & -0.009533 & 0.006167\\
                2019-01-10 & 7.83847 & 0.023219 & 0.028609 & 0.008717 & -0.004917 & 0.001550 & -0.009317 & 0.006117\\
                \bottomrule
            \end{tabular}
        }
        \caption{HKDUSD FX data with maturity 1M provided by Bloomberg}
        \label{tab:hkdusd_data}
    \end{center}
\end{table}
Concerning the conventions of quotations on the FX market, we refer to \citet{clark2011} for more details.
We adopt the following notation: $pa$, for premium adjusted valuation, as well as $s$ and $f$ for the spot and forward valuation, respectively.
Bloomberg's conventions of HKDUSD are given in Table \ref{tab:hkdusd_quot}.
\begin{table}[H]
    \begin{center}
        \begin{tabular}{@{}lcccccccccc@{}}
            \toprule
            \multicolumn{1}{c}{} & \multicolumn{1}{c}{1D} & \multicolumn{1}{c}{1W} & \multicolumn{1}{c}{1M}& \multicolumn{1}{c}{3M}& \multicolumn{1}{c}{6M}& \multicolumn{1}{c}{1Y}& \multicolumn{1}{c}{2Y}\\

            \midrule
            convention       & s-pa&	s-pa&	s-pa&	s-pa&	s-pa&	f-pa&	f-pa\\
            \bottomrule
        \end{tabular}
        \caption{HKDUSD conventions in Bloomberg per maturity.}
        \label{tab:hkdusd_quot}
    \end{center}
\end{table}
Also, according to the conventions from Bloomberg's dataset, $\sigma_{10RR}$ and $\sigma_{10BF}$ (as well as $\sigma_{25RR}$ and $\sigma_{25BF}$) are related to delta-volatility pairs via
\begin{align*}
    \sigma_{10C} &= \sigma_{ATM} + \sigma_{10BF} + \frac{\sigma_{10RR}}{2} &
    \sigma_{10P} &= \sigma_{ATM} + \sigma_{10BF} - \frac{\sigma_{10RR}}{2}.	
\end{align*}
The strikes $K_{ATM}$, $K_{10P}$,$ K_{10C}$, $K_{25P}$, and $K_{25C}$ can then be recovered according to the conventions, see \citep{clark2011}.
For each maturity, we therefore have five strike-volatility pairs,
\begin{equation*}
(K_{10P}, \sigma_{10P}), \quad (K_{25P}, \sigma_{25P}), \quad (K_{ATM}, \sigma_{ATM}), \quad (K_{25C}, \sigma_{25C}), \quad (K_{10C}, \sigma_{10C})
\end{equation*}
as a basis for the model calibration.

\subsection{Single Maturity Calibration}
Given a generic model $\sigma_{model}(K,\theta)$ for the volatility smile, the calibration search for an optimal parameter $\theta^\ast$ that minimize the sum of squared differences between observed and model-based implied volatilities using classical least-square method.
We measure the performance of the calibration by mean error and root mean square error
\begin{align}
    \text{ME} & = \frac{1}{5}\sum_{i\in \mathcal{I}}\left|\frac{\sigma_{model}(K_{i};\theta^*) - \sigma_{i}}{\sigma_{i}}\right|\times100\%\\
    \text{RMSE} & = \sqrt{\frac{1}{5}\sum_{i\in \mathcal{I}}\left(\frac{\sigma_{model}(K_{i};\theta^*) - \sigma_{i}}{\sigma_{i}}\right)^2}\times100\%\label{equ:calibration_err}
\end{align}
where $\sigma_{model}(K_{i};\theta^*)$ and $\sigma_{i}$ denote the model-based implied volatilities and market quotations for $i\in \mathcal{I} = \{10P,25P, ATM, 25C, 10C\}$.

In the following we compare the performance of the implied volatility generated by 
\begin{itemize}
    \item SABR model\footnote{By \citet{hagan2002}, see Appendix \ref{app:SABR}.} (Benchmark);
    \item RS model;
    \item Approximated RS model;
\end{itemize}
Table \ref{tab:calibration_me} and Table \ref{tab:calibration_rmse} show the mean errors and root mean square errors of calibrations for different maturities.
Across all maturities, the RS calibration gives the smallest error in the mean as well as the median sense.
Note that, in these two table, the errors are skewed as the median and 75\% quantile are showing with respect to the mean.
Hence, up to a couple of outliers, RS calibration performs very good.
Compared with the SABR calibration benchmark, the error is generally smaller by a factor of 2 to 4 in the median sense.
As for the approximated RS calibration, the accuracy is particularly poor, around 15\% since approximated RS price is usually higher than the RS one.

\begin{table}[H]
    \begin{center}
        \resizebox{\columnwidth}{!}{
            \begin{tabular}{@{}llrrrrrrr@{}}
                \toprule
                \multicolumn{1}{c}{} &\multicolumn{1}{c}{} & \multicolumn{1}{c}{Mean} & \multicolumn{1}{c}{Std} & \multicolumn{1}{c}{Min}& \multicolumn{1}{c}{q-25\%}& \multicolumn{1}{c}{Median}& \multicolumn{1}{c}{q-75\%}& \multicolumn{1}{c}{Max}\\
                \midrule
                                      & SABR      & 1.58\%          & 0.73 & 0.02\%  & 1.21\%  & 1.01\%          & 2.12\%  & 3.40\%\\
                1M                    & RS        & \textbf{0.77\%} & 0.59 & 0.00\%  & 0.29\%  & \textbf{0.66\%} & 1.20\%  & 4.62\%\\
                                      & Approx RS & 15.27\%         & 2.74 & 7.69\%  & 12.89\% & 16.03\%         & 17.81\% & 18.94\%\\
                    \midrule
                					  & SABR      & 1.96\%          & 0.79 & 0.34\%  & 1.46\%  & 2.10\%          & 2.50\%  & 6.46\%\\
                3M                    & RS        & \textbf{0.88\%} & 0.70 & 0.00\%  & 0.44\%  & \textbf{0.71\%} & 1.18\%  & 6.18\%\\
                                      & Approx RS & 15.62\%         & 2.11 & 10.39\% & 13.66\% & 15.96\%         & 17.58\% & 18.79\%\\
                    \midrule
                                      & SABR      & 2.04\%          & 0.63 & 0.46\%  & 1.51\%  & 2.18\%          & 2.53\%  & 3.74\%\\
                6M                    & RS        		& \textbf{0.75\%} & 0.74 & 0.00\%  & 0.30\%  & \textbf{0.54\%} & 0.90\%  & 4.57\%\\
                                      & Approx RS & 15.60\%         & 1.74 & 11.59\% & 13.94\% & 15.68\%         & 16.92\% & 18.67\%\\
                    \midrule
                                      & SABR      & 2.13\%          & 0.71 & 0.40\%  & 1.53\%  & 2.32\%          & 2.64\%  & 4.10\%\\
                1Y                    & RS        & \textbf{0.90\%} & 0.96 & 0.00\%  & 0.31\%  & \textbf{0.55\%} & 1.13\%  & 4.90\%\\
                                      & Approx RS & 15.68\%         & 1.50 & 11.54\% & 14.82\% & 15.62\%         & 16.92\% & 18.56\%\\
                    \bottomrule
            \end{tabular}
        }
        \caption{Mean errors for calibration}
        \label{tab:calibration_me}
    \end{center}
\end{table}

\begin{table}[H]
    \begin{center}
        \resizebox{\columnwidth}{!}{
            \begin{tabular}{@{}llrrrrrrr@{}}
                \toprule
                \multicolumn{1}{c}{} &\multicolumn{1}{c}{} & \multicolumn{1}{c}{Mean} & \multicolumn{1}{c}{Std} & \multicolumn{1}{c}{Min}& \multicolumn{1}{c}{q-25\%}& \multicolumn{1}{c}{Median}& \multicolumn{1}{c}{q-75\%}& \multicolumn{1}{c}{Max}\\
                \midrule
                                      & SABR      & 1.92\%          & 0.89 & 0.02\%  & 1.21\%  & 1.97\%          & 2.58\%  & 4.34\%\\
                  1M                  & RS        & \textbf{0.99\%} & 0.78 & 0.00\%  & 0.35\%  & \textbf{0.82\%} & 1.53\%  & 6.13\%\\
                                      & Approx RS & 15.52 \%        & 2.69 & 7.80\%  & 13.29\% & 16.29\%         & 17.97\% & 19.08\%\\
                    \midrule
                                      & SABR      & 2.41\%          & 1.01 & 0.41\%  & 1.69\%  & 2.60\%          & 3.14\%  & 7.55\%\\
                  3M                  & RS        & \textbf{1.14\%} & 0.92 & 0.00\%  & 0.55\%  & \textbf{0.92\%} & 1.56\%  & 7.78\%\\
                                      & Approx RS & 15.94\%         & 2.10 & 10.64\% & 14.07\% & 16.49\%         & 17.81\% & 19.11\%\\
                    \midrule
                                      & SABR      & 2.46\%          & 0.78 & 0.61\%  & 1.81\%  & 2.62\%          & 3.07\%  & 4.32\%\\
                  6M                  & RS        & \textbf{0.98\%} & 0.98 & 0.00\%  & 0.38\%  & \textbf{0.71\%} & 1.16\%  & 6.03\%\\
                                      & Approx RS & 15.94\%         & 1.71 & 11.92\% & 14.31\% & 16.09\%         & 17.30\% & 18.95\%\\
                    \midrule
                                      & SABR      & 2.57\%          & 0.87 & 0.53\%  & 1.95\%  & 2.75\%          & 3.16\%  & 4.99\%\\
                  1Y                  & RS        & \textbf{1.19\%} & 1.26 & 0.00\%  & 0.41\%  & \textbf{0.71\%} & 1.45\%  & 6.48\%\\
                                      & Approx RS & 16.07\%         & 1.40 & 12.76\% & 15.03\% & 16.03\%         & 17.26\% & 18.88\%\\
                    \bottomrule
            \end{tabular}
        }
        \caption{Root mean square errors for calibration}
        \label{tab:calibration_rmse}
    \end{center}
\end{table}
Figure \ref{fig:three_smile_dash} shows on a sample day, the resulting smiles of the three calibration methods together with the quotations.
The volatility smile given by RS calibration procedure do not only fits the market quotations very well but also shows the "bird" shape of the smile.
\begin{figure}[H]
    \centering
    \includegraphics[width=\textwidth]{./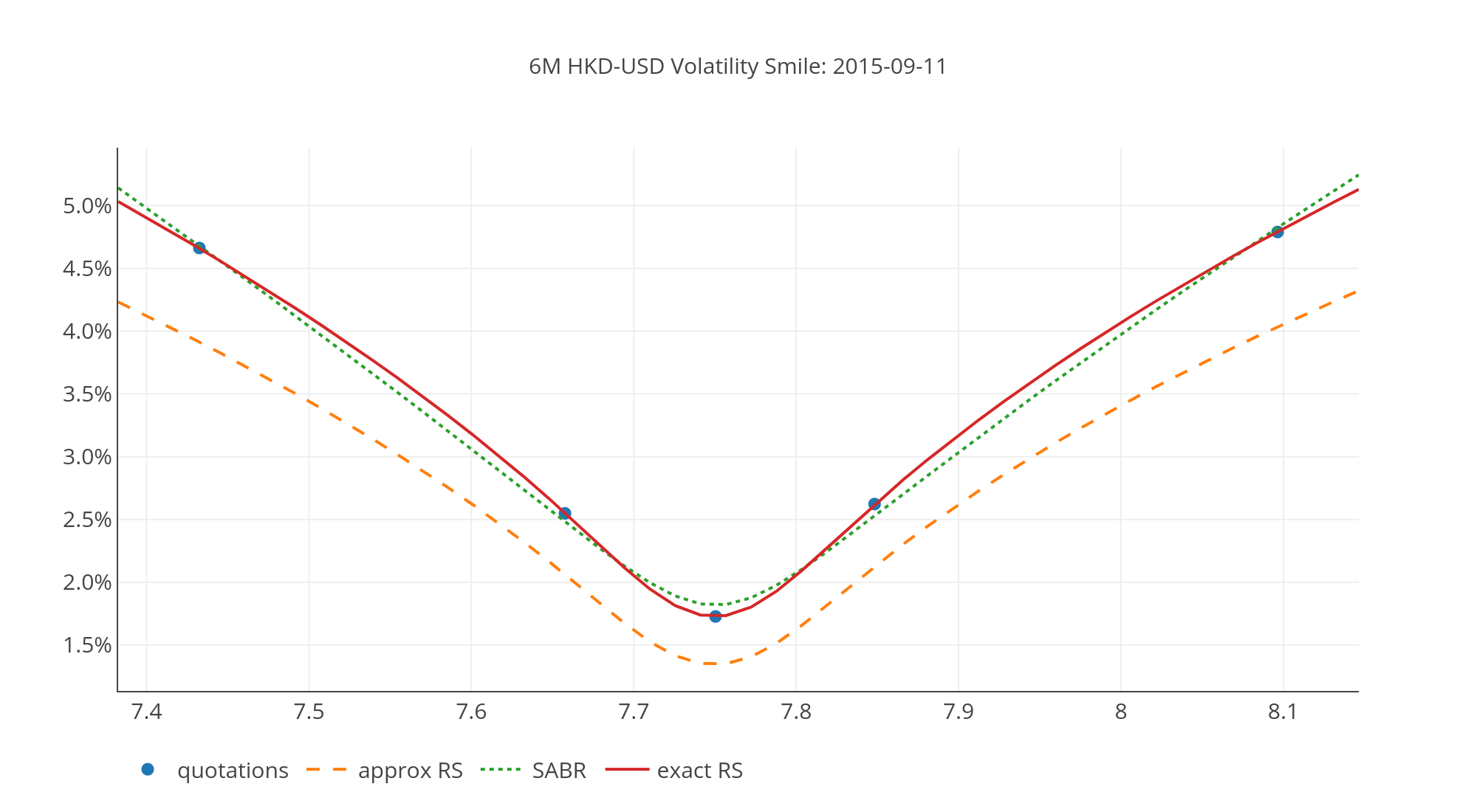}
    \caption{Calibration result: SABR, exact RS solution and approximated RS solution.}
    \label{fig:three_smile_dash}
\end{figure}

Even if the approximated RS calibration is computationally faster, the accuracy is not satisfactory as the error is constantly large.
We therefore exclude approximated RS pricer for parameter calibration.

Figure \ref{fig:calibration_error} shows that the mean errors of the calibrated implied volatilities for the SABR and RS parametrization for maturities at 6M and 1Y over time.
The RS model outperforms the SABR one almost constantly over time.
\begin{figure}[H]
    \centering
    \includegraphics[width=\textwidth]{./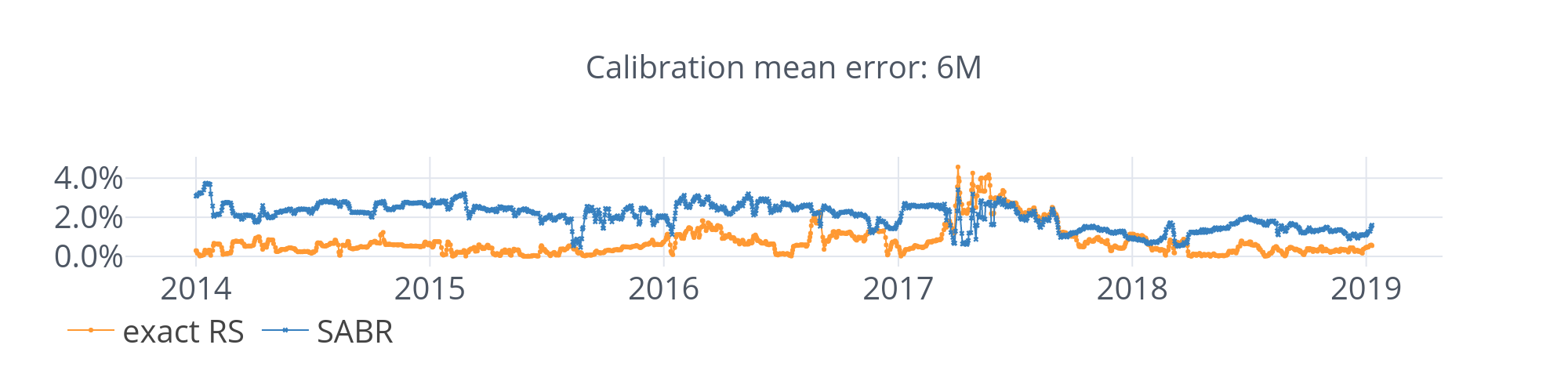}
\end{figure}
\begin{figure}[H]
    \centering
    \includegraphics[width=\textwidth]{./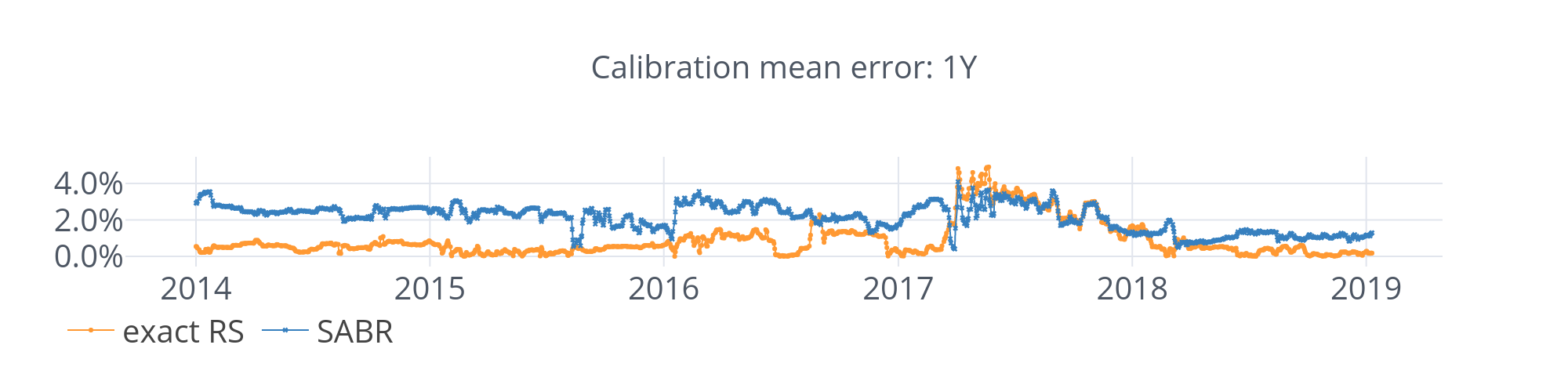}
    \caption{Comparison of the calibration mean errors: SABR and RS. }
    \label{fig:calibration_error}
\end{figure}
Furthermore, in a SABR framework, the volatility dynamic $\sigma(T)$ at given time $T$ follows a log-normal distribution $Log\mathcal{N}\left(\ln(a)-\frac{1}{2}b^2T, b^2T\right)$.
Figure \ref{SABR_vol_fig} shows a consequent discrepancy between the SABR theoretical\footnote{From the SABR calibrated parameters where $a$ and $b$ are taken in the mean sense.} and the empirical distribution of the volatility.
Hence aside accuracy issue, SABR model does not seem to be an appropriate choice as a model for the pegged market.
\begin{figure}[H]
    \centering
    \includegraphics[width=0.85\textwidth]{./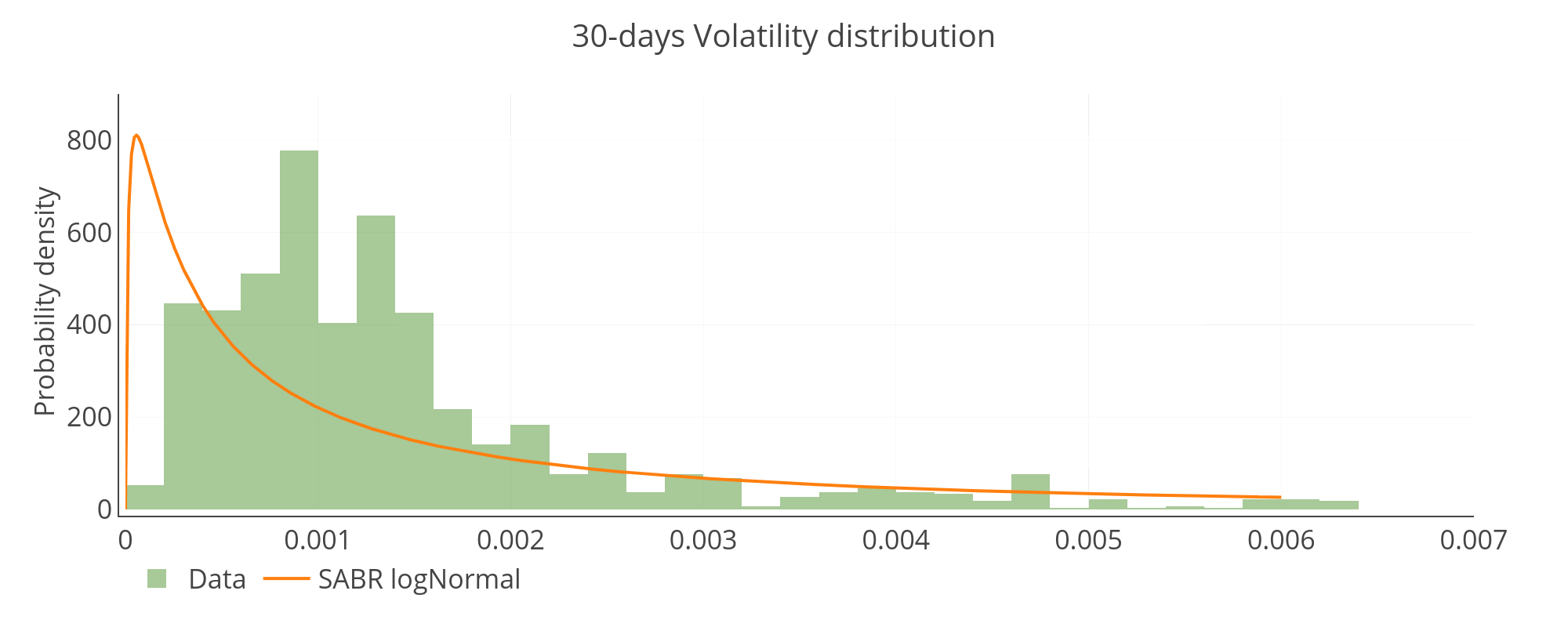}
    \caption{Theoretical SABR versus 30 days empirical volatility distribution}
    \label{SABR_vol_fig}
\end{figure}
Thus, for the reasons provided there above, we rule out SABR model as well as approximated RS model as a calibration approach in what follows.

\subsection{Surface calibration}
For hedging purposes, an interpolation of the parameters between the given tenors is necessary to obtain the corresponding implied volatility surface.
We proceed as follows:
At date $t_0$, we are given quotations $\sigma_{i}(t_0,T_{\zeta})$ for $i \in \mathcal{I}=\{10P, 25P, ATM, 25C, 10C\}$ and $\zeta\in\mathcal{T} = \{1D, 1W, 1M, 3M, 6M\}$.
For $T_{3M} \leq t \leq T_{6M}$, we set the quotations $\sigma_i(t_0,t)$ for maturity $t$ by interpolation according to
\begin{equation}\label{equ:interp_quots}
    \sigma_i(t_0,t) = \sqrt{\frac{T_{6M}\left( t-T_{3M} \right)\sigma^2_i(t_0, T_{6M}) +T_{3M}\left( T_{6M}-t \right)\sigma_i^2(t_0, T_{3M})}{t\left( T_{6M} - T_{3M} \right)}}
\end{equation}
for any $i$ in $\mathcal{I}$.
The same procedure is applied for $T_{1M}\leq t\leq T_{3M}$, $T_{1W}\leq t\leq T_{1M}$ and $T_{1D}\leq t\leq T_{1W}$ to obtain $\sigma_i(t_0,t)$ for all $t\leq T_{6M}$.
With these five quotations for all $t$ at hand, we can recover the five corresponding strikes $K_i(t_0,t)$ and run the single maturity calibration procedure for maturity $t$ from the previous subsection.
The resulting parametrization is denoted by
\begin{equation*}
	\theta^*(t_0,t)\quad\text{for}\quad 0\leq t\leq T_{6M}.
\end{equation*}

Figure \ref{fig:Vol_surface} shows a corresponding sample volatility surface for the date 2014-01-01.
\begin{figure}[H]
    \centering
    \includegraphics[width=0.95\textwidth]{./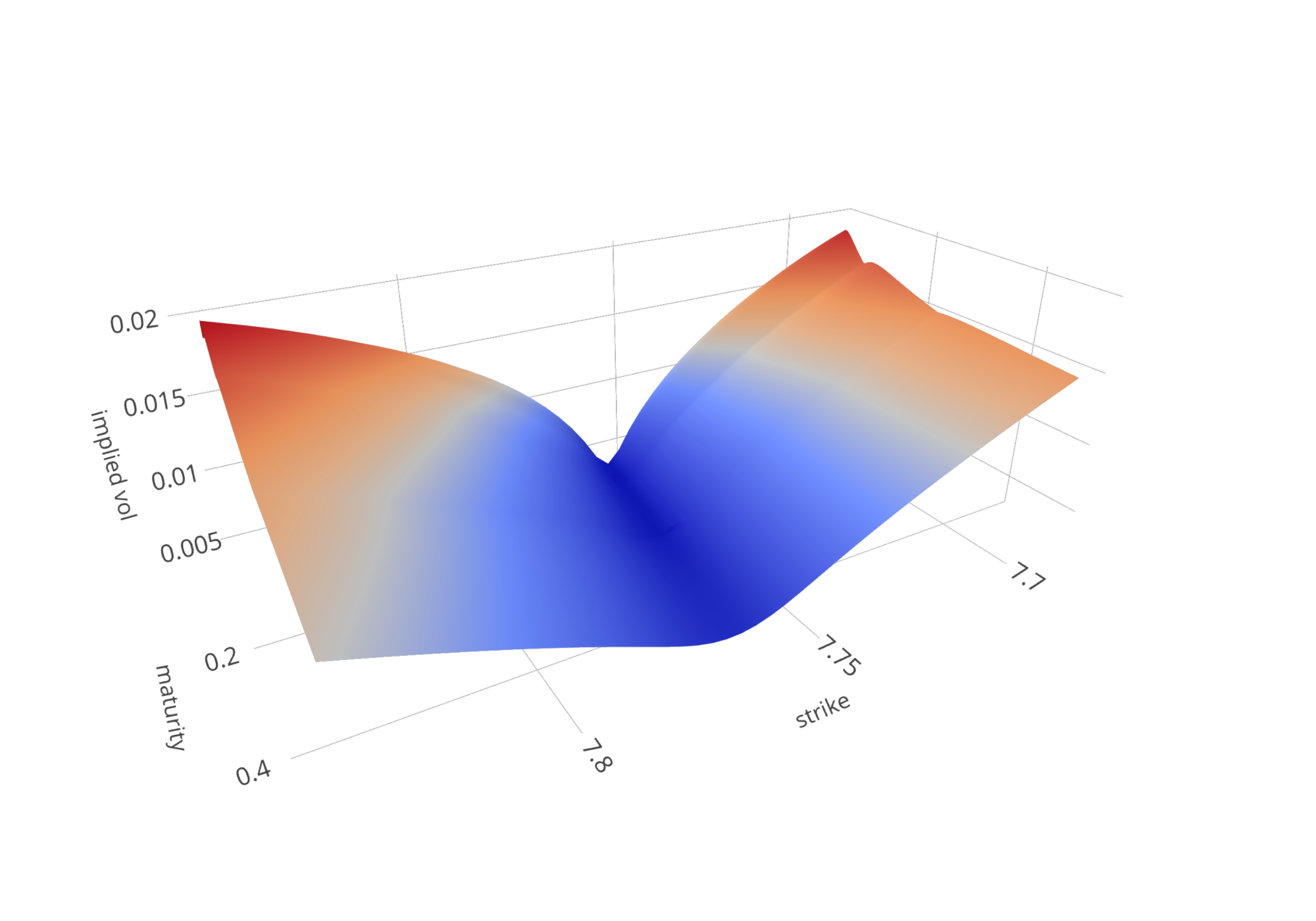}
    \caption{Volatility surface of HKDUSD option at 2014-01-01}
    \label{fig:Vol_surface}
\end{figure} 

\subsection{Computational time}
RS method which relies on a implicit computation of the implied volatility from a pricing formula that involves an integration is naturally slower than the more explicit SABR models or approximated RS.
Due to the integration part in the pricing formula, we therefore compare here after the computational time for the calibration from martingale approach versus Fourier.
The following table \ref{tab:recalibration_compute_time} shows the computational time for a parameter-calibration procedure for one-day volatility surface on 2014-01-01, therefore 131 single maturity calibrations.
All the computations were performed using Python on an Intel(R) Core(TM) i7-6500U CPU @ 2.50GHz PC with 8GB of RAM without particular optimization scheme.
The two methods use the same implied volatility function which is a plain root-finding algorithm.
\begin{table}[!ht]
    \begin{center}
        \begin{tabular}{@{}lcccr@{}}
            \toprule
         && \multicolumn{1}{c}{Martingale approach} && \multicolumn{1}{c}{Fourier approach} \\
         \midrule
            Volatility surface calibration  && 52 min 49.08 sec 	 && 9 min 59.48 sec 	 \\
            \bottomrule
        \end{tabular}
        \caption{Computational time for the calibration of $\theta^\ast(t_0,t)$, $0\leq t\leq T_{6M}$, corresponding to 131 single maturity calibrations.}
        \label{tab:recalibration_compute_time}
    \end{center}
\end{table}
The run times show that the Fourier approach is significantly faster than the martingale one by a factor of about 5.2.
Note that these 131 single maturity calibrations can be performed in parallel, amounting here for about 4 seconds.
Hence with a C++ implementation and optimization, volatility surface in the RS model with Fourier approach could be used in real time.

\section{Hedges on real data}\label{sec:real_hedge}
For the reasons presented in the previous section, we use from now on the RS model with Fourier approach to calibrate the RS parameterization.

\subsection{Hedge portfolio}
Every day, for the time period from 2014-01-01 to 2018-07-12, a 6-month ATM call is written and hedged until maturity -- the last written call expiring on 2019-01-10.
Throughout, we normalize 6 months maturity to $130$ days, which amounts therefore to 1182 delta hedges to perform.
We denote these dates by $t_n$, for $n=1,2,\cdots,1182$.
Under this discrete-time setting, timestamp $t$ is taking values in $\mathbb{T} = \{0, dt, 2dt,\cdots,130dt\}$ where $dt=1/260$.
For $n$th hedge, at date $t_n$, the trader writes a 6-month ATM call at strike price $K_{ATM}(t_n,T_{6M})$.
We compare the performance between the following three different strategies -- mean-variance being already ruled out from numerical simulation results -- where at date $t_{n+m}$, $m=0,1,\cdots,130$, the delta hedges are given by
\begin{itemize}
    \item \textbf{BS delta:} Applying the BS delta given in \eqref{equ:gkdelta}.
        This hedge takes 
        \begin{equation*}
            \sigma_{RS}\left(K_{ATM}\left(t_n,T_{6M}\right), \theta^*\left(t_{n+m},T_{6M} - mdt\right)\right)  
        \end{equation*}
        to be the BS volatility parameter.
    \item \textbf{RS delta hedging:} Applying the RS Delta given in \eqref{equ:rsdelta} with parameter
        \begin{equation*}
            \theta^\ast\left(t_{n+m}, T_{6M}-mdt\right)
        \end{equation*}
    \item \textbf{Approximated RS delta:} Applying the approximated RS delta given in \eqref{equ:approx_RSdelta} with parameter
        \begin{equation*}
            \theta^\ast\left(t_{n+m}, T_{6M}-mdt\right).
        \end{equation*}
\end{itemize}
where $\theta^\ast(t_{n+m}, s)$ for $0\leq s \leq T_{6M}$ is the calibrated RS parametrisation from the previous section using the RS model with Fourier approach.

\subsection{Result}

As for the hedging error, we still use \eqref{equ:hedge_err} and \eqref{equ:hedge_mte}.
The results of hedging errors are reported in Table(\ref{tab:heging_result}) and Figure(\ref{fig:heging_result}).
There is no much difference between these three strategies since the mean of the hedging errors are around $0.25\%$.

\begin{table}[H]
    \begin{center}
        \resizebox{\columnwidth}{!}{
            \begin{tabular}{@{}llrrrrrrr@{}}
                \toprule
                \multicolumn{1}{c}{} &\multicolumn{1}{c}{} & \multicolumn{1}{c}{Mean} & \multicolumn{1}{c}{Std} & \multicolumn{1}{c}{Min}& \multicolumn{1}{c}{q-25\%}& \multicolumn{1}{c}{Median}& \multicolumn{1}{c}{q-75\%}& \multicolumn{1}{c}{Max}\\
                \midrule
                &BS&0.252\%    &0.168    &0.001\%    &0.077\%    &0.239\%    &0.385\%    &0.746\%\\
$\text{Error}_{hedge}$&RS&0.291\%    &0.214   &0.000\%    &0.071\%    &0.317\%    &0.459\%    &0.863\%\\
                &Approx RS&0.291\%    &0.213   &0.000\%    &0.071\%    &0.318\%    &0.457\%    &0.863\%\\

                \midrule

                &BS&0.130\%    &0.098    &0.011\%    &0.046\%    &0.121\%    &0.186\%    &0.524\%\\
$\text{MTError}_{hedge}$&RS&0.152\%    &0.115    &0.010\%    &0.048\%    &0.134\%    &0.224\%    &0.588\%\\
                &Approx RS&0.152\%    &0.116    &0.010\%    &0.048\%    &0.135\%    &0.224\%    &0.590\%\\
                
                \bottomrule
            \end{tabular}
        }
        \caption{Hedging errors: HKDUSD 6M ATM Call}
        \label{tab:heging_result}
    \end{center}
\end{table}

\begin{figure}[H]
    \centering
    \includegraphics[width=\textwidth]{./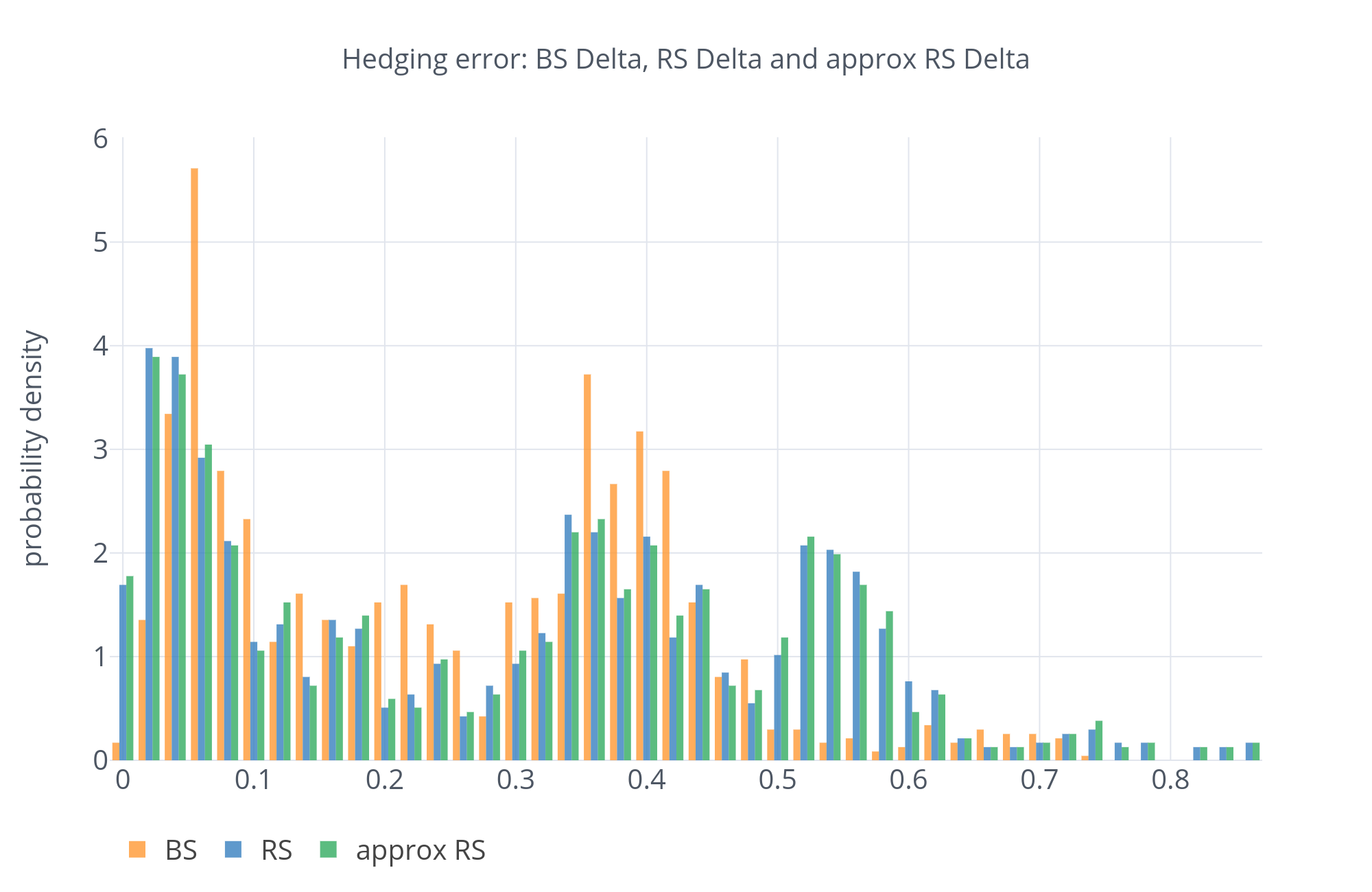}
    \caption{Hedging error: HKDUSD 6M ATM Call}
    \label{fig:heging_result}
\end{figure}

\begin{figure}[H]
    \centering
    \includegraphics[width=\textwidth]{./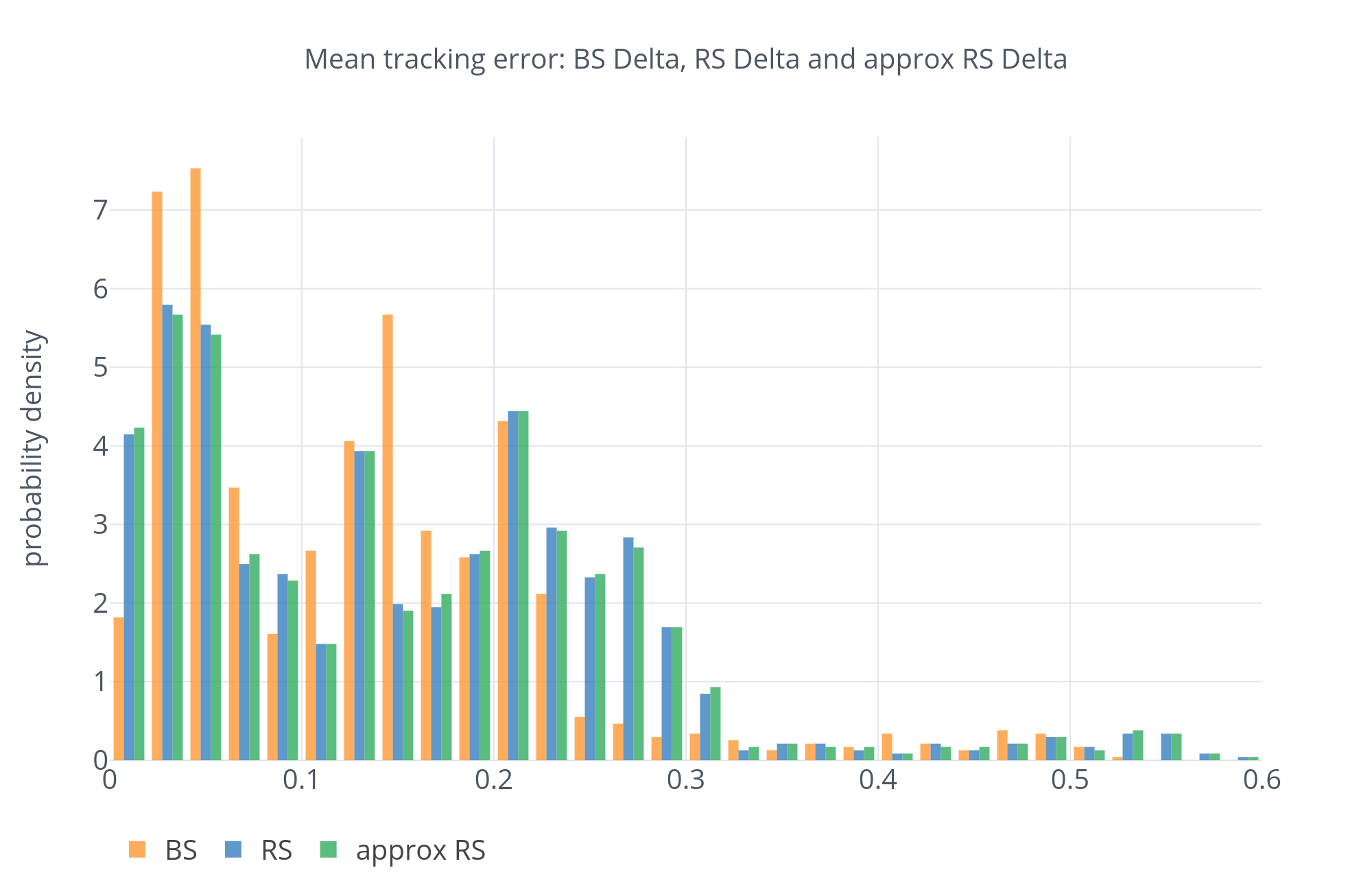}
    \caption{Mean tracking error: HKDUSD 6M ATM Call}
    \label{fig:heging_result_m}
\end{figure}
Some selected hedging performance of hedging paths for BS Delta, exact RS Delta and approximated RS Delta hedges respectively are illustrated in Figure \ref{fig:hedges_selected}.
Here we choose the HKDUSD 6M ATM call options starting on the dates 2014-01-02, 2014-10-08, 2015-12-02 and 2017-11-01.
\begin{figure}[H]
    \begin{center}
        \begin{subfigure}[b]{0.49\textwidth}
            \includegraphics[width=\textwidth]{./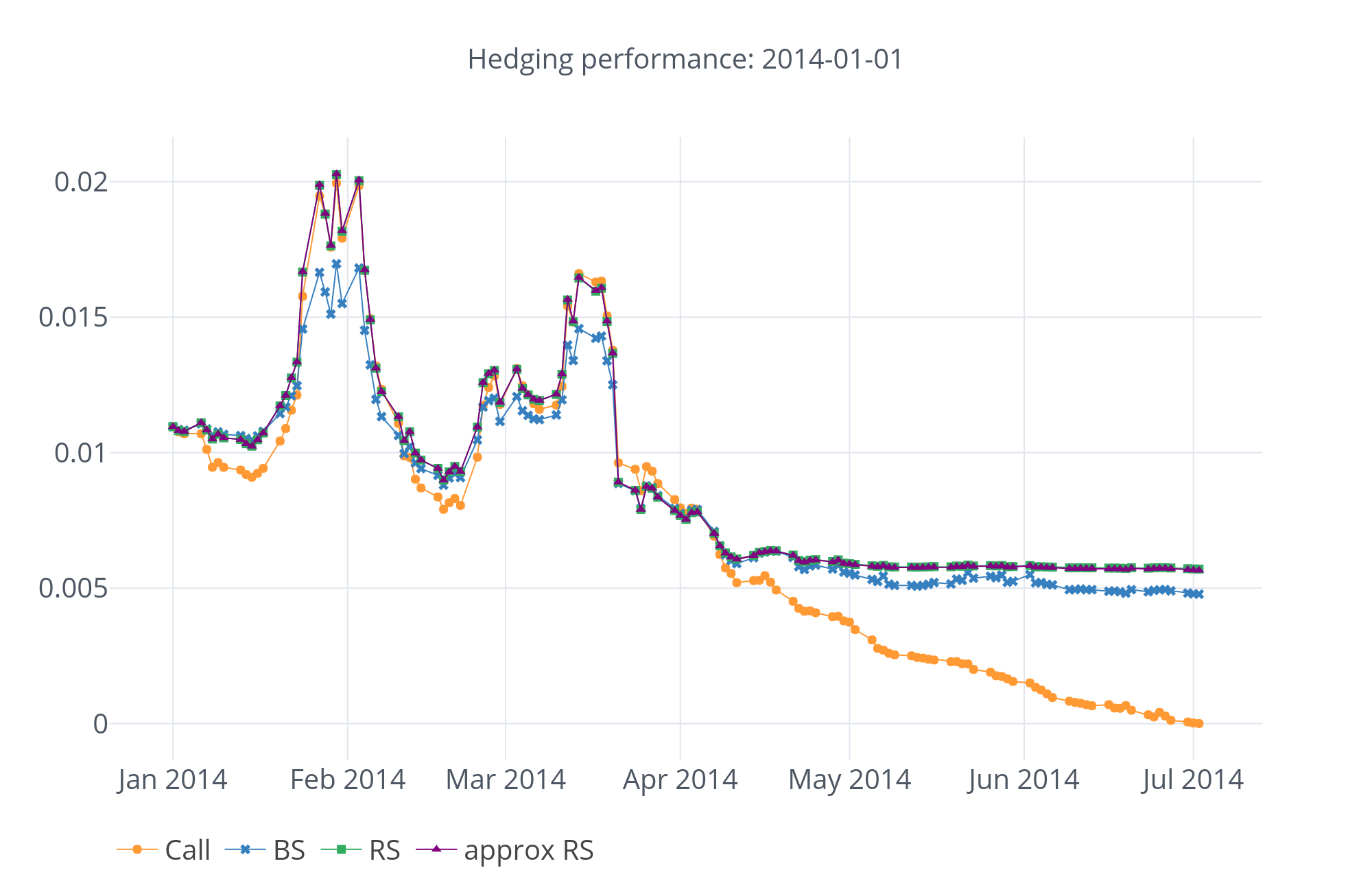}
        \end{subfigure}
        \hfill
        \begin{subfigure}[b]{0.49\textwidth}
            \includegraphics[width=\textwidth]{./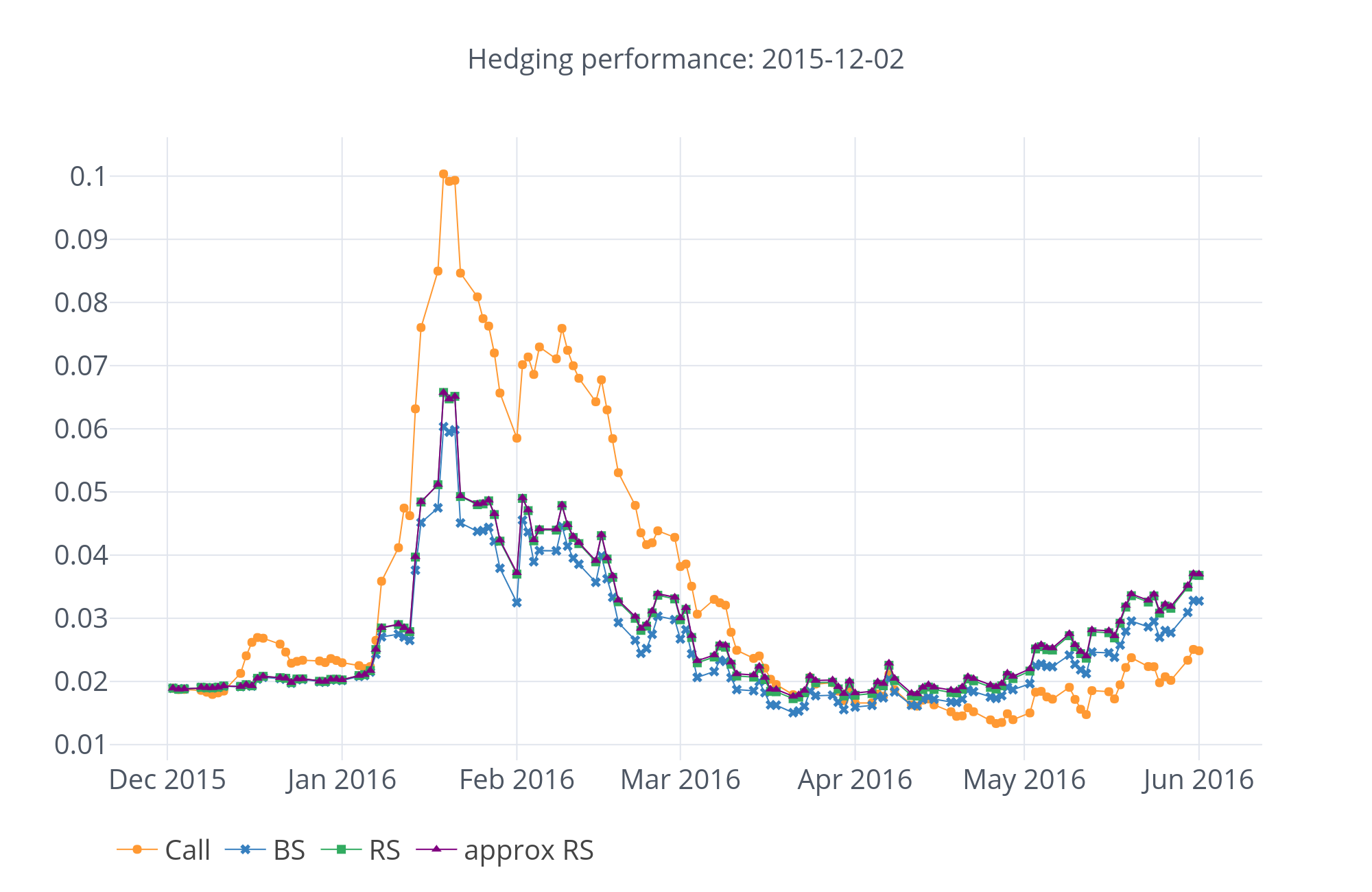}
        \end{subfigure}
    \end{center}
\end{figure}
\begin{figure}[H]\ContinuedFloat
    \begin{center}
        \begin{subfigure}[b]{0.49\textwidth}
            \includegraphics[width=\textwidth]{./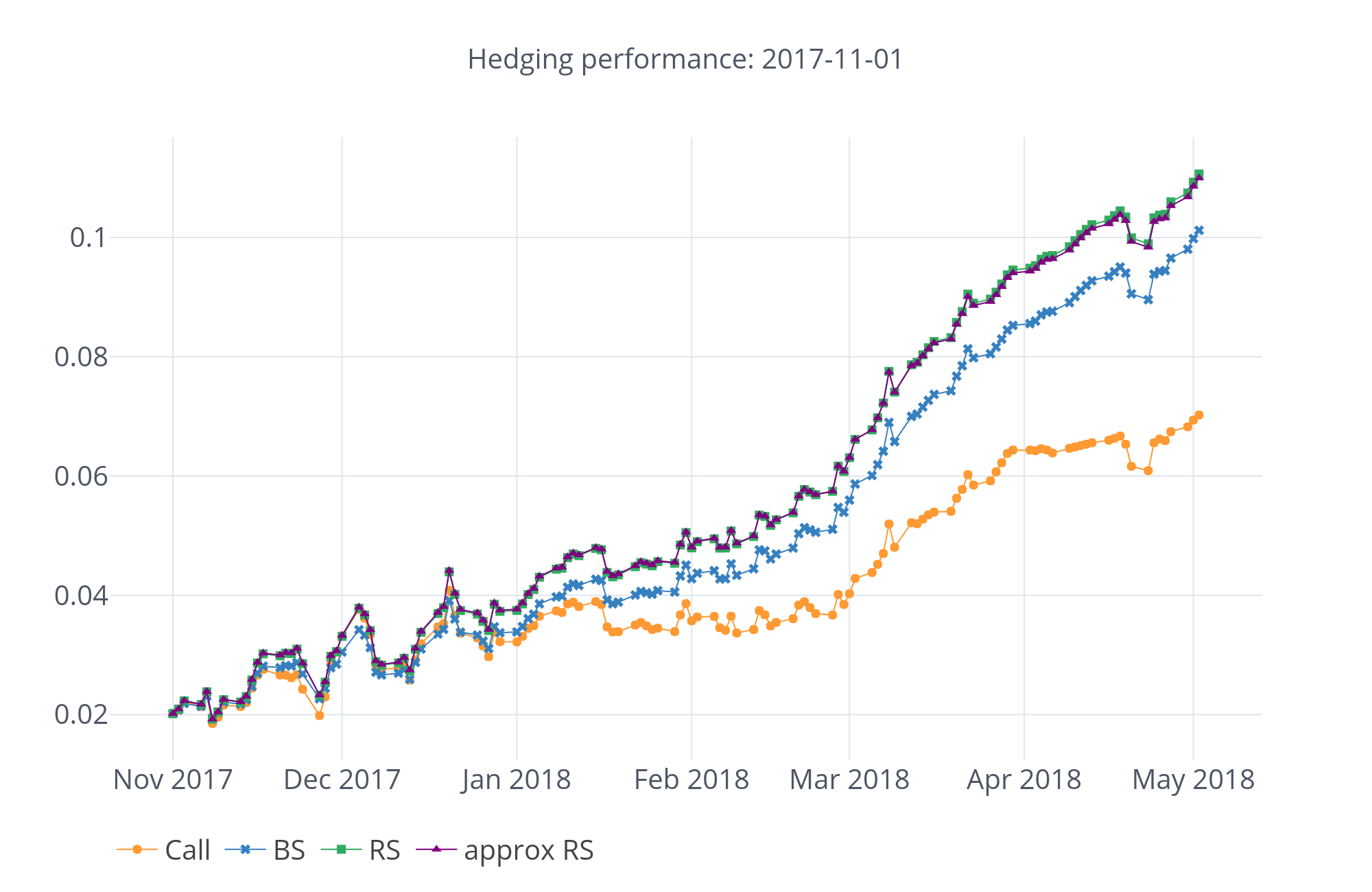}
        \end{subfigure}
        \hfill
        \begin{subfigure}[b]{0.49\textwidth}
            \includegraphics[width=\textwidth]{./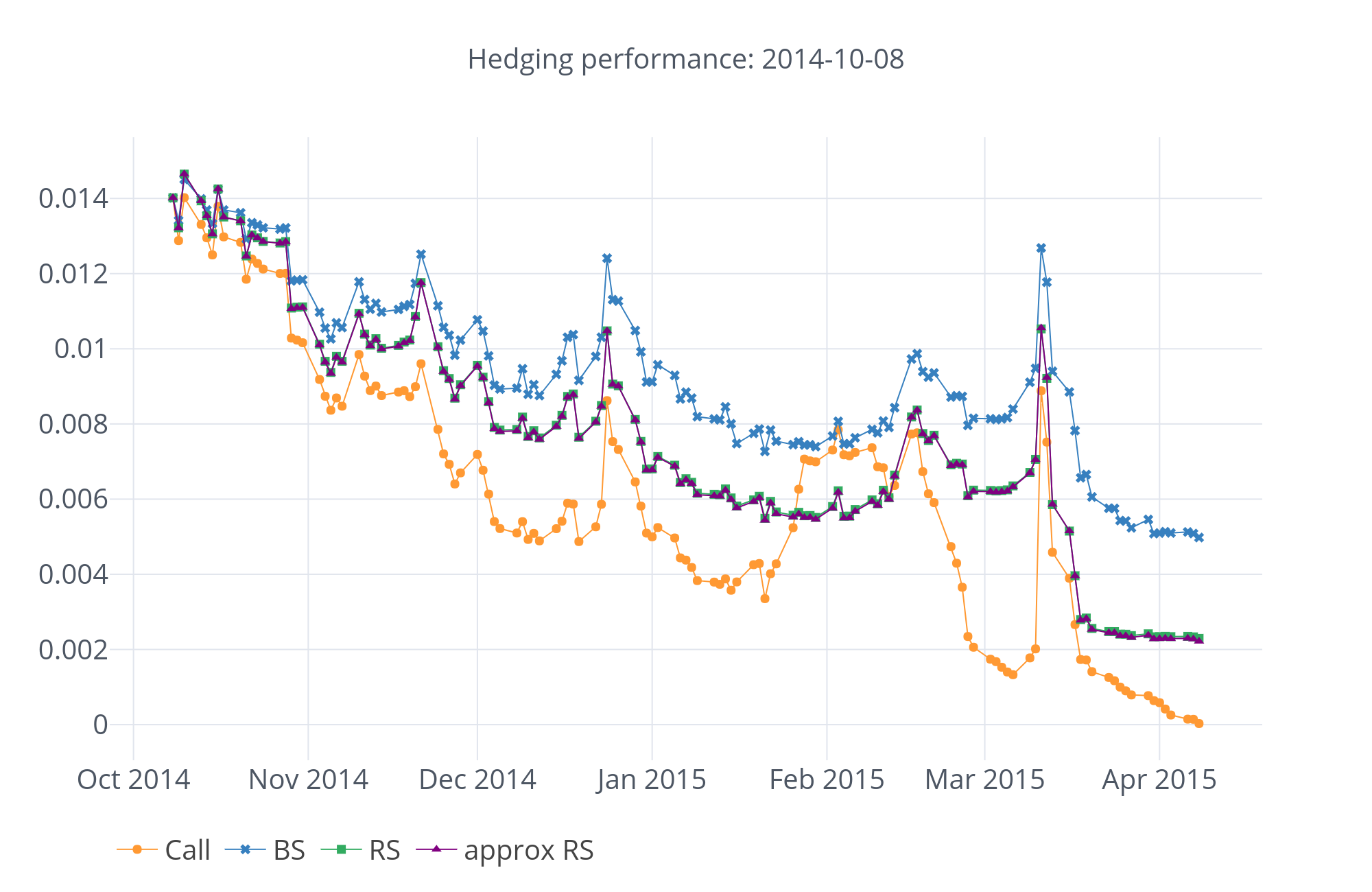}
        \end{subfigure}
    \end{center}
    \caption{Hedging performance for selected dates}
    \label{fig:hedges_selected}
\end{figure}
The Figure \ref{fig:hedges_selected} shows that the approximated RS hedge have roughly the same performance as the exact RS hedge.

\section{Conclusion}\label{sec:conclution}
In this paper, we investigate the pricing/hedging accuracy and performance based on a specific regime switching model for pegged markets.

As for the pricing, we compare two prices, an exact pricing formula and a first order approximation.
For the first order approximation, we provide the theoretical bonds for the errors.
Based on real data for HKDUSD, we compare the performance of each pricer for the calibration of the volatility surface using as a benchmark the classical SABR model.
It turns out that the approximated pricing formula is inaccurate.
As for the exact pricing formula, it is very accurate and outperforms SABR model almost all the time.

As for the hedging, we compare exact RS delta, approximated RS delta as well as mean-variance hedging strategies.
Based on simulated data set, the hedging strategies are tested by applying BS delta, RS delta, approximated RS delta, mean-variance and approximated mean-variance. 
Mean-variance strategies produce large errors at "no-jump" scenario, while the improvement is not significant at "jump" scenario.
Therefore mean-variance hedging is not appropriate in this context.
Based on real data set for HKDUSD, we apply BS delta, RS delta, approximated RS delta hedging strategies.
It turns out that the hedging error of the approximated RS delta hedge does not differ much the two others, while the computational time of the approximated RS delta is significantly faster than the exact one.

Concerning the performance, though the exact pricing formula calibrates almost perfectly, it is computationally intensive.
To overcome that problem, we provide an explicit formula for the moment generating function, and compare the performance with respect to the direct pricing method.
As expected the Fourier approach is faster by a factor of over 5.

\newpage
\begin{footnotesize}
\begin{appendix}
    \section{SABR Parametrization}\label{app:SABR}
    As a benchmark, we use the SABR model by \citet{hagan2002}, where the implied volatility is explicitly given by
    \begin{equation*}
        \sigma_{SABR}(K;\tilde\theta) = a\frac{z}{\chi(z)}\left(1 + \left(\frac{\rho b a}{4} + \frac{2-3\rho^2}{24} b^2\right)T\right),
    \end{equation*}
    where 
    \begin{equation*}
        \chi(z) = \ln\left(\frac{\sqrt{1-2\rho z+z^2} + z - \rho}{1-\rho}\right),\quad z=\frac{ b}{a}\ln(F(0,T)/K)\quad\text{and}\quad \tilde\theta = (a, b,\rho),
    \end{equation*}
    with initial volatility $a$,  volatility of the volatility $b$, correlation between the two different Brownian motions $\rho$ and forward price $F(0,T)$.

    \section{Fourier approach}\label{app:Fourier}
    \begin{proof}
        For $\xi(\alpha) := -\sigma^2(\alpha)/2 - \lambda\kappa(\alpha)$, it follows that
        \begin{equation*}
            X(T) = \begin{cases}
                \xi(0)\tau + \xi(1)(T-\tau) + \ubar{\sigma}W(\tau)+\bar{\sigma}W(T-\tau) + Y, &\text{ if }\tau\leq T,\\
                \xi(0)T +  \ubar{\sigma}W(T), &\text{ if }\tau> T.\\
            \end{cases}
        \end{equation*}
        where $Y \sim \mathcal{N}(u, \delta^2)$.
        Since $\tau=\inf\{t:M(t)>0\}$, the distribution of $\tau$ is given by $\lambda e^{-\lambda t}dt$.
        Defining $\mathcal{F}^\alpha = \sigma\{\alpha(s) \colon 0\leq s \leq T\}$, the characteristic function is given by
        \begin{align*}
            E\left[e^{izX(T)}\right] & = E\left[E\left[e^{izX(T)}\big|\mathcal{F}^\alpha\right]\right]\\
                                     & = E\left[E\left[e^{izX(T)}\big|\tau\leq T\right]1_{\{\tau\leq T\}}\right]+E\left[E\left[e^{izX(T)}\big|\tau>T\right]1_{\{\tau>T\}}\right].
        \end{align*}
        As for the second term on the right hand side, on the event $\{\tau>T\}$, from $X(T)\sim \mathcal{N}(\xi(0)T,\ubar{\sigma}^2T)$ follows that 
        \begin{equation*}
            E\left[E\left[e^{izX(T)}\big|\tau\right]1_{\{\tau>T\}}\right]=e^{iz \xi(0)T -z^2 \frac{\ubar{\sigma}^2T}{2}}e^{-\lambda T}=e^{-\lambda T}\phi_{0,T}(z)
        \end{equation*}
        As for the first term, on the event $\{\tau \leq T\}$, from $X(T) \sim \mathcal{N}(\xi(0)\tau+\xi(1)(T-\tau)+u, \ubar{\sigma}^2\tau+\bar{\sigma}^2(T-\tau)+\delta^2)$, we get
        \begin{multline*}
            E\left[E\left[e^{izX(T)}\big|\tau\right]1_{\{\tau\leq T\}}\right]
            = \int_0^Te^{iz\left(\xi(0)t+\xi(1)(T-t)+u\right)-\frac{1}{2}z^2\left(\ubar{\sigma}^2t+\bar{\sigma}^2(T-t)+\delta^2\right)}\lambda e^{-\lambda t}dt\\
            = \lambda e^{izu-z^2\frac{\delta^2}{2}} e^{-iz \frac{\bar{\sigma}^2T}{2}-z^2\frac{\bar{\sigma}^2 T}{2}}\int_0^Te^{\left(iz(\xi(0)-\xi(1))-\frac{1}{2}z^2(\ubar{\sigma}^2-\bar{\sigma}^2)-\lambda\right)t}dt\\
            = \frac{\lambda e^{izu-z^2\frac{\delta^2}{2}} e^{-iz \frac{\bar{\sigma}^2T}{2}-z^2\frac{\bar{\sigma}^2 T}{2}}\left(e^{\left(iz(\xi(0)-\xi(1))-\frac{1}{2}z^2(\ubar{\sigma}^2-\bar{\sigma}^2)-\lambda\right)T}-1\right)}{\left(iz+z^2\right)\frac{\bar{\sigma}^2 - \ubar{\sigma}^2}{2} - \lambda(izk+1)}\\
            = \lambda e^{izu-z^2\frac{\delta^2}{2}}\frac{e^{-\lambda T}\phi_{0,T}(z) - \phi_{1,T}(z)}{\left(iz+z^2\right)\frac{\bar{\sigma}^2 - \ubar{\sigma}^2}{2} - \lambda(izk+1)}
        \end{multline*}
        which ends the proof.
    \end{proof}

    \section{First Order Approximation}\label{app:first_order_approx}
    \begin{proof}[Proof of Proposition \ref{prop:approx}]
        Note that
        \begin{equation*}
            \left| V(\theta)-V_{approx}(\theta) \right| = \left| \int_{0}^{T}\left( f(0)-f(t) \right) \lambda e^{-\lambda t} dt  \right| = \left| \int_{0}^{T}\int_{0}^{t} f^{\prime}(s) \lambda e^{-\lambda t}ds dt   \right|
        \end{equation*}
        Where $f(t) = GK(S_0 e^{-\lambda \kappa t}(1+\kappa), \sqrt{\ubar{\sigma}^2 t+\bar{\sigma}^2(T-t)})$.
        However
        \begin{multline*}
        \left|f^\prime(s) \right|= \left|S_0(1+\kappa)\frac{de^{-\lambda \kappa s}}{ds}\Delta_{GK}\left(S_0 e^{-\lambda \kappa s}(1+\kappa), \sqrt{\ubar{\sigma}^2 s+\bar{\sigma}^2(T-s)}\right)\right .\\
        \left. +\frac{d\sqrt{\ubar{\sigma}^2 s+\bar{\sigma}^2(T-s)}}{ds}{\frac{1}{\sqrt{T}}}  \mathcal{V}_{GK} \left(S_0 e^{-\lambda \kappa s}(1+\kappa), \sqrt{\ubar{\sigma}^2 s+\bar{\sigma}^2(T-s)}\right)\right|\\
            \leq S_0(1+\kappa)|\kappa|\lambda e^{-\lambda \kappa s} \Delta_{GK}\left(S_0 e^{-\lambda \kappa s}(1+\kappa), \sqrt{\ubar{\sigma}^2 s+\bar{\sigma}^2(T-s)}\right)\\
            + \left| \frac{d\sqrt{\ubar{\sigma}^2 s+\bar{\sigma}^2(T-s)}}{ds}\right|\frac{1}{\sqrt{T}}  \mathcal{V}_{GK} \left(S_0 e^{-\lambda \kappa s}(1+\kappa), \sqrt{\ubar{\sigma}^2 s+\bar{\sigma}^2(T-s)}\right)
        \end{multline*}
        where
        \begin{multline}\label{equ:f_prime_kappa}
            S_0(1+\kappa)|\kappa|\lambda e^{-\lambda \kappa s} \Delta_{GK}\left(S_0 e^{-\lambda \kappa s}(1+\kappa), \sqrt{\ubar{\sigma}^2 s+\bar{\sigma}^2(T-s)}\right)\\
            \leq 
            \begin{cases}
                \lambda \kappa S_0(1+\kappa)e^{-\lambda \kappa s},\quad\text{if }\kappa \geq 0\\
                \\
                \lambda |\kappa| S_0(1+\kappa)e^{-\lambda \kappa s},\quad\text{if }-1<\kappa < 0
            \end{cases}
        \end{multline}
        and
        \begin{multline}\label{equ:comb_sigma}
            \left|\frac{d\sqrt{\ubar{\sigma}^2 s+\bar{\sigma}^2(T-s)}}{ds}\right|\frac{1}{\sqrt{T}} \mathcal{V}_{GK} \left(S_0 e^{-\lambda \kappa s}(1+\kappa), \sqrt{\ubar{\sigma}^2 s+\bar{\sigma}^2(T-s)}\right)\\
            = \left|\frac{d\sqrt{\ubar{\sigma}^2 s+\bar{\sigma}^2(T-s)}}{ds}\right| S_0e^{-r_fT}N'(d_+)
            \leq \left|\frac{d\sqrt{\ubar{\sigma}^2 s+\bar{\sigma}^2(T-s)}}{ds}\right|\frac{S_0}{\sqrt{2\pi}}.
        \end{multline}
        From \eqref{equ:f_prime_kappa} and \eqref{equ:comb_sigma} and using
        \begin{equation*}
            \int_0^t \left|\frac{d\sqrt{\ubar{\sigma}^2 s+\bar{\sigma}^2(T-s)}}{ds} \frac{1}{\sqrt{2\pi}}\right|ds
            = \frac{1}{\sqrt{2\pi}}(\bar{\sigma}\sqrt{T} - \sqrt{\ubar\sigma^2t+\bar\sigma^2(T-t)})
            \leq \sqrt{\frac{T}{2\pi}}(\bar{\sigma} - \ubar{\sigma}).
        \end{equation*}
        yields
        \begin{equation*}
            \frac{1}{S_0}\int_0^t|f'(s)|ds \leq 
            \begin{cases}
                (1+\kappa)(1-e^{-\lambda \kappa t}) + \sqrt{\frac{T}{2\pi}}(\bar{\sigma} - \ubar{\sigma})
        & \text{if }\kappa \geq 0\\
        \\
        (1+\kappa)(e^{-\lambda \kappa t}-1) + \sqrt{\frac{T}{2\pi}}(\bar{\sigma} - \ubar{\sigma})
        & \text{if }-1<\kappa < 0
            \end{cases}
        \end{equation*}
        We obtain for $\kappa\geq 0$:
        \begin{align*}
            \frac{\left| V(\theta)-V_{approx}(\theta) \right|}{S_0} & \leq \int_{0}^{T} \left[(1+\kappa)(1-e^{-\lambda \kappa t}) + \sqrt{\frac{T}{2\pi}}(\bar{\sigma} - \ubar{\sigma})\right]\lambda e^{-\lambda t} dt \\
            &\leq (1-p)\left(1+\kappa + \sqrt{\frac{T}{2\pi}}(\bar{\sigma} - \ubar{\sigma})\right) - \left(1-e^{-\lambda(1+\kappa)T}\right)
        \end{align*}
        and for $-1<\kappa<0$:
        \begin{align*}
            \frac{\left| V(\theta)-V_{approx}(\theta) \right|}{S_0} & \leq \int_{0}^{T} \left[(1+\kappa)(e^{-\lambda \kappa t}-1) + \sqrt{\frac{T}{2\pi}}(\bar{\sigma} - \ubar{\sigma})\right]\lambda e^{-\lambda t} dt \\
             & \leq \left(1-e^{-\lambda(1+\kappa)T}\right) + (1-p)\left( \sqrt{\frac{T}{2\pi}}\left(\bar{\sigma} - \ubar{\sigma}\right)- (1+\kappa)\right).
        \end{align*}
        Together, it holds
        \begin{equation*}
            \frac{\left| V(\theta)-V_{approx}(\theta) \right|}{S_0}  \leq (1-p)\sqrt{\frac{T}{2\pi}}(\bar{\sigma} - \ubar{\sigma}) + |\kappa|(1-p) - p\left|e^{-\lambda\kappa T} - 1\right|.
        \end{equation*}
    \end{proof}

    \section{Mean-variance hedging strategy}\label{app:MV}
    In the present model, the Markov chain process $\alpha$ takes values in the state space $\{0, 1\}$ with infinitesimal generator:
    \begin{equation*}
        Q = 
        \begin{bmatrix}
            -\lambda & \lambda  \\
            0        & 0	     
        \end{bmatrix}. 
    \end{equation*}
    In particular,
    \begin{equation*}
        \alpha(t) = \alpha(0) + \int_0^t\int_\mathbb{R}h(\alpha(s-),y)M(dy,ds)
    \end{equation*}
    where 
    \begin{equation*}
        h(i,y):=
        \begin{cases}
            j-i,\quad &\text{if } y\in\Lambda_{ij}\\
            0,\quad   &\text{otherwise}
        \end{cases}
    \end{equation*}
    where $i,j\in\{0, 1\}$ and $\Lambda_{ij}$ is a consecutive left closed right open intervals of the real line with length $q_{ij}$ ($q_{ij}$ is the element in $Q$), see \citep{basak2011} and \citep{shaoyong2017}.
	We further define $v(dy)$ as the L\'evy measure and $m(dy, dt)$ as the compensator measure of the Poisson random measure $M(dy,dt)$.
    For any Borel set $A\subseteq \mathbb{R}\setminus 0$,
    \begin{equation*}
        \tilde M(A,(0,t]) = M(A,(0,t]) - m(A,(0,t])
    \end{equation*}
    is a $P$-martingale.
    In this model, $v(dy)=\lambda N_{u, \delta^2}^{\prime}(y)dy$ and $m(dy,dt) = v(dy)dt = \lambda N^{\prime}_{u, \delta^2}(y)dydt$ where $N_{u, \delta^2}$ is the CDF of a normal distribution $\mathcal{N}(u, \delta^2)$.

    \subsection{mean-variance hedging}
    In this setting, $\tilde S(t) := S(t)/e^{(r_d - r_f)t}$ is a square integrable martingale under $P$, see \citet{anindya2019} for instance.
    Mean variance hedging aims at finding the initial capital $C_0$ and the predictable portfolio $\pi_t$ such that 
    \begin{equation*}
        \inf_{C_0,\pi}E\left[\left|\tilde C_T -\tilde H\right|^2\right], \text{ where } \tilde C_T = C_0 +  \int_0^T\pi_td\tilde S(t),
    \end{equation*}
    where $\tilde H:= H/e^{(r_d - r_f)T}$ is the discounted payoff\footnote{The payoff $H$ is a function of $S(T)$, i.e., $H=H(S(T))$.}.
    Assume that $H\in L^2(\Omega,\mathcal{F},P)$ is a square integrable random variable and the terminal values of the portfolios have finite variance:
    \begin{equation}\label{equ:pi_integrable}
        E\left[\left|\int_0^T \pi_td\tilde S(t)\right|^2\right] <\infty.
    \end{equation}
    According to \citep[chapter 10.4]{tankov2004}, \eqref{equ:pi_integrable} is equivalent to
    \begin{equation*}
        E\left[\int_0^T |\pi_t\tilde S_t|^2dt + \int_0^T\int_{\mathbb{R}}\left(e^{\gamma(y,\alpha(t-))}-1\right)^2|\pi_t\tilde S_t|^2v(dy)dt\right]<\infty.
    \end{equation*}
    Further, we assume
    \begin{equation}\label{equ:Sjump_integrable}
        E\left[\int_0^T\int_{\mathbb{R}}\left(S(t-)e^{\gamma(y,\alpha(t-))}\vee S(t-)\right)^2v(dy)dt\right]<\infty.
    \end{equation}
    The optimal value of the initial capital is given by $C_0 = E[\tilde H]$ which corresponds to the price in the RS model.

    \begin{proof}
        The proof follows \citet[Chapter 1.4]{tankov2004}.
        Consider a self-financing trading strategy $\pi$, the terminal value of which is given by 
        \begin{equation}\label{equ:terminal_value}
            \int_0^T \pi_t d\tilde S(t) =\int_0^T \pi_t\tilde S(t-)\sigma(\alpha(t-))dW_t + \int_0^T\int_\mathbb{R} \left(e^{\gamma(y,\alpha(t-))}-1\right)\pi_t\tilde S(t-) \tilde M(dt,dy).
        \end{equation}
        For European call option, the payoff is $H(S(T)) = (S(T)-K)^+$.
        Recall that the functions $\tilde {\mathcal{C}}(t,s,x) = e^{-(r_d-r_f)t}{\mathcal{C}}(t,s,x)$.
        From the construction, $\tilde {\mathcal{C}}(t,S(t),\alpha(t))$ is a martingale.
        Applying It\^o formula, it follows
        \begin{multline}\label{equ:C_tilde_ito}
            \tilde{\mathcal{C}}(t,S(t),\alpha(t)) -\tilde{\mathcal{C}}(0,S(0),\alpha(0))\\
            = \int_0^t\frac{\partial \tilde{\mathcal{C}}}{\partial u}(u,S(u-),\alpha(u-))du+\int_0^t\frac{\partial \tilde{\mathcal{C}}}{\partial s}(u,S(u-),\alpha(u-))S(u-)(r_d-r_f-\lambda\kappa(\alpha(u-))) du\\
            + \int_0^t\frac{\partial \tilde{\mathcal{C}}}{\partial s}(u,S(u-),\alpha(u-))S(u-)\sigma(\alpha(u-)) dW(u)\\ 
            +\int_0^t\frac{1}{2}\frac{\partial^2 \tilde{\mathcal{C}}}{\partial s^2}(u,S(u-),\alpha(u-))\sigma^2(\alpha(u-))S^2(u-)du\\
            + \int_0^t\int_\mathbb{R}\left[\tilde{\mathcal{C}}\left(u,S(u-)e^{\gamma(y,\alpha(t-))},\alpha(u-)+h(\alpha(t-))\right) - \tilde{\mathcal{C}}(u,S(u-),\alpha(u-))\right]M(dy,du)\\
            = \int_0^t\frac{\partial \tilde{\mathcal{C}}}{\partial t}(u,S(u-),\alpha(u-))du+\int_0^t\frac{\partial \tilde{\mathcal{C}}}{\partial s}(u,S(u-),\alpha(u-))S(u-)(r_d-r_f-\lambda\kappa(\alpha(u-))) du\\
            + \int_0^t\frac{\partial \tilde{\mathcal{C}}}{\partial s}(u,S(u-),\alpha(u-))S(u-)\sigma(\alpha(u-)) dW(u)\\ 
            +\int_0^t\frac{1}{2}\frac{\partial^2 \tilde{\mathcal{C}}}{\partial s^2}(u,S(u-),\alpha(u-))\sigma^2(\alpha(u-))S^2(u-)du\\
            + \int_0^t\int_\mathbb{R}\left[\tilde{\mathcal{C}}\left(u,S(u-)e^{\gamma(y,\alpha(t-))},\alpha(u-)+h(\alpha(t-),y)\right) - \tilde{\mathcal{C}}(u,S(u-),\alpha(u-))\right]\tilde M(dy,du)\\
            +\int_0^t\int_\mathbb{R}\left[\tilde{\mathcal{C}}\left(u,S(u-)e^{\gamma(y,\alpha(t-))},\alpha(u-)+h(\alpha(t-),y)\right) - \tilde{\mathcal{C}}(u,S(u-),\alpha(u-))\right]v(dy)du
        \end{multline}
        Since the fact that $\tilde{\mathcal{C}}(t,S(t),\alpha(t))$ is a martingale which of the form
        \begin{multline*}
            \tilde{\mathcal{C}}(t,S(t),\alpha(t)) -\tilde{\mathcal{C}}(0,S(0),\alpha(0))\\
            = \int_0^t\frac{\partial \tilde{\mathcal{C}}}{\partial S}(u,S(u-),\alpha(u-))S(u-)\sigma(\alpha(u-)) dW(u)\\
            + \int_0^t\int_\mathbb{R}[\tilde{\mathcal{C}}\left(u,S(u-)e^{\gamma(y,\alpha(t-))},\alpha(u-)+h(\alpha(u-),y)\right) - \tilde{\mathcal{C}}(u,S(u-),\alpha(u-))]\tilde M(dy,du)\\
            = \int_0^t\frac{\partial \mathcal{C}}{\partial S}(u,S(u-),\alpha(u-))\tilde S(u-)\sigma(\alpha(u-)) dW(u)\\
            + \int_0^t\int_\mathbb{R}[\tilde{\mathcal{C}}\left(u,S(u-)e^{\gamma(y,\alpha(t-))},\alpha(u-)+h(\alpha(u-),y)\right) - \tilde{\mathcal{C}}(u,S(u-),\alpha(u-))]\tilde M(dy,du).
        \end{multline*}
        Since the payoff function is Lipschitz, we have
        \begin{multline*}
            \tilde{\mathcal{C}}(t,x,1) - \tilde{\mathcal{C}}(t,y,0) = e^{-(r_d-r_f)t}\left(\mathcal{C}(t,x,1) - \mathcal{C}(t,y,0)\right)\\
            = e^{-(r_d-r_f)t}E\left[H\left(xe^{-\int_t^T\frac{\bar\sigma^2}{2}ds + \int_t^T\bar\sigma dW(s)}\right)\right.\\
            \left.- H\left(ye^{-\int_t^T\left(\frac{\bar\sigma^2}{2}+\lambda\kappa(\alpha(s-))\right)ds + \int_t^T\bar\sigma dW(s)+ \int_t^T\int_{\mathbb{R}}\gamma(y,\alpha(s-))M(dy,ds)}\right)\right]\\
            \leq KE\left[\left| xe^{-\int_t^T\frac{\bar\sigma^2}{2}ds + \int_t^T\bar\sigma dW(s)}
        \right. \right.
        \\
        \left. \left.
            - ye^{-\int_t^T\left(\frac{\bar\sigma^2}{2}+\lambda\kappa(\alpha(s-))\right)ds + \int_t^T\sigma(\alpha(s-)) dW(s)+ \int_t^T\int_{\mathbb{R}}\gamma(y,\alpha(s-))M(dy,ds)}\right |\right]\\
            \leq K\left(xE\left[e^{-\int_t^T\frac{\bar\sigma^2}{2}ds + \int_t^T\bar\sigma dW(s)}\right]\right.\\
            \left. + yE\left[e^{-\int_t^T\left(\frac{\bar\sigma^2}{2}+\lambda\kappa(\alpha(s-))\right)ds + \int_t^T\sigma(\alpha(s-)) dW(s)+ \int_t^T\int_{\mathbb{R}}\gamma(y,\alpha(s-))M(dy,ds)}\right]\right)\\
            =K(c_1x + c_2y)
        \end{multline*}
        where $K$, $c_1$ and $c_2$ are positive constants and the last equality holds from the fact that
        \begin{equation*}
            \exp\left(-\int_t^T\frac{\bar\sigma^2}{2}ds + \int_t^T\bar\sigma dW(s)\right)
        \end{equation*}
        and 
        \begin{equation*}
            \exp\left(-\int_t^T\left(\frac{\bar\sigma^2}{2}+\lambda\kappa(\alpha(s-))\right)ds + \int_t^T\sigma(\alpha(s-)) dW(s)+ \int_t^T\int_{\mathbb{R}}\gamma(y,\alpha(s-))M(dy,ds)\right)
        \end{equation*}
        are martingales.
        Therefore the predictable random function
        \begin{equation*}
            g(t,y) = \tilde{\mathcal{C}}\left(t,S(t-)e^{\gamma(y,\alpha(t-))},\alpha(t-)+h(\alpha(t-),y)\right) - \tilde{\mathcal{C}}\left(t,S(t-),\alpha(t-)\right)
        \end{equation*}
        verifies
        \begin{multline*}
            E\left[\int_0^T\int_{\mathbb{R}}|g(t,y)|^2v(dy)dt\right]\\
            =E\left[\int_0^T\int_{\mathbb{R}}\left|\tilde{\mathcal{C}}\left(t,S(t-)e^{\gamma(y,\alpha(t-))},\alpha(t-)+h(\alpha(t-),y)\right) - \tilde{\mathcal{C}}\left(t,S(t-),\alpha(t-)\right)\right|^2v(dy)dt\right]\\
            \leq E\left[\int_0^T\int_{\mathbb{R}}\left|k_1S(t-)e^{\gamma(y,\alpha(t-))}+k_2S(t-)\right|^2v(dy)dt\right]<\infty.
        \end{multline*}
        Note that the last inequality holds from assumption \eqref{equ:Sjump_integrable}.
        Hence $\tilde{\mathcal{C}}(t,S(t),\alpha(t))$ is a square integrable martingale.
        Subtracting \eqref{equ:terminal_value} from \eqref{equ:C_tilde_ito}, the hedging error is given by
        \begin{multline*}
            \epsilon(\pi) 
            = \int_0^T\left( \left(\tilde S(t-)\frac{\partial \tilde{\mathcal{C}}}{\partial S}(t,S(t-),\alpha(t-)) - \pi_t\tilde S(t-)\right)\sigma(\alpha(t-))\right)dW(t)\\
            +\int_0^T\int_\mathbb{R}\left[\tilde{\mathcal{C}}\left(t,S(t-)e^{\gamma(y,\alpha(t-))},\alpha(t-)+h(\alpha(t-),y)\right) - \tilde{\mathcal{C}}(t,S(t-),\alpha(t-)) \right.\\
            \left.- \pi_t\left(e^{\gamma(y,\alpha(t-))}-1\right)\tilde S(t-)\right]\tilde{M}(dt,dy).
        \end{multline*}

        where each stochastic integral has zero mean and finite variance. 
        By Ito isometry, the variance of the hedging error is given by:
        \begin{multline*}
            E\left[|\epsilon(\pi)|^2\right]
            =E\left[\int_0^T\tilde S^2(t-)\left(\pi_t - \frac{\partial \tilde{\mathcal{C}}}{\partial S}(t,S(t-),\alpha(t-))\right)^2\sigma^2(\alpha(t-))dt\right]\\
            +E\left[\int_0^T\int_\mathbb{R}\bigg|\tilde{\mathcal{C}}\left(t,S(t-)e^{\gamma(y,\alpha(t-))},\alpha(t-)+h(\alpha(t-),y)\right) - \tilde{\mathcal{C}}(t,S(t-),\alpha(t-))\right.\\
            \left.- \pi_t\left(e^{\gamma(y,\alpha(t-))}-1\right)\tilde S(t-)\bigg|^2v(dy)dt\right]
        \end{multline*}
        Note that the equation above is a positive process which is a quadratic function of $\pi_t$. 
        The optimal hedge is obtained by minimizing this expression with respect to $\pi_t$.
        We can obtain the first order condition by differentiating the quadratic function.
        \begin{multline*}
            \tilde S^2(t-)\sigma^2(\alpha(t-))\left(\pi_t - \frac{\partial \tilde{\mathcal{C}}}{\partial S}(t,S(t-),\alpha(t-))\right)\\
            +\int_\mathbb{R}(e^{\gamma(y,\alpha(t-))}-1)\tilde S(t-)\left[\left(e^{\gamma(y,\alpha(t-))}-1\right)\tilde S(t-)\pi_t\right.\\
            \left. - \left(\tilde{\mathcal{C}}\left(t,S(t-)e^{\gamma(y,\alpha(t-))},\alpha(t-)+h(\alpha(t-),y)\right) - \tilde{\mathcal{C}}(t,S(t-),\alpha(t-))\right)\right]v(dy)=0.
        \end{multline*}
        Solving yields:
        \begin{equation*}
            \pi(t)= \frac{\sigma^2(\alpha(t-))\frac{\partial \tilde{\mathcal{C}}}{\partial S}(t,S(t-),\alpha(t-)) + \frac{1}{\tilde S(t-)}\int_\mathbb{R} (e^{\gamma(y,\alpha(t-))}-1)(\tilde{\mathcal{C}}_t-\tilde{\mathcal{C}}_{t-})v(dy)}{\sigma^2(\alpha(t))+\int_\mathbb{R}  (e^{\gamma(y,\alpha(t-))}-1)^2v(dy)}
        \end{equation*}
        where $\tilde{\mathcal{C}}_t:=\tilde{\mathcal{C}}(t,S(t-)e^{\gamma(y,\alpha(t-))},\alpha(t-)+h(\alpha(t-),y))$ and $\tilde{\mathcal{C}}_{t-}:=\tilde{\mathcal{C}}(t,S(t-),\alpha(t-)))$.
    \end{proof}

    \section{Fast Fourier Transform Methods}\label{app:FFT}
    We follow the implement representing in \citet{hilpisch2015}, using the trapezoid rule for the integral on formula (\ref{equ:Fourier_call}) , and setting $u_j = \eta(j-1)$, an approximation for $C_0$ is 
    \begin{align*}
        C_0 &\approx Se^{-r_fT} - \frac{\sqrt{SK}e^{-(r_d+r_f)T/2}}{\pi}\sum_{j=1}^N\text{Re}\left[e^{- iu_jk}\varphi_T\left(- u_j-\frac i2\right)\frac{\eta}{u_j^2+\frac14}\right]\\
            &:= Se^{-r_fT} - \frac{\sqrt{SK}e^{-(r_d+r_f)T/2}}{\pi}\text{Re}\left[\sum_{j=1}^Ne^{- iu_jk}\Psi(u_j)\eta\right],
    \end{align*}
    where $k =\ln(\frac{S}{K})$. The effective the integration is now on the interval $[0,N\eta]$. The FFT returns $N$ values of $k$ and we employ a regular spacing of size $\epsilon$, consider the sequence of log-strike $k_m = -b + \epsilon(m-1)$ with $m=1,2,\dots,N$. With this spacing, the FFT algorithm returns $N$ values for log-strike ranging from $-b$ to $b$ with $b = 0.5N\epsilon$. Then
    \begin{equation*}
        C_0 \approx Se^{-r_fT} - \frac{\sqrt{SK}e^{-(r_d+r_f)T/2}}{\pi}\sum_{j=1}^{N}e^{-i\epsilon\eta(j-1)(m-1)}e^{ibu_j}\Psi(u_j)\eta, \quad m = 1,\dots,N.
    \end{equation*}
    Realizing that $\epsilon\eta = \frac{2\pi}{N}$ which implies that a small $\eta$ increase $\epsilon$, introduce weightings according to Simpson's rule such that the call approximation finally takes on the form
    \begin{equation*}
        C_0 \approx Se^{-r_fT} - \frac{\sqrt{SK}e^{-(r_d+r_f)T/2}}{\pi}\sum_{j=1}^{N}e^{-i\epsilon\eta(j-1)(m-1)}e^{ibu_j}\Psi(u_j)\frac{\eta}{3}(3+(-1)^j-\diamond_{j-1})
    \end{equation*}
    for $m = 1,\dots,N$ and where $\diamond_n$ is the Kronecker delta function which takes value one for $n=0$ and zero otherwise.

\end{appendix}
\end{footnotesize}
\newpage
\bibliographystyle{abbrvnat}
\bibliography{biblio}
\end{document}